\documentclass[11pt, letterpaper, logo, onecolumn, copyright]{template_from_google}
% We recommend centralizing all package loading here for clarity and to avoid conflicts.
% All \usepackage commands have been moved from 0_math.tex to this main preamble.

% --- FONT AND ENCODING ---
\usepackage[T1]{fontenc}

\usepackage{bm} % For bold math symbols like \bm
\usepackage{nicefrac} % For typesetting compact fractions

% --- LAYOUT, TABLES, AND LISTS ---
\usepackage{booktabs} % For professional quality tables
\usepackage{enumitem} % For list customization
\usepackage{fancyhdr}
\usepackage{multirow} % For multi-row cells in tables
\usepackage{tabularx}
\usepackage{multicol}
\usepackage{array}
\usepackage{longtable}

\usepackage{wrapfig} % To wrap text around figures
\usepackage{tocloft} % To control the Table of Contents
\usepackage{fancybox}
\usepackage{authblk} % For author affiliations

% --- ALGORITHMS ---
\usepackage{algorithm}
\usepackage[noend]{algpseudocode}
% \usepackage{algorithmic}    % or algpseudocode, but not both

% --- FIGURES, CAPTIONS, AND GRAPHICS ---
\usepackage{graphicx}
\usepackage{caption}
\captionsetup{compatibility=false}
\usepackage{subcaption}
% \usepackage{capt-of} 

% --- COLORS AND BOXES ---
% Merged options from both files for xcolor and tcolorbox
\usepackage[dvipsnames]{xcolor}

% --- MISCELLANEOUS ---
% \usepackage{fontawesome5}
\usepackage{xspace} % For intelligent spacing after macros

% --- BIBLIOGRAPHY ---
\usepackage[authoryear, sort&compress, round]{natbib}

% --- COLORS AND BOXES ---
\usepackage[dvipsnames]{xcolor} % Make sure xcolor is loaded

% --- HYPERLINKS AND REFERENCES ---
% The hyperref package should almost always be loaded last.
\usepackage{hyperref}

% --- Your hyperref setup code goes RIGHT HERE ---
\definecolor{darkblue}{rgb}{0.0, 0.0, 0.6}
\definecolor{darkred}{rgb}{0.7, 0.0, 0.0}
\hypersetup{
  pdffitwindow=true,
  pdfstartview={FitH},
  pdfnewwindow=true,
  colorlinks,
  linktocpage=true,
  linkcolor=darkred,
  urlcolor=darkblue,
  citecolor=darkblue
}

% --- USER-DEFINED COMMANDS AND SETTINGS ---
% These files now should only contain definitions, settings, and custom commands, not \usepackage.
% \input{math_commands}
% \input{math_commands}

\usepackage{mathtools}
\usepackage{enumitem}
\usepackage{amsmath,amssymb,amsfonts,amsthm}

\usepackage{graphicx}

\usepackage{multicol,multirow}
\usepackage{url}
\usepackage{xcolor}
\usepackage{xfrac}
\usepackage{xspace}

% algorithm
% \usepackage{algorithm}
% \usepackage[noend]{algpseudocode}
\usepackage{wrapfig}

\usepackage{subcaption}
\usepackage{booktabs}
\usepackage{soul}
\usepackage{mdframed}

\mdfdefinestyle{pretty}{
  backgroundcolor=blue!5,
  linecolor=blue!50!black,
  linewidth=1pt,
  roundcorner=6pt,
  innertopmargin=6pt,
  innerbottommargin=6pt,
  innerleftmargin=8pt,
  innerrightmargin=8pt,
  skipabove=10pt,
  skipbelow=10pt
}

\usepackage{fvextra}   % improves verbatim
\usepackage{fancyvrb}  % used by Pygments' LaTeX output
\usepackage{xcolor}    % colors for syntax tokens
\input{pygments_style.tex}   % <-- defines \PY and token colors

\theoremstyle{definition}
\newtheorem{theorem}{Theorem}[section]
\newtheorem{lemma}{Lemma}

\newtheorem{remark}{Remark}
\newtheorem{example}{Example}[section]

\newtheorem{proposition}{Proposition}
\newtheorem{definition}{Definition}

\newcommand{\method}{\textsc{Egg}\xspace}

\newcommand{\E}{\mathbb{E}}
\newcommand{\Var}{\mathbb{V}\mathrm{ar}}

\newcommand{\nj}[1]{{#1}}
\makeatletter

\renewcommand\paragraph{\@startsection{paragraph}{4}{\z@}%
            {-2.5ex\@plus -1ex \@minus -.25ex}%
            {1.25ex \@plus .25ex}%
            {\itshape\normalsize\bfseries}}
\makeatother

\bibliographystyle{plainnat}
\let\cite\citep

\title{EGG-SR: Embedding Symbolic Equivalence into Symbolic Regression via Equality Graph}
\pagestyle{fancy}
\fancyhf{}
\newcommand{\shorttitle}{EGG-SR}
\fancyhead[L]{\shorttitle}
\fancyhead[R]{\thepage}

\reportnumber{} % Leave blank if n/a

\author[1]{Nan Jiang}
\author[2]{Ziyi Wang}
\author[2]{Yexiang Xue}

\affil[1]{University of Texas at El Paso, TX, USA}
\affil[2]{Purdue University, IN, USA}

\begin{abstract}
Symbolic regression seeks to uncover physical laws from experimental data by searching for closed-form expressions, which is an important task in AI-driven scientific discovery. Yet the exponential growth of the search space of expression renders the task computationally challenging.
A promising yet underexplored direction for reducing the search space and accelerating training lies in \emph{symbolic equivalence}: many expressions, although syntactically different, define the same function -- for example, $\log(x_1^2x_2^3)$, $\log(x_1^2)+\log(x_2^3)$, and $2\log(x_1)+3\log(x_2)$.
Existing algorithms treat such variants as distinct outputs, leading to redundant exploration and slow learning.
We introduce \method-SR, a unified framework that integrates symbolic equivalence into a class of modern symbolic regression methods, including Monte Carlo Tree Search (MCTS), Deep Reinforcement Learning (DRL), and Large Language Models (LLMs).
\method-SR compactly represents equivalent expressions through the proposed \method module (via equality graphs), accelerating learning by:
(1) pruning redundant subtree exploration in \method-MCTS,
(2) aggregating rewards across equivalent generated sequences in \method-DRL, and
(3) enriching feedback prompts in \method-LLM.
Theoretically, we show the benefit of embedding \method into learning: it tightens the regret bound of MCTS and reduces the variance of the DRL gradient estimator.
Empirically, \method-SR consistently enhances a class of symbolic regression models across several benchmarks, discovering more accurate expressions within the same time limit.

Project page is at: \url{https://nan-jiang-group.github.io/egg-sr}.

\end{abstract}

\begin{document}

\maketitle

\section{Introduction}
Symbolic regression aims to automatically discover physical knowledge from experimental data and has been widely used in scientific domains~\citep{doi:10.1126/science.1165893,journal/2020/aifrynman,cory2024evolving,doi:10.1073/pnas.2311808121,doi:10.1073/pnas.2413441122}. 
 Many contemporary methods for symbolic regression formulate the search for optimal expressions as a sequential decision-making process. 
 In literature, existing methods learn to predict the sequence of grammar rules ~\citep{DBLP:conf/iclr/Sun0W023}, the traversal sequence of expression trees~\citep{DBLP:conf/iclr/PetersenLMSKK21,DBLP:conf/nips/KamiennydLC22}, or executable strings that follow Python syntax~\citep{DBLP:journals/iclr/llmsr25,DBLP:conf/iclr/ZhangCXB025}.  This task remains computationally challenging due to its NP-hard nature~\citep{journal/tmlr/virgolin2022}, that is, the search space of candidate expressions grows exponentially with the data dimension.

A promising yet underexplored direction for reducing the search space and accelerating discovery is the integration of \textit{symbolic equivalence} into learning algorithms. For example, these expressions $\log(x_1^2 x_2^3)$, $\log(x_1^2) + \log(x_2^3)$, and $2\log(x_1) + 3\log(x_2)$ all represent the same math function and are therefore \textit{symbolically equivalent}.
Ideally, a well-trained model would recognize such equivalence and assign identical goodness-of-fit, rewards, or losses to the corresponding predicted expressions~\citep{DBLP:conf/icml/AllamanisCKS17}, since these expressions produce identical functional outputs and attain the same prediction error on the dataset. 
In the literature, existing SR algorithms treat these expressions as distinct outputs, leading to redundant exploration of the search space and slow training.  
The main challenge of this direction is: how to represent symbolically-equivalent expressions and embed them into modern learning frameworks in a \textit{unified and scalable} manner?

Since the number of equivalent variants grows exponentially with expression length, explicitly maintaining the set of equivalent expressions is not time and memory scalable.
To mitigate this scalability challenge, a line of recent works introduced the Equality graph (e-graph), a data structure that compactly encodes the set of equivalent variants by storing shared sub-expressions only once~\citep{DBLP:journals/pacmpl/NandiWZWSASGT21,DBLP:journals/pacmpl/WillseyNWFTP21,DBLP:journals/pacmpl/KurashigeJGBNIJ24}. 
e-graphs have been applied to tasks, including program optimization~\citep{DBLP:journals/corr/abs-2407-12794}, dataset generation of equivalent expressions~\citep{DBLP:conf/naacl/ZhengWGK25}. In genetic programming-based symbolic regression, e-graphs have been used for duplicate detection~\citep{DBLP:conf/gecco/olivetti25},  expression simplification~\citep{DBLP:conf/gecco/FrancaK23}, and expression template pattern matching~\citep{rEGGression}.
Despite the empirical successes, we find that a unified framework that enables diverse symbolic regression algorithms to interact with e-graphs to accelerate learning has room for improvement.

We present a unified framework, \method-SR, that integrates symbolic equivalence into a class of learning algorithms via e-graphs. Our framework encompasses \method-MCTS, \method-DRL, and \method-LLM. The core idea is to leverage \method to efficiently sample equivalent variants of expressions predicted by SR algorithms and compute a new equivalence-aware learning objective. Specifically, 
(1)  \method-MCTS prunes redundant exploration over equivalent subtrees;  
(2) \method-DRL aggregates rewards over equivalent expressions, stabilizing training;   
(3)  \method-LLM enriches the feedback prompt with multiple equivalent expressions to better guide next round predictions. Under mild theoretical assumptions, we show the benefit of embedding symbolic equivalence into learning:
(1) \method-MCTS offers a tighter regret bound than standard MCTS~\citep{DBLP:conf/iclr/Sun0W023}, and  
(2) the gradient estimator of \method-DRL exhibits a lower variance than that of standard DRL~\citep{DBLP:conf/iclr/PetersenLMSKK21}.

In experiments,  we evaluate \method-SR with several representative symbolic regression baselines across several challenging benchmarks. We demonstrate its advantages over existing approaches using \method than without.    \method consistently improves performance across diverse frameworks,  discovering more accurate expressions than baseline within a fixed time budget.

\section{Preliminaries} \label{sec:prelim}

\noindent\textbf{Symbolic Expression.}
Let $\mathbf{x}=(x_1,\ldots, x_n)$ denote input variables and $\mathbf{c}=(c_1,\ldots, c_m)$ be coefficients. 
A symbolic expression $\phi$ connects these variables and coefficients using mathematical operators such as addition, multiplication, and logarithm.
For example, $\phi = 3\log x_1 + 2\log x_2$ is a symbolic expression composed of variables $\{x_1, x_2\}$, operators $\{+, \log\}$, and coefficients $\{c_1 = 3, c_2 = 2\}$. In literature, symbolic expressions have been represented as binary trees~\citep{DBLP:conf/gecco/olivetti25}, pre-order traversal sequences of the binary tree~\citep{DBLP:conf/iclr/PetersenLMSKK21}, topological traversal sequences of the expression graph~\citep{ijcai2024p471,xiang2025graphbased}, or sequences of production rules defined by a context-free grammar~\citep{DBLP:conf/iclr/Sun0W023}. 

To handle all symbolic objects in this work, a context-free grammar is adopted to represent symbolic expressions~\citep{DBLP:journals/kbs/BrenceTD21}.
The grammar is defined by a tuple $\langle V, \Sigma, R, S\rangle$ where
(1) $V$ is a set of \textit{non-terminal} symbols representing arbitrary sub-expressions, i.e.,  $V = \{A\}$;
(2) $\Sigma$ is a set of \textit{terminal} symbols, including input variables and coefficients, i.e., $\{x_1, \dots, x_n\}\cup \{\mathtt{c}\}$;
(3) $R$ is a set of production rules representing mathematical operations. For example, $A \to A \times A$ denotes multiplication, and the semantics is to replace the left-hand side with the right-hand side;
(4) $S$ is the start symbol, typically set to $S = A$. 
Given a sequence of production rules, an expression is constructed by sequentially applying each rule to the \emph{leftmost} nonterminal, starting from $S$.
If the resulting string contains no nonterminals, it corresponds to a valid expression.

 Figure~\ref{fig:expr-as-rules} (in appendix) shows the expression construction with sequence $(A\to A\times A, A\to \mathtt{c}, A\to \log(x_1))$.  The first rule $A \to A \times A$ expands the start symbol $\phi = A$ to $\phi = A \times A$. Applying the next rule $A \to \mathtt{c}$ yields $\phi = c_1 \times A$. 
 An index is assigned to the coefficient symbol \texttt{c}, to differentiate multiple coefficients. Finally, applying $A\to \log(x_1)$ to $\phi = c_1 \times A$ yields $\phi = c_1 \log(x_1)$.

\noindent\textbf{Symbolic Equivalence under a Rewrite System.}
Rewrite rules are widely used to simplify, rearrange, and reformulate expressions in tasks such as code optimization and automated theorem proving~\citep{huet1980equations,DBLP:journals/pacmpl/NandiWZWSASGT21}. A rewrite system provides a principled procedure for transforming expressions by replacing sub-expressions according to predefined patterns.

Formally, a rewrite rule $r_i$ is written as ``$\mathtt{LHS} \leadsto \mathtt{RHS}$'', where the left-hand side ($\mathtt{LHS}$) specifies a pattern to be \emph{matched}, and the right-hand side ($\mathtt{RHS}$) specifies the \emph{substitution} applied upon a match.
Given a set of rewrite rules $\mathcal{R}=\{r_1,r_2,r_3,\ldots\}$, the \emph{symbolic equivalence} relation $\equiv_{\mathcal{R}}$ induced by $\mathcal{R}$ is defined as follows: for two symbolic expressions $\phi_1$ and $\phi_2$,
\begin{equation} \label{eq:equiv-def}
\phi_1 \equiv_\mathcal{R} \phi_2
\qquad \text{if and only if} \qquad
\phi_1 \Rightarrow^* \phi_2 \; \text{ or }\; \phi_2 \Rightarrow^* \phi_1 ,
\end{equation}
where $\phi_1 \Rightarrow^* \phi_2$ means that $\phi_1$ can be transformed into $\phi_2$ by applying a finite sequence of rewriting using $\mathcal{R}$.
In other words, two expressions are equivalent under $\mathcal{R}$ if one can be transformed into the other via repeated rewriting.

For example, consider the rewrite rule $\log(a\times b) \leadsto \log(a) + \log(b)$ (denoted as $r_1$),  where $a$ and $b$ are placeholders for arbitrary sub-expressions. Since $\phi_1=\log(x_1^3 x_2^2)$ can be transformed into $\phi_2=\log(x_1^3) + \log(x_2^2)$ by applying $r_1$ once, $\phi_1$ and $\phi_2$ are \emph{symbolically equivalent} under $r_1$.

In this work, the known mathematical identities listed in Table~\ref{tab:trig_identities} (in the appendix) are encoded as rewrite rules in $\mathcal{R}$.  Section~\ref{sec:egg-define} applies these rewrite rules over the e-graph to generate a batch of symbolically equivalent expressions via matching (Figure~\ref{fig:rewrite-log}b) and substitution (Figure~\ref{fig:rewrite-log}c) operations. Implementation details of rewrite rules are provided in Appendix~\ref{apx:implement-rewrite}.

 %Under our grammar-based representation, the $\mathtt{LHS}$ corresponds to the rule sequence $(A\to \log(A),\ A\to A\times A,\ A\to a,\ A\to b)$, and the $\mathtt{RHS}$ corresponds to $(A\to A+A,\ A\to \log(A),\ A\to a,\ A\to \log(A),\ A\to b)$.  Figure~\ref{fig:rewrite-log} gives a visualization.

\noindent\textbf{Symbolic Regression (SR)} posits that experimental data are generated by an underlying closed-form expression, an assumption that is widely adopted across the sciences~\citep{doi:10.1126/sciadv.abq0279}.
Given a dataset ${D}=\{(\mathbf{x}_i, y_i)\}_{i=1}^N$, the goal is to find an  optimal expression $\phi^*$ that minimizes the loss:
\begin{equation*}
\phi^*\leftarrow\arg\min_{\phi}\;\frac{1}{N} \sum_{i=1}^{N} \ell(\phi(\mathbf{x}_i,\mathbf{c}), y_i),
\end{equation*}
where function $\ell$ measures the discrepancy between the prediction $\phi(\mathbf{x}_i,\mathbf{c})$ and the ground truth $y_i$.
The coefficients $\mathbf{c}$ in $\phi$ are typically optimized on the training data ${D}$ using numerical optimizers such as BFGS~\citep{fletcher2000practical}.
This problem is NP-hard~\citep{journal/tmlr/virgolin2022}, posing a major challenge for SR algorithms.
Recent efforts to mitigate this challenge are reviewed in Section~\ref{sec:related}.

\section{Methodology}

\subsection{\method: Equality graph for Grammar-based Symbolic Expression}
\label{sec:egg-define}
Enumerating all equivalent variants of a symbolic expression is combinatorially expensive.  Storing these variants explicitly is time-consuming and memory-inefficient. To mitigate this scalability challenge, we adopt the recently proposed Equality graph (i.e, E-graph) data structure~\citep{DBLP:journals/pacmpl/WillseyNWFTP21,DBLP:journals/jar/WaldmannTRB22}, which compactly represents the set of equivalent expressions by sharing common subexpressions. We extend E-graphs to support grammar-based symbolic expressions, noted as \method, facilitating unified integration with symbolic regression algorithms.

\noindent\textbf{Definition.}
An \textit{e-graph} consists of a collection of equivalence classes, called \textit{e-classes}. Each e-class contains a set of \textit{e-nodes} representing symbolically equivalent sub-expressions~\citep{DBLP:journals/pacmpl/WillseyNWFTP21}. Each e-node encodes a mathematical operation and references a list of child e-classes corresponding to its operands. Edges always point from an e-node to e-classes. 

Figure~\ref{fig:rewrite-log}(d) shows an example e-graph. The color-highlighted part is an e-class (in dashed box) containing two e-nodes (in solid boxes): $A \to \log(A)$ and $A \to A + A$. The two e-nodes represent logarithmic operation and addition, respectively. The e-node $A \to \log(A)$ has a single outgoing edge to its child e-class $A\to A\times A$, because $\log(\cdot)$ operator is unary. 

\begin{figure}[!t]
    \centering
    \includegraphics[width=\linewidth]{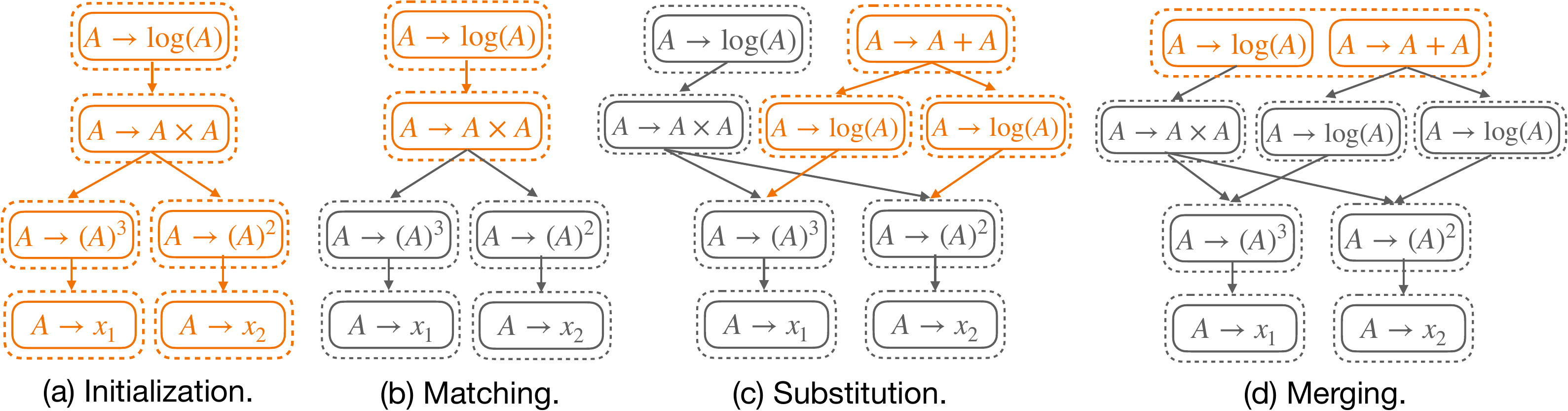}
    % \vspace{-18pt}
    \caption{Example e-graph construction by applying  rewrite rule $\log(a\times b) \leadsto \log(a) + \log(b)$ to an e-graph representing the expression $\log(x_1^3x_2^2)$.  \textbf{(a)} The initialized e-graph consists of \emph{e-classes} (dashed boxes), each containing equivalent \emph{e-nodes} (solid boxes). Edges connect e-nodes to their child e-classes. \textbf{(b)} The matching step identifies the e-nodes that match the $\mathtt{LHS}$ of the rule. \textbf{(c)} The substitution step adds new e-classes and edges corresponding to the $\mathtt{RHS}$ to the e-graph. \textbf{(d)} The merging step consolidates equivalent e-classes. See Example~\ref{example:log} for a detailed explanation.}
    \label{fig:rewrite-log}
\end{figure}

\noindent\textbf{Construction.} In this work, an e-graph is initialized with a sequence of production rules, representing an input expression.
Each rewrite rule (as defined in section~\ref{sec:prelim}) is applied by \textit{matching} its left-hand side ($\mathtt{LHS}$) pattern against the current e-graph, which involves traversing all e-classes to identify subexpressions that match $\mathtt{LHS}$.
For every successful match, new e-classes and e-nodes corresponding to the right-hand side ($\mathtt{RHS}$) of the rule are created. A \textit{merge} operation is applied to incorporate the new e-class with the matched e-class, thereby preserving the structure of known equivalences. This process, known as \textit{equality saturation}, iteratively applies pattern matching and merging until either no further rules can be applied or a maximum number of iterations is reached.

\begin{example} \label{example:}
Figure~\ref{fig:rewrite-log} shows an example e-graph construction with the rewrite rule $\log(a\times b) \leadsto \log(a) + \log(b)$, where $a$ and $b$ are placeholders for arbitrary sub-expressions. 
The e-graph is initialized with an expression $\phi=\log(x_1^3x_2^2)$ in Figure~\ref{fig:rewrite-log}(a).
The $\mathtt{LHS}$ is \textit{matched} against the color-highlighted e-classes in Figure~\ref{fig:rewrite-log}(b) with $a=x_1^3$ and $b=x_2^2$.
The \textit{substitution} step constructs e-classes and e-nodes that represent $\mathtt{RHS}$, which are color-highlighted in Figure~\ref{fig:rewrite-log}(c).
The newly created e-class is \textit{merged} with the matched e-class in Figure~\ref{fig:rewrite-log}(d). 
The resulting e-graph represents two equivalent expressions: $ \log(x_1^3 x_2^2)$ and $\log(x_1^3) + \log(x_2^2)$.
The e-graph in Figure~\ref{fig:rewrite-log}(d) saves memory by storing
two sub-expressions $x_1^3$ and $x_2^2$ only once. 
Additional \method visualizations on more complex expressions are provided in Appendix \ref{apx:extra-egg-vis}.
\end{example}

\noindent\textbf{Extraction.}  
After the e-graph is saturated, an extraction step is performed to obtain $K$ representative expressions~\citep{DBLP:journals/pacmpl/GoharshadyLP24}.  
Because an e-graph encodes up to an exponential number of equivalent expressions, exhaustive enumeration is computationally infeasible.  
We therefore adopt two practical strategies:  
(1) \emph{cost-based extraction}, which selects several simplified expressions by minimizing a user-defined cost function over operators and variables~\citep{rEGGression}, and  
(2) \emph{random-walk sampling}, which generates a batch of expressions by stochastically traversing the e-graph.  
A detailed explanation of extraction is provided in Appendix~\ref{apx:extract-egraph}.

\noindent\textbf{Interaction with SR algorithms.} Prior research has leveraged e-graphs, based on the principle of Occam’s razor, to obtain the most simplified and least-cost equivalent form~\citep{de2025equality}.
In this study, however, equivalence-aware learning (detailed in Section~\ref{sec:pipeline}) is encouraged by explicitly exposing SR algorithms to an extra subset of equivalent variants.

The \method module is primarily used to generate a subset of equivalent variants—i.e., expressions that represent the same mathematical function under a set of math identities.
Given a sequence of production rules predicted by an SR algorithm, \method constructs the e-graph and then performs extraction (via random-walk sampling) to return a batch of equivalent sequences.  \method is implemented for \textit{grammar-based} symbolic expressions in pure Python, following the original e-graph paper~\citep{DBLP:journals/pacmpl/WillseyNWFTP21}. Implementation of \method is provided in Appendix~\ref{apx:implement-egg}.

\subsection{Embedding Symbolic Equivalence into Symbolic Regression via \method} \label{sec:pipeline}

\begin{figure}[!t]
    \centering
    \includegraphics[width=1\linewidth]{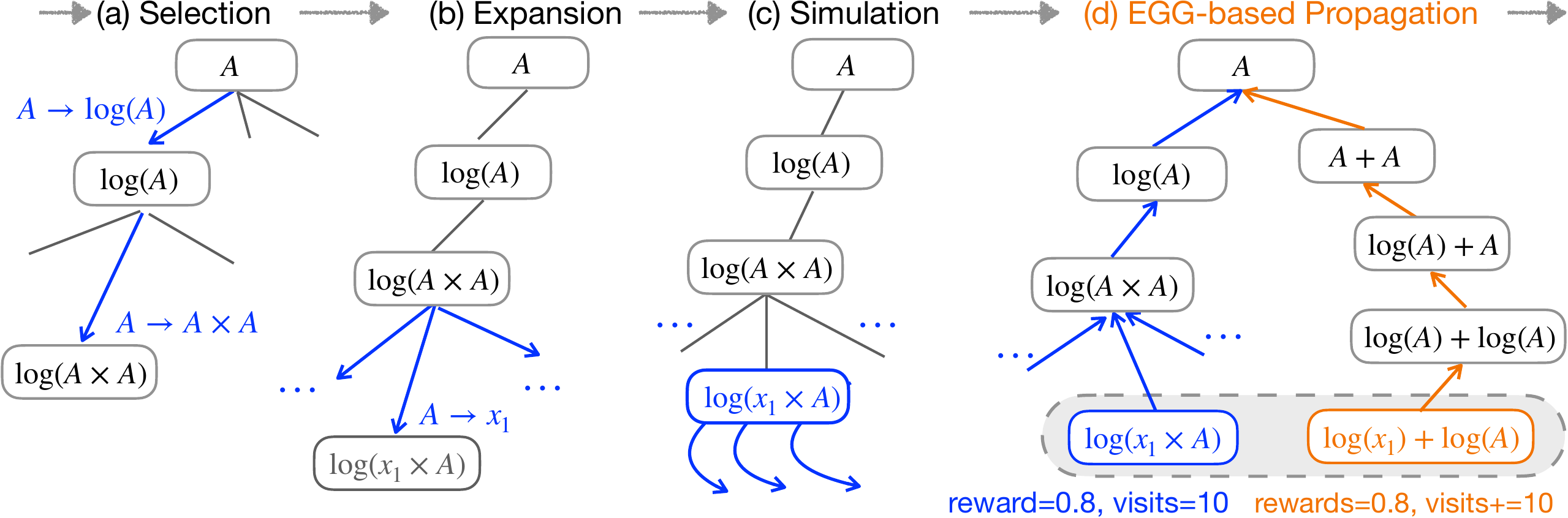}
    \caption{The \method-MCTS pipeline consists of: \textbf{(a)} Starting at the root node, MCTS selects the child with the highest UCT score (in equation~\ref{eq:uct}) until reaching a leaf. 
    \textbf{(b)}  The selected leaf produces new child nodes by applying all applicable production rules. \textbf{(c)} For each child, run several rollouts to complete the expression template by sampling additional rules. The resulting expressions are fitted to the data to estimate their coefficients. 
    \textbf{(d)} Rewards and visit counts from the selected leaf are back-propagated to the root. In addition to updating the selected path (\textcolor{blue}{\textbf{blue}}), we also update those equivalent paths (\textcolor{orange}{\textbf{orange}}) identified by our \method module.}
    \label{fig:egg-mcts-pipeline}
    % \vspace{-16pt}
\end{figure}

\noindent\textbf{Embedding \method into Monte Carlo Tree Search.}
MCTS~\citep{DBLP:conf/iclr/Sun0W023,ruan2025discovering} maintains a search tree to explore an optimal sequence of decisions, here corresponding to a sequence of production rules. Each edge is labeled with a production rule, and each node is labeled by the sequence of edge labels from the root. By the grammar definition, this node label corresponds to a partially completed (or complete) expression.

During learning, MCTS iterates the following four steps~\citep{DBLP:journals/kbs/BrenceTD21}:
(1) \textit{Selection}. Starting from the root node, successively select the edge of node $s$ (noted as $a$) with the highest Upper Confidence Bound for Trees (UCT)~\citep{DBLP:conf/ecml/KocsisS06}: 
\begin{equation} \label{eq:uct}
\mathtt{UCT}(s, a) = \mathtt{reward}(s, a) + \alpha\sqrt{{\log(\mathtt{visits}(s))}/{\mathtt{visits}(s, a)}}
\end{equation}
where $\mathtt{reward}(s,a)$ is the average reward obtained by selecting edge $a$ at node $s$, $\mathtt{visits}(s)$ is the number of visits to node $s$, and $\mathtt{visits}(s,a)$ is the number of times rule $a$ has been selected at node $s$. The constant $\alpha$ (often set to $\sqrt{2}$ in theory) balances between exploration and exploitation.
(2) \textit{Expansion}. When reaching a leaf node $s$, expand the search tree by generating its child nodes using all production rules.
(3) \textit{Simulation}.  Perform several rollouts for each child to evaluate the average reward of node $s$. In each rollout, generate a valid expression $\phi$ by randomly applying production rules until completion. Then, estimate the optimal coefficients in $\phi$ and evaluate its reward. A common reward function is ${1}/{(1+\mathtt{NMSE}(\phi))}$.
(4) \textit{Backpropagation}. Update the reward estimates and visit counts for node $s$ and all its parents up to the root node. After the final iteration, MCTS returns the expression with the highest reward encountered during training as its prediction.

\noindent\textit{\method-based Backpropagation.}%
Our backpropagation strategy is motivated by transposition tables~\citep{DBLP:conf/cig/ChildsBK08,DBLP:conf/acml/LeurentM20}, which use a table to cache identical nodes (e.g., via hashing) in a search tree and share their statistics during training. This mechanism propagates information as if the search had visited all identical nodes. Such tables are effective in domains such as Go and Hearthstone, where nodes that are identical can be easily determined. In symbolic regression, however, two nodes may be identical only up to symbolic equivalence induced by rewrite rules, so a hashing-based transposition table is not directly applicable. To address this, \method is used to identify equivalent paths and nodes.

Concretely, the path—representing a partially completed expression—is first converted into an initial e-graph. The e-graph is then saturated by repeatedly applying the set of rewrite rules.
From this saturated e-graph, we sample several distinct equivalent sequences and check if the tree contains corresponding paths.
If so, we apply backpropagation to all of them. In this way, we avoid redundant exploration by sharing the rewards and visit counts of equivalent paths and nodes.

This modification in \method-MCTS changes the interpretation of equation~\eqref{eq:uct}.  $\mathtt{visits}(s)$ no longer counts how many times the specific tree node $s$ appears on simulated paths; instead, it counts visits to any representative within the associated equivalence class. Conceptually, this mirrors the transposition table that shares statistics across identical tree nodes.

In Theorem~\ref{thm:regret-mcts-main}, we show that \method-MCTS accelerates learning by reducing the search tree’s effective branching factor relative to standard MCTS. It prevents redundant exploration of equivalent subtrees and concentrates sampling on genuinely distinct (and potentially near-optimal) paths.

\begin{example} \label{example:log}
Figure~\ref{fig:egg-mcts-pipeline} shows an example execution pipeline of \method-MCTS.
Specifically, Figure~\ref{fig:egg-mcts-pipeline}(d) highlights two distinct root-to-node paths in the search tree:
\begin{align*}
\text{Path }1: &\;  (A \to \log(A), A \to A \times A,A\to x_1), &\qquad \!\text{Node $s_1$: } &\log(x_1 \times A). \\
\text{Path }2: &\; (A \to A + A, A \to \log(A), A\to x_1, A \to \log(A)),   &\qquad \!\text{Node $s_2$: } &   \log(x_1) + \log(A).
\end{align*}
Here, each path is a sequence of edge labels from the root to the leaf node $s_i$. Based on the grammar definition in section~\ref{sec:prelim}, node $s_1$ corresponds to the partially completed expression $\log(x_1 \times A)$, while node $s_2$ represents $\log(x_1) + \log(A)$. The two nodes are equivalent under the rewrite rule $\log(ab)\leadsto \log a+\log b$. Consequently, their rewards, estimating the averaged goodness-of-fit of expressions on the training data,  should be approximately equal:
\begin{align*}
\mathtt{reward}(s_1, a) \approx \mathtt{reward}(s_2, a),\qquad \forall  a \in \text{ the set of production rules}
\end{align*}
Standard MCTS would explore the subtrees rooted at $s_1$ and $s_2$ independently, because it is unaware of their equivalence. This results in redundant computation and slows down learning.
With \method-MCTS, the visit counts and reward estimates of both paths are updated simultaneously, eliminating the need for extra iterations on the orange leaf in Figure~\ref{fig:egg-mcts-pipeline}(d). See Example~\ref{example:identity} for the case of other rewrite rules, e.g., $\sin^2(a)+\cos^2(a)\leadsto 1$ and $a/a\leadsto 1$.
\end{example}

\noindent\textbf{Embedding \method into Deep Reinforcement Learning.}
DRL typically employs a neural sequential decoder to predict an expression by sampling a sequence of production rules from the model distribution. The reward assigns higher values to expressions that better fit the training data~\citep{DBLP:conf/iclr/PetersenLMSKK21,landajuelaunified,ijcai2024p651}.  The pipeline of DRL and \method-DRL is presented in Figure~\ref{fig:egg-drl-pipeline}.

At every step, the sequential decoder samples the next production rule from a distribution over all available rules, conditioned on the previously generated sequence. The decoder thus induces a distribution $p_\theta(\tau)$ over the rule sequence $\tau$. 
The reward function is typically defined as
 $\mathtt{reward}(\tau)={1}/{(1+\mathtt{NMSE}(\phi))}$, where $\phi$ is the expression constructed by $\tau$ following grammar definition (in section~\ref{sec:prelim}).
The learning objective is to maximize the expected reward of generated expressions on the training data: $\mathbb{E}_{\tau \sim p_\theta} \left[ \mathtt{reward}(\tau) \right]$, whose gradient is $\mathbb{E}_{\tau \sim p_\theta} \left[ \mathtt{reward}(\tau)\nabla_\theta \log p_\theta(\tau)  \right]$.
In practice, the gradient is approximated via Monte Carlo. Sampling $N$ sequences $\{\tau_1, \ldots, \tau_N\}$ from the decoder, the policy gradient estimator computes:
\begin{equation} \label{eq:p-estimator}
g(\theta) \approx \frac{1}{N} \sum_{i=1}^{N} (\mathtt{reward}(\tau_i) - b)\nabla_\theta \log p_\theta(\tau_i),
\end{equation}
where $b$ is a baseline used to reduce variance~\citep{DBLP:conf/uai/WeaverT01}. Recent work~\citep{DBLP:conf/iclr/PetersenLMSKK21} further proposes using a top-quantile subset of samples, rather than the sample mean.

\noindent\textit{\method-based Policy Gradient Estimator.}
For each sampled sequence $\tau_i$, we construct an e-graph that compactly encodes all of its equivalent expressions. From this e-graph, we sample $K-1$ equivalent sequences $\{\tau_i^{(2)},\ldots,\tau_i^{(K)}\}$. We then revise the policy-gradient estimator as
\begin{equation} \label{eq:p-egg-estimator}
g_{\mathtt{egg}}(\theta) \approx \frac{1}{N} \sum_{i=1}^{N}  \big(\mathtt{reward}(\tau_i) - b'\big)\nabla_\theta  \log\left[\sum_{k=1}^{K} p_\theta(\tau_i^{(k)})\right],
\end{equation}
where $\tau_i^{(1)}$ is the original sequence $\tau_i$ and $b'$ is the corresponding baseline, and  $\sum_{k=1}^{K} p_\theta(\tau_i^{(k)})$ aggregates the probabilities of all equivalent sequences that share the same reward.
In Theorem~\ref{thm:unbias-variance}, we show that \method improves DRL training by yielding a lower-variance gradient estimator than standard DRL~\citep{DBLP:conf/iclr/PetersenLMSKK21}.

\noindent\textbf{Embed \method into Large-Language Model.}
LLM is applied to search for optimal symbolic expressions with prompt tuning~\citep{DBLP:conf/acl/MerlerHDM24,DBLP:journals/iclr/llmsr25}. The procedure consists of three key steps:
(1) \textit{Hypothesis Generation}: The LLM generates multiple candidate expressions based on a prompt describing the problem background and the definitions of each variable.
(2) \textit{Data-Driven Evaluation}: Each candidate expression is evaluated based on its fitness on the training dataset.
(3) \textit{Experience Management}: In subsequent iterations, the LLM receives feedback in the form of previously predicted expressions and their corresponding fitness scores, allowing it to refine future generations. High-fitness expressions are retained and updated over multiple rounds of iteration.

% \medskip
\noindent\textit{\method-based Feedback Prompt.} Since LLMs typically generate Python functions rather than symbolic expressions directly, we introduce a wrapper that parses the generated Python code into symbolic expressions. These expressions are then transformed into e-graphs using a set of rewrite rules. From each e-graph, we extract semantically equivalent expressions and summarize them into a similar feedback message, which is incorporated into the prompt for the next round. This augmentation enables the LLM to observe a richer set of functionally equivalent expressions, potentially improving the quality and accuracy of predictions in future iterations.

\subsection{Connection to Existing Methods}
Prior work has explored alternative expression representations based on layer-wise symbolic networks ({SymNet})~\citep{DBLP:conf/icml/SahooLM18,DBLP:conf/icml/LiLYWSLLWDH24}, which are not directly compatible with our grammar-based formulation. Recent studies on {SymNet} further show that many learned coefficients can be aggregated or merged~\citep{DBLP:journals/corr/abs-2401-15103}. Extending this notion of coefficient equivalence to \emph{sub-expression equivalence} within {SymNet} remains an interesting open problem.

For DRL-based approaches, several extensions of the original method~\citep{DBLP:conf/iclr/PetersenLMSKK21} have been proposed, including~\citep{mundhenk2021seeding,landajuelaunified}. An important open question is whether symbolic-equivalence can be integrated into these extensions in a compatible and effective manner, and whether doing so can further improve overall performance.

Finally, \citet{DBLP:conf/nips/KamiennydLC22,DBLP:conf/nips/ShojaeeMFR23} encode data directly with a Transformer and predict an expression traversal sequence end-to-end under a cross-entropy objective. A natural way to incorporate \method is to use it during training to generate multiple equivalent, correct target sequences. How to best leverage \method at inference time, however, remains an open problem.

\subsection{Theoretical Justification on \method-SR Accelerating Learning}
Theorem~\ref{thm:regret-mcts-main} shows that \method-MCTS achieves an asymptotically tighter regret bound than standard MCTS, as the effective branching factor satisfies $\kappa_\infty \le \kappa$.  Intuitively, by identifying symbolically equivalent nodes and sharing their search statistics, \method prevents redundant exploration of equivalent subtrees and concentrates sampling on genuinely distinct (and potentially near-optimal) paths. After many iterations, \method-MCTS concentrates more quickly on the near-optimal region of the search space, which is captured by a smaller effective branching factor.

Also,  Theorem~\ref{thm:unbias-variance} shows that embedding \method into DRL produces an unbiased gradient estimator while strictly reducing gradient variance, ensuring more stable and efficient policy updates.

\begin{theorem}\label{thm:regret-mcts-main}
Consider embedding \method into the MCTS framework.
Given Definitions~\ref{def:diff-m} and~\ref{def:ref-diff-m} (in appendix),  
let $n$ denote the total number of learning iterations, $\gamma \in (0,1)$ the discount factor of the corresponding Markov decision process, 
 $\kappa$ be the near-optimal branching factor of standard MCTS, and $\kappa_\infty$ the corresponding branching factor of \method-MCTS.   Then the regret bounds satisfy 
 \begin{equation*} 
    \mathtt{regret}(n) \;=\; \widetilde{\mathcal{O}}\left(n^{-\frac{\log(1/\gamma)}{\log \kappa}}\right),
     \qquad 
\mathtt{regret}_{\mathtt{egg}}(n) \;=\; \widetilde{\mathcal{O}}\left(n^{-\frac{\log(1/\gamma)}{\log \kappa_\infty}}\right),    \qquad \text{with }   \kappa_\infty \le \kappa.
 \end{equation*}
\end{theorem}
\begin{proof}[Proof Sketch] 
\citet{DBLP:conf/acml/LeurentM20} analyze MCTS on a graph obtained by merging identical tree nodes and sharing their statistics.
Their analysis \emph{unrolls} the graph into a tree that contains all graph-traversable paths. 
The search tree in \method-MCTS behaves identically to the unrolled tree.
Our final results follow their regret analysis on the unrolled tree.
A detailed proof is in Appendix~\ref{apx:proof-mcts}.
\end{proof}

\begin{theorem}
\label{thm:unbias-variance}
Consider embedding \method into the DRL framework.
Let $\tau \sim p_{\theta}$ denote trajectories sampled from the distribution $p_{\theta}$, and consider the two estimators defined in Equations~\eqref{eq:p-estimator} and~\eqref{eq:p-egg-estimator}.
\textit{(1) Unbiasedness.}  
The expectation of the standard estimator $g(\theta)$ equals that of the \method-based estimator $g_{\mathtt{egg}}(\theta)$: $ \E_{\tau \sim p_\theta}\big[g(\theta)\big]= \E_{\tau \sim p_\theta}\big[g_{\mathtt{egg}}(\theta)\big]$.
\textit{(2) Variance Reduction.}  
The variance of the proposed estimator is smaller than that of the standard estimator: $$\Var_{\tau \sim p_\theta}\big[g_{\mathtt{egg}}(\theta)\big] \le \Var_{\tau \sim p_\theta}\big[g(\theta)\big].$$
\begin{proof}[Proof Sketch]
For \textit{(1)}, unbiasedness can be obtained by expanding the definitions of $g(\theta)$ and $g_{\mathtt{egg}}(\theta)$.
For \textit{(2)}, the key observation is that \method groups together equivalent trajectories that share the same reward. Averaging over sequences with identical rewards reduces within-group variability, which yields a smaller variance.
A full proof is provided in Appendix~\ref{apx:proof-drl}.
\end{proof}
\end{theorem}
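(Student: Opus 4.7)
}
My plan is to exploit the single structural fact that drives the whole result: every $\tau' \in [\tau]_{\equiv_\mathcal{R}}$ defines the same function as $\tau$, so the reward is constant on each equivalence class. Writing $r(\tau) := \mathtt{reward}(\tau)$, let $\mathcal{C}$ be the partition of the sample space induced by $\equiv_\mathcal{R}$, and for an equivalence class $C \in \mathcal{C}$ set $r(C) := r(\tau)$ for any $\tau \in C$ and $P_\theta(C) := \sum_{\tau \in C} p_\theta(\tau)$. I will assume the idealized case in which \method enumerates $[\tau_i]$ exhaustively; the argument extends to partial extraction by conditioning on the extracted subset instead of on $C$.

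For unbiasedness, I would first rewrite the expectation of a single summand of $g(\theta)$ by partitioning the sum over $\tau$ according to $\mathcal{C}$, using $\nabla_\theta p_\theta(\tau) = p_\theta(\tau)\nabla_\theta \log p_\theta(\tau)$. After pulling out the constant $r(C)-b$ and summing $\nabla_\theta p_\theta$ over $\tau \in C$, this collapses to $\sum_C (r(C)-b)\nabla_\theta P_\theta(C)$. The same reorganization applied to $g_{\mathtt{egg}}(\theta)$, combined with the identity $\nabla_\theta \log P_\theta(C) = \nabla_\theta P_\theta(C)/P_\theta(C)$, yields $\sum_C (r(C)-b')\nabla_\theta P_\theta(C)$. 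Since the baseline term vanishes in expectation ($\sum_C \nabla_\theta P_\theta(C) = \nabla_\theta 1 = 0$), both expressions equal $\nabla_\theta J(\theta)$, establishing unbiasedness.

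For variance reduction, I would use Rao--Blackwellization. Fix one sampled $\tau \sim p_\theta$ and let $X := (r(\tau)-b)\nabla_\theta \log p_\theta(\tau)$ and $Y := [\tau]$. The key computation is
\begin{equation*}
\E[X \mid Y = C] \;=\; \sum_{\tau \in C} \tfrac{p_\theta(\tau)}{P_\theta(C)}\,(r(C)-b)\,\nabla_\theta \log p_\theta(\tau) \;=\; (r(C)-b)\,\nabla_\theta \log P_\theta(C),
\end{equation*}
which is precisely the per-sample contribution to $g_{\mathtt{egg}}(\theta)$ (with a common choice of baseline). The law of total variance gives $\Var(X) = \E[\Var(X \mid Y)] + \Var(\E[X \mid Y]) \ge \Var(\E[X \mid Y])$ componentwise, and averaging $N$ i.i.d.\ samples preserves the inequality, producing $\Var(g_{\mathtt{egg}}(\theta)) \le \Var(g(\theta))$.

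The main obstacles I anticipate are two bookkeeping points rather than deep technical ones. First, the two estimators use different baselines $b$ and $b'$ (e.g., empirical means or top-quantile thresholds as in \cite{DBLP:conf/iclr/PetersenLMSKK21}); I need to argue that each is either a constant independent of $\tau$ or depends only on the equivalence class, so that it commutes with the conditional expectation in the Rao--Blackwell step and integrates to zero in the unbiasedness step. Second, in practice \method extracts only $K-1$ variants rather than the entire class $C$, so the sum $\sum_k p_\theta(\tau_i^{(k)})$ is a partial sum; I would handle this by conditioning on the (possibly random) extracted subset $S(\tau_i) \subseteq [\tau_i]$ and verifying that $\E[X \mid S(\tau)] = (r(S)-b)\nabla_\theta \log \sum_{\tau \in S} p_\theta(\tau)$ still holds, which again follows because $r$ is constant on $S$. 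The full proof with these details is deferred to Appendix~\ref{apx:proof-drl}.
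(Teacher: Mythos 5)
Your proposal is correct and follows essentially the same route as the paper's own proof: the paper likewise partitions sequences into equivalence classes with aggregated probability $q_\theta(\phi)=\sum_{\tau\in\mathcal{S}_\phi}p_\theta(\tau)$ (your $P_\theta(C)$), proves the identical key identity $\E[\nabla_\theta\log p_\theta(\tau)\mid\phi]=\nabla_\theta\log q_\theta(\phi)$, and obtains variance reduction via Rao--Blackwellization and the law of total variance, with baselines deferred to a closing remark exactly as you suggest. The one place you go slightly beyond the paper is in flagging that \method extracts only $K$ variants rather than the full class, a gap the paper's appendix silently elides by defining $q_\theta$ over the entire class.
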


\section{Related Works} \label{sec:related}

\noindent\textbf{Knowledge-Guided Scientific Discovery.}
Recent efforts have explored incorporating physical and domain-specific knowledge to accelerate symbolic discovery. AI-Feynman~\citep{journal/2020/aifrynman,DBLP:conf/nips/UdrescuTFNWT20,keren2023computational,cornelio2023combining} constrained the search space to expressions that exhibit compositionality, additivity, and generalized symmetry. Similarly, \citet{tenachi2023deep} encoded physical unit constraints into learning to eliminate physically impossible solutions. Other works further constrained the search space by integrating user-specified hypotheses and prior knowledge, offering a guided approach to symbolic regression~\citep{DBLP:conf/icml/BendinelliBK23,kamienny:tel-04391194,DBLP:journals/iclr/llmsr25,taskin2025knowledge, DBLP:conf/iclr/ZhangCXB025}.
Our \method-SR presents a new idea in knowledge-guided learning that is orthogonal to existing approaches.
%
% An interesting direction for future work is to investigate whether \method can be combined with these knowledge-guided strategies to further improve symbolic discovery.

\noindent\textbf{Symbolic Equivalence} is a central concept in program synthesis and mathematical reasoning~\citep{DBLP:journals/pacmpl/WillseyNWFTP21}. In SQL query optimization, it rewrites queries into time-efficient forms~\citep{DBLP:journals/corr/abs-2407-12794}. In hardware synthesis, it supports cost-aware rewrites such as optimized matrix multiplication~\citep{DBLP:conf/fccm/UstunSYYZ22}. In formal methods, it accelerates automated theorem proving through normalization and equivalence checking~\citep{DBLP:journals/pacmpl/KurashigeJGBNIJ24}. In mathematical reasoning, it is used to generate paraphrases of math expressions~\citep{DBLP:conf/naacl/ZhengWGK25}.

In symbolic regression, \citet{DBLP:conf/gecco/FrancaK23} leverages e-graphs to mitigate over-parameterization in candidate expressions. \citet{DBLP:conf/gecco/olivetti25,de2025equality} further incorporates e-graphs into genetic programming (GP) to detect and eliminate redundant individuals, encouraging GP to explore novel expressions. A recent follow-up work~\citep{rEGGression} provides a richer API for interacting with GP. This study advances existing work by offering a unified interface for encoding known mathematical equalities as e-graphs, enabling equivalence-aware learning across several modern symbolic regression algorithms, together with theoretical guarantees.

\noindent\textbf{Equivalence-aware Learning.} Equivalence and symmetry have long been recognized as crucial for improving efficiency in search and learning. In MCTS, transposition tables exploit equivalence by merging nodes that represent the same underlying state, avoiding redundant rollouts and accelerating convergence~\citep{DBLP:conf/cig/ChildsBK08}. More recent extensions explicitly leverage symmetries to prune symmetric branches of the search space~\citep{DBLP:journals/kbs/SaffidineCM12,DBLP:conf/acml/LeurentM20}.
In reinforcement learning, symmetries in the state–action space have been used to accelerate convergence~\citep{grimm2020value}.
LLMs also benefit from equivalence-awareness, particularly in code generation~\citep{sharma2025assessing}.

\section{Experiments}
We show that (1) \method enhances existing learning algorithms in discovering expressions with smaller Normalized MSEs.
(2) In case studies, \method consistently exhibits both time and space efficiency.
The detailed experimental setups and datasets used in each comparison, are provided in Appendix~\ref{apx:exp-set}.

\subsection{Overall Benchmarks}

\noindent\textbf{Impact of \method on MCTS.} 
We conduct two analyses to evaluate the impact of integrating \method into standard MCTS:
(1) the median normalized MSE of the TopK ($K=10$) expressions identified at the end of training, and
(2) The growth of the search tree, measured by the number of explored nodes over learning iterations. 

\begin{wrapfigure}{r}{0.6\textwidth}
\vspace{-0.3em}
\centering
\includegraphics[width=0.45\linewidth]{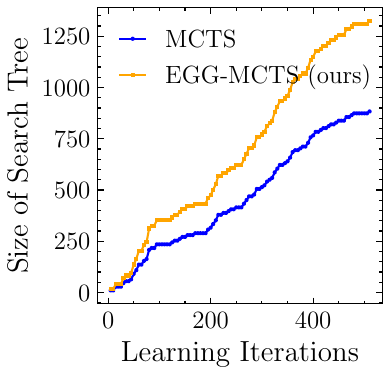}
\includegraphics[width=0.52\linewidth]{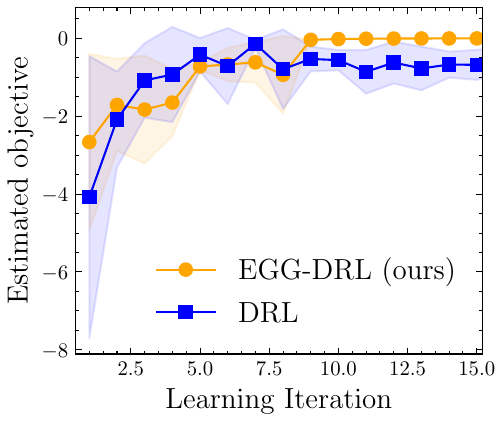}
\caption{On the ``sincos(3,2,2)'' dataset, we show \textbf{(left)} search tree size over learning iterations for MCTS and \method-MCTS, and also \textbf{(right)} empirical mean and standard deviation of the estimated quantity for DRL and \method-DRL.} %$
\label{fig:learn-stats}
\vspace{-1em}
\end{wrapfigure}
Table~\ref{tab:Trigonometric-nmse-mtcs} shows that \method-MCTS consistently discovers expressions with lower normalized quantile scores compared to standard MCTS. 
The dataset is selected from~\cite{DBLP:conf/pkdd/JiangX23} as the expressions contain $\sin,\cos$ operators, which contain many symbolic-equivalence variants.

Figure~\ref{fig:learn-stats}(Left) illustrates that \method-MCTS maintains a broader and deeper search tree, indicating exploration of a larger and more diverse search space.
Across various datasets, augmenting MCTS with \method improves symbolic expression accuracy.  This improvement is primarily due to the effectiveness of our rewrite rules, which cover a rich set of trigonometric identities and enable efficient exploration of symbolic variants in trigonometric expression spaces. 

\begin{table}[!t]
    \centering
        \caption{On Trigonometric datasets, median NMSE values of the best-predicted expressions found by all the algorithms. The 3-tuples at the top $(\cdot,\cdot,\cdot)$ indicate the number of free variables, singular terms, and cross terms in the ground-truth expressions generating the dataset. The set of operators is $\{\sin, \cos,+,-,\times\}$.  The best result in each column is \underline{underlined}.}
    \label{tab:Trigonometric-nmse-mtcs}
    \resizebox{\linewidth}{!}{%
    \begin{tabular}{r|rrrr|rrrr}
    \hline
    &\multicolumn{4}{c|}{\textbf{Noiseless Setting}} & \multicolumn{4}{c}{\textbf{Noisy Setting}}\\
& $(2,1,1)$ & $(3,2,2)$ & $(4,4,6)$ & $(5,5,5)$ &  $(2,1,1)$ & $(3,2,2)$ & $(4,4,6)$ & $(5,5,5)$ \\   
\hline
\method-MTCS & $<$\underline{1E-6} & $<$\underline{1E-6}  & \underline{0.006} & \underline{0.009}  & $\underline{0.005}$ & ${0.012}$  & $\underline{0.091}$  & $\underline{0.121}$\\
MTCS & 0.006 & 0.033 & 0.144 & 0.147 & $0.015$  & $\underline{0.007}$ & $0.138$ & 0.150 \\
\midrule 
\method-DRL  & \underline{0.020} & \underline{0.161} & \underline{2.381} & \underline{2.168} &{$\underline{0.07}$}  &$\underline{0.35}$  &${5.09}$  &$\underline{5.67 }$ \\
DRL  & 0.030 & 0.277 & 2.990 & 2.903 &$0.09$  &$0.44$  &$\underline{2.46}$  &$14.44$  \\
\hline
    \end{tabular}%
}
\end{table}

\begin{table}[!t]
    \centering
    % \vspace{-0.1in}
    \caption{Comparison of LLM, and \method-LLM models on different scientific benchmark problems measured by the NMSE metric. The best result in each column is \underline{underlined}.}    \label{tab:egg-llm}
     \vspace{-10pt}
    \resizebox{\linewidth}{!}{%
    \begin{tabular}{r|cc|cc|cc|cc}
    \hline
    & \multicolumn{2}{c|}{\textbf{Oscillation I}} &  \multicolumn{2}{c|}{\textbf{Oscillation II}} & \multicolumn{2}{c|}{\textbf{Bacterial growth}} & \multicolumn{2}{c}{\textbf{Stress-Strain}} \\
    & IID$\downarrow$ & OOD$\downarrow$ & IID$\downarrow$ & OOD$\downarrow$ & IID$\downarrow$ & OOD$\downarrow$ & IID$\downarrow$ & OOD$\downarrow$ \\
    \hline
     \method-LLM  (GPT3.5) & $<$\underline{1E-6} & $\underline{0.0004}$  & $<$\underline{1E-6}  & \underline{$<$1E-6} & $\underline{0.0121}$ & $\underline{0.0198}$ & $\underline{0.0202}$ & $\underline{0.0419}$\\
     LLM-SR (GPT-3.5) & \underline{$<$1E-6} & $0.0005$ & \underline{$<$1E-6} & $3.81\text{E-}5$ & $0.0214$ & $0.0264$ & $0.0210$ & $0.0516$ \\
     \hline
     \method-LLM (Mistral) & \underline{$<$1E-6}  & $0.0002$  & $\underline{0.0021}$ & $\underline{0.0114}$ &  $0.0101$ &  ${0.0107}$ & $\underline{0.0133}$ & $\underline{0.0754}$\\
     LLM-SR (Mistral) &$<$\underline{1E-6} & $0.0002$ & ${0.0030}$ & ${0.0291}$ & $\underline{0.0026}$ & $\underline{0.0037}$ & $0.0162$ & $0.0946$ \\
     \hline
    \end{tabular}%
 }
\end{table}

\noindent\textbf{Impact of \method on DRL.}
Table~\ref{tab:Trigonometric-nmse-mtcs} reports the median NMSE values of the best-predicted expressions discovered by \method-DRL and standard DRL, under identical experiment settings. Expressions returned by \method-DRL achieve a smaller NMSE value on noiseless and noisy settings. It shows that embedding \method into DRL helps to discover expressions with better NMSE.
In Figure~\ref{fig:learn-stats} (Right), we plot the estimated objective, defined as $R(\tau_i)\log p_{\theta}(\tau_i)$ where each trajectory $\tau_i$ is sampled from the sequential decoder with probability $p_\theta(\tau_i)$ (see Equation~\ref{eq:p-estimator}).
We plot the empirical mean and standard deviation of this objective over training iterations.
The observed reduction in variance is primarily due to the symbolic variants generated via the e-graph, which enable averaging over multiple equivalent expressions and thus yield more stable gradients.
% Additional supporting results are provided in Appendix~\ref{apx:exp-extend-drl}.

\noindent\textbf{Impact of \method on LLM.}
% We evaluate the impact of augmenting LLMs 
Following the dataset and experimental setup from the original paper~\citep{DBLP:journals/iclr/llmsr25}, we summarize the results in Table~\ref{tab:egg-llm}. The result of LLM-SR directly uses the reported result in~\cite{DBLP:journals/iclr/llmsr25}.
The results show that integrating \method enables the LLM to discover higher-quality expressions under the same experimental conditions, as with richer feedback prompts that incorporate equivalent expressions generated by \method.

\subsection{Case Analysis} \label{sec:exp-cases}

\noindent\textbf{Space Efficiency of \method.}
We evaluate the space efficiency of the e-graph representation in comparison to a traditional array-based approach. 
We benchmark the memory consumption of storing all equivalent variants of input expressions under two settings:
(1) $\phi = \log(x_1 \times \ldots \times x_n)$, using the logarithmic identity $\log(ab) \leadsto \log a + \log b$, and
(2) $\phi = \sin(x_1 + \ldots + x_n)$, using the trigonometric identity $\sin(a + b) \leadsto \sin(a)\cos(b) + \sin(b)\cos(a)$.
Both settings yield $2^{n-1}$ equivalent variants.  The array-based method explicitly stores each expression variant as a unique sequence, leading to exponential memory growth. In contrast, the e-graph compactly encodes multiple equivalent expressions by sharing common sub-expressions.

Figure~\ref{fig:mem-use} reports memory consumption as a function of the number of variables $n$. The results show that e-graphs use substantially less memory than the array-based representation.  We also provide additional visualizations of the constructed e-graphs for $n=2,3,4$: case~(1) in Appendix Figure~\ref{fig:memory-log} and case~(2) in Appendix Figure~\ref{fig:memory-sin}. It visualizes two representative e-graphs, illustrating how shared sub-expressions are stored once and reused across many variants, which underlies the space efficiency of \method.

\begin{figure}[!t]
  \centering
  \begin{minipage}[b]{0.491\textwidth}
    \centering
    \includegraphics[width=0.493\linewidth]{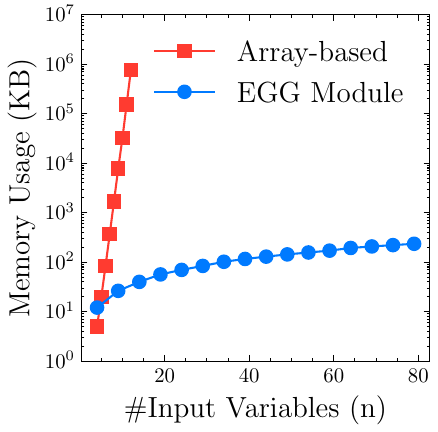}
    \includegraphics[width=0.493\linewidth]{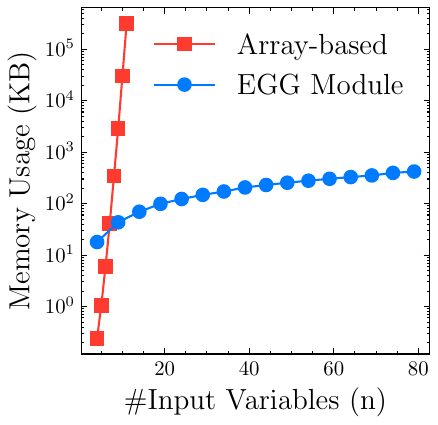}
    % \vskip -0.1in
    \caption{\method uses less memory than the array-based approach for two settings: \textbf{(Left)}  $\log\left(x_1 \times \ldots \times x_n\right)$ rewritten using $\log(a b) \leadsto \log a + \log b$. 
    \textbf{(Right)}  $\sin\left(x_1 + \ldots +x_n \right)$ rewritten using $\sin(a + b) \leadsto \sin a \cos b + \sin a \cos b$.  }
    \label{fig:mem-use}
  \end{minipage}
  \hfill
  \begin{minipage}[b]{0.491\textwidth}
    \centering
    \includegraphics[width=\linewidth]{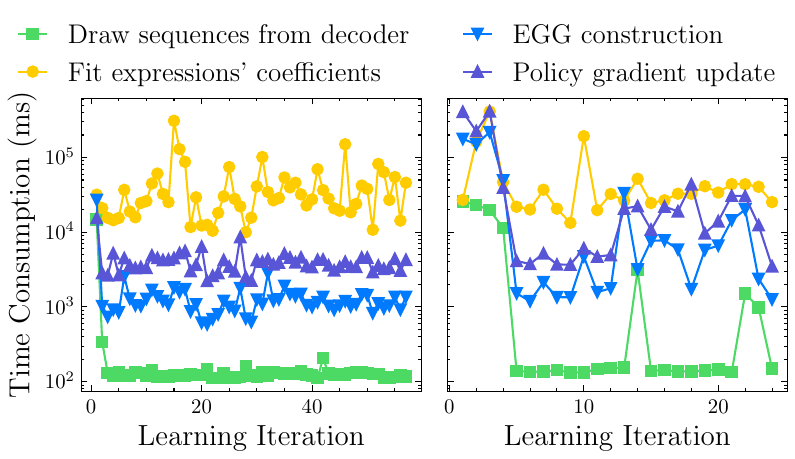}

    \caption{The \method module is time efficient and introduces negligible time overhead, compared with four main computations in DRL. \textbf{Left:} LSTM. \textbf{Right:} Decoder-only Transformer.}
    \label{fig:time-bench}
  \end{minipage}
  % \vspace{-20pt}
\end{figure}

\noindent\textbf{Time Efficiency of \method with DRL.}  
As shown in Figure~\ref{fig:time-bench}, we benchmark the runtime of the four main computations in \method-DRL on the selected ``sincos(3,2,2)'' dataset:  
(1) sampling sequences of rules from the sequential decoder,  
(2) fitting coefficients in symbolic expressions to the training data,  
(3) generating equivalent expressions via \method, and  
(4) computing the loss, gradients, and updating the neural network parameters.  
We consider two settings for the sequential decoder: a 3-layer LSTM with hidden dimension 128, and a decoder-only Transformer with 6 attention heads and hidden dimension 128.  

The \method module contributes minimal computational overhead relative to more expensive steps such as coefficient fitting and neural network parameter updates. 
The runtime of \method depends on the size of the rewrite-rule set. As more rules are included, the e-graphs maintain increasingly large sets of equivalent expressions. 
This highlights the practicality of incorporating \method into DRL-based symbolic regression frameworks.

\noindent\textbf{Additional Visualizations of E-graph Construction.}
To further demonstrate the effectiveness of the proposed \method module, we present additional visualizations of e-graph construction generated with our API on 7 selected complex expressions from the Feynman dataset~\citep{journal/2020/aifrynman}. Each visualization highlights a different set of rewrite rules (see Appendix~\ref{apx:feynman-egg}). These examples further illustrate that \method can simplify and transform complex scientific expressions in practical settings.

\section{Conclusion}
In this paper, we introduced \method-SR, a unified framework that integrates symbolic equivalence into symbolic regression through equality graphs (e-graphs) to accelerate the discovery of optimal expressions. 
Our theoretical analysis establishes the advantages of \method-MCTS over standard MCTS and \method-DRL over conventional DRL algorithms. 
Extensive experiments further demonstrate that \method consistently enhances the ability of existing methods to uncover high-quality governing equations from experimental data. 
Currently, many scientific publications use GP-based symbolic regression due to its ease of use.
In future work, we plan to extend our more sophisticated solver, \method-SR, to scientifically grounded problem settings, improving the community’s computational toolkit.

\subsection*{Ethics Statement}
All authors have read and commit to adhering to the ICLR Code of Ethics. This work uses only publicly available datasets and open-source models, and does not involve human subjects or human subjects data.

\subsection*{Reproducibility Statement}  Appendix~\ref{apx:implement-egg} describes the proposed \method module, and the Appendix~\ref{apx:proof-drl} and~\ref{apx:proof-mcts} include detailed proofs of theoretical justification. Appendix~\ref{apx:exp-set} gives the experimental setting.  Appendix~\ref{apx:exp-extra} collects extra experimental results.

\subsection*{Acknowledgements} We thank the reviewers for their constructive feedback, as well as Fabricio Olivetti de França for his public comments. This research was supported by TACC (CCR25054) and the U.S. Department of Energy, Office of Fusion Energy Sciences (DE-SC0024583).

\bibliography{reference}

\newpage
\appendix
\setcounter{tocdepth}{2}
\tableofcontents
\allowdisplaybreaks
\newpage

\noindent\textbf{Availability of \method, Baselines and Dataset}
 Please find our code repository at:
\begin{mdframed}[style=pretty]
\url{https://www.github.com/jiangnanhugo/egg-sr}
\end{mdframed}

\begin{itemize}
\item \textbf{\method}. The implementation of our \method module is located in the folder ``\\\texttt{src/equality\_graph/}''.
\item \textbf{Datasets}. The list of datasets used in our experiments can be found in \\``\texttt{datasets/scibench/scibench/data/}''.
\item \textbf{Baselines}. The implementations of several baseline algorithms, along with our adapted versions, are provided in the ''\texttt{algorithms/}'' folder. Execution scripts for running each algorithm are also included.
\end{itemize}
We also provide a \texttt{README.md} file with instructions for installing the required Python packages and dependencies.

We summarize the supplementary material as follows:
Section~\ref{apx:implement-egg} provides implementation details of the proposed \method.
Sections~\ref{apx:proof-mcts} and~\ref{apx:proof-drl} present a detailed theoretical analysis of the proposed method.
Section~\ref{apx:exp-set} describes the experimental setup and configurations.
Finally, Section~\ref{apx:exp-extra} presents additional experimental results.

\section{Implementation Details} \label{apx:implement-egg}
\subsection{Visualization of Expression Construction via Context-free Grammar}

Figure~\ref{fig:expr-as-rules} further illustrates how an expression such as $\phi = c_1 \log(x_1)$ can be constructed from the start symbol by sequentially applying grammar rules. We begin with the multiplication rule $A \to A \times A$, which expands the initial symbol $A$ in $\phi = A$ to yield $\phi = A \times A$. Continuing this process, each non-terminal symbol is recursively expanded using the appropriate grammar rules until we obtain the valid expression $\phi = c_1 \log(x_1)$.

\begin{figure}[!ht]
\centering
\includegraphics[width=.99\linewidth]{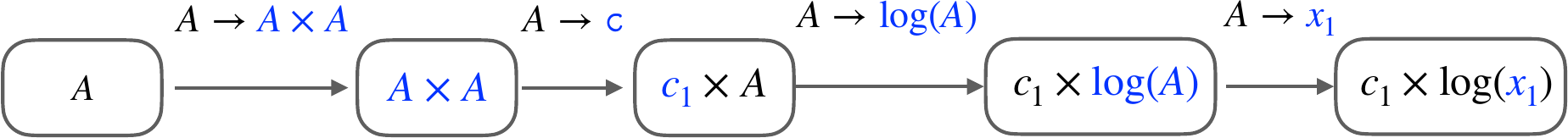}
\caption{Transforming a sequence of grammar rules into a valid expression. At each step, the \textit{first} non-terminal symbol inside the squared box is expanded. The expanded parts are color-highlighted.}
\label{fig:expr-as-rules}
\end{figure}

The next step is to determine the optimal coefficient $c_1$ in the expression $\phi = c_1 \log(x_1)$. More generally, suppose the expression contains $m$ free coefficients. Given training data $D = \{(\mathbf{x}_i, y_i)\}_{i=1}^N$, we optimize these coefficients using a gradient-based method such as BFGS by solving
\begin{equation*}
\mathbf{c}^* \gets \arg\min_{\mathbf{c} \in \mathbb{R}^m} \frac{1}{N}\sum_{i=1}^N \ell(\phi(\mathbf{x}_i, \mathbf{c}), y_i),
\end{equation*}
where the loss function $\ell$ is typically the normalized mean squared error (NMSE), which is defined in Equation~\eqref{eq:loss-function}.

\subsection{Implementation of Rewrite Rules}  \label{apx:implement-rewrite}
The following snippet defines the Python implementation of a rewrite rule from a known mathematical identity. The function \texttt{rules\_to\_s\_expr} converts the input list representation into an internal symbolic expression tree:

\begin{Verbatim}[commandchars=\\\{\},numbers=left,firstnumber=1,stepnumber=1]
\PY{k}{class}\PY{+w}{ }\PY{n+nc}{Rule}\PY{p}{:}
    \PY{l+s+s2}{\PYZdq{}}\PY{l+s+s2}{each rule is defined as: LHS \PYZhy{}\PYZgt{} RHS.}\PY{l+s+s2}{\PYZdq{}}
    \PY{k}{def}\PY{+w}{ }\PY{n+nf+fm}{\PYZus{}\PYZus{}init\PYZus{}\PYZus{}}\PY{p}{(}\PY{n+nb+bp}{self}\PY{p}{,} \PY{n}{lhs}\PY{p}{,} \PY{n}{rhs}\PY{p}{)}\PY{p}{:}
        \PY{n+nb+bp}{self}\PY{o}{.}\PY{n}{lhs} \PY{o}{=} \PY{n}{rules\PYZus{}to\PYZus{}s\PYZus{}expr}\PY{p}{(}\PY{n}{lhs}\PY{p}{)}
        \PY{n+nb+bp}{self}\PY{o}{.}\PY{n}{rhs} \PY{o}{=} \PY{n}{rules\PYZus{}to\PYZus{}s\PYZus{}expr}\PY{p}{(}\PY{n}{rhs}\PY{p}{)}
\end{Verbatim}

Each rule encodes one direction of applying a mathematical identity. Prior research~\citep{DBLP:conf/gecco/olivetti25} has primarily used rewrite rules to simplify expressions, typically rewriting a longer expression into a shorter one. In contrast, our work considers both directions of each identity, enabling us to systematically generate the complete set of symbolically equivalent expressions.

\begin{table}[!ht]
\centering
\caption{The list of rewrite rules for trigonometric functions, each derived from a known trigonometric law.}
\label{tab:trig_identities}
\begin{tabular}{c|l}
\hline
Category & rewrite rules \\ \hline
\multirow{2}{*}{Sum-to-Product Identities} & $\sin(a) + \sin(b)\leadsto2 \sin\left(\frac{a+b}{2}\right) \cos\left(\frac{a-b}{2}\right)$ \\
& $\cos(a) + \cos(b)\leadsto2 \cos\left(\frac{a+b}{2}\right) \cos\left(\frac{a-b}{2}\right)$ \\
\hline
\multirow{3}{*}{{Product-to-Sum Identities}} & $\sin(a) \cos(b)\leadsto\frac{1}{2} \left[ \sin(a+b) + \sin(a-b) \right]$ \\
& $\cos(a) \cos(b)\leadsto\frac{1}{2} \left[ \cos(a+b) + \cos(a-b) \right]$ \\
& $\sin(a) \sin(b)\leadsto\frac{1}{2} \left[ \cos(a-b) - \cos(a+b) \right]$ \\
\hline
\multirow{4}{*}{{Double Angle Formulas}} & $\sin(a+a)\leadsto2 \sin(a) \cos(a)$ \\
& $\cos(a+a)\leadsto\cos^2(a) - \sin^2(a)$ \\
& $\cos(a+a)\leadsto2 \cos^2(a) - 1$ \\
& $\cos(a+a)\leadsto1 - 2 \sin^2(a)$ \\
& $\tan(a+a)\leadsto\frac{2 \tan(a)}{1 - \tan^2(a)}$ \\
\hline
\multirow{3}{*}{{Pythagorean Identities}} & $\sin^2(a) + \cos^2(a)\leadsto1$ \\
& $\sec^2(a) - \tan^2(a)\leadsto 1$ \\
& $\csc^2(a) - \cot^2(a)\leadsto 1$ \\
\hline
\multirow{3}{*}{{Half-Angle Formulas}}  
& $\sin^2\left(\frac{a}{2}\right) \leadsto \frac{1 -\cos(a)}{2}$ \\
& $\cos^2\left(\frac{a}{2}\right) \leadsto \frac{1 + \cos(a)}{2}$   \\
&  $\tan^2\left(\frac{a}{2}\right) \leadsto \frac{1 -\cos(a)}{1 + \cos(x)}.$\\
\hline
\multirow{3}{*}{{Sum and Difference Formulas}} & $\sin(a \pm b)\leadsto\sin(a) \cos(b) \pm \cos(a) \sin(b)$ \\
& $\cos(a \pm b)\leadsto\cos(a) \cos(b) \mp \sin(a) \sin(b)$ \\
& $\tan(a \pm b)\leadsto\left(\tan(a) \pm \tan(b)\right)/\left(1 \mp \tan(a) \tan(b)\right)$ \\
\hline
\end{tabular}
\end{table}

We define several types of rewrite rules based on well-known mathematical identities: Let $a, b$ denote arbitrary variables, coefficients, or sub-expressions.
\begin{enumerate}[labelindent=0pt,labelwidth=0pt, leftmargin=12pt]
    \item Commutative law.  $a+b = b+a$ or $a*b=b*a$ will be converted into rules: $a + b \leadsto b +a,a*b\leadsto b*a$.

\begin{Verbatim}[commandchars=\\\{\},numbers=left,firstnumber=1,stepnumber=1]
\PY{c+c1}{\PYZsh{} Commutative laws}
\PY{n}{Rule}\PY{p}{(}\PY{n}{lhs}\PY{o}{=}\PY{p}{[}\PY{l+s+s1}{\PYZsq{}}\PY{l+s+s1}{A\PYZhy{}\PYZgt{}(A+A)}\PY{l+s+s1}{\PYZsq{}}\PY{p}{,} \PY{l+s+s1}{\PYZsq{}}\PY{l+s+s1}{A\PYZhy{}\PYZgt{}a}\PY{l+s+s1}{\PYZsq{}}\PY{p}{,} \PY{l+s+s1}{\PYZsq{}}\PY{l+s+s1}{A\PYZhy{}\PYZgt{}b}\PY{l+s+s1}{\PYZsq{}}\PY{p}{]}\PY{p}{,} \PY{n}{rhs}\PY{o}{=}\PY{p}{[}\PY{l+s+s1}{\PYZsq{}}\PY{l+s+s1}{A\PYZhy{}\PYZgt{}(A+A)}\PY{l+s+s1}{\PYZsq{}}\PY{p}{,} \PY{l+s+s1}{\PYZsq{}}\PY{l+s+s1}{A\PYZhy{}\PYZgt{}b}\PY{l+s+s1}{\PYZsq{}}\PY{p}{,} \PY{l+s+s1}{\PYZsq{}}\PY{l+s+s1}{A\PYZhy{}\PYZgt{}a}\PY{l+s+s1}{\PYZsq{}}\PY{p}{]}\PY{p}{)}  
\PY{n}{Rule}\PY{p}{(}\PY{n}{lhs}\PY{o}{=}\PY{p}{[}\PY{l+s+s1}{\PYZsq{}}\PY{l+s+s1}{A\PYZhy{}\PYZgt{}A*A}\PY{l+s+s1}{\PYZsq{}}\PY{p}{,} \PY{l+s+s1}{\PYZsq{}}\PY{l+s+s1}{A\PYZhy{}\PYZgt{}a}\PY{l+s+s1}{\PYZsq{}}\PY{p}{,} \PY{l+s+s1}{\PYZsq{}}\PY{l+s+s1}{A\PYZhy{}\PYZgt{}b}\PY{l+s+s1}{\PYZsq{}}\PY{p}{]}\PY{p}{,} \PY{n}{rhs}\PY{o}{=}\PY{p}{[}\PY{l+s+s1}{\PYZsq{}}\PY{l+s+s1}{A\PYZhy{}\PYZgt{}A*A}\PY{l+s+s1}{\PYZsq{}}\PY{p}{,} \PY{l+s+s1}{\PYZsq{}}\PY{l+s+s1}{A\PYZhy{}\PYZgt{}b}\PY{l+s+s1}{\PYZsq{}}\PY{p}{,} \PY{l+s+s1}{\PYZsq{}}\PY{l+s+s1}{A\PYZhy{}\PYZgt{}a}\PY{l+s+s1}{\PYZsq{}}\PY{p}{]}\PY{p}{)}  
\end{Verbatim}

    \item  Distributive laws: $(a + b)^2 \leadsto  a^2 + 2ab + b^2$.
    \item Factorization. Expressions can be decomposed into simpler components: \( a^2- b^2 \leadsto  (a-b)(a + b) \).
    \item Exponential and logarithmic identities: $ \exp(a + b)\leadsto\exp(a) \times \exp(b)$, $ \log(a b)\leadsto\log(a) + \log(b)$,  and also $\log(a^b)\leadsto b\log a $,
 \item The rewrite rules for trigonometric identities are summarized in Table~\ref{tab:trig_identities}. 
 For example, The rule $\sin(a+b) \leadsto \sin(a)\cos(b)+ \sin(b)\cos(a)$ is implemented as follow:

\begin{Verbatim}[commandchars=\\\{\},numbers=left,firstnumber=1,stepnumber=1]
\PY{n}{Rule}\PY{p}{(}
\PY{n}{lhs}\PY{o}{=}\PY{p}{[}\PY{l+s+s1}{\PYZsq{}}\PY{l+s+s1}{A\PYZhy{}\PYZgt{}sin(A)}\PY{l+s+s1}{\PYZsq{}}\PY{p}{,}\PY{l+s+s1}{\PYZsq{}}\PY{l+s+s1}{A\PYZhy{}\PYZgt{}(A+A)}\PY{l+s+s1}{\PYZsq{}}\PY{p}{,} \PY{l+s+s1}{\PYZsq{}}\PY{l+s+s1}{A\PYZhy{}\PYZgt{}a}\PY{l+s+s1}{\PYZsq{}}\PY{p}{,} \PY{l+s+s1}{\PYZsq{}}\PY{l+s+s1}{A\PYZhy{}\PYZgt{}b}\PY{l+s+s1}{\PYZsq{}}\PY{p}{]}\PY{p}{,} 
\PY{n}{rhs}\PY{o}{=}\PY{p}{[}\PY{l+s+s1}{\PYZsq{}}\PY{l+s+s1}{A\PYZhy{}\PYZgt{}(A+A)}\PY{l+s+s1}{\PYZsq{}}\PY{p}{,} \PY{l+s+s1}{\PYZsq{}}\PY{l+s+s1}{A\PYZhy{}\PYZgt{}A*A}\PY{l+s+s1}{\PYZsq{}}\PY{p}{,} \PY{l+s+s1}{\PYZsq{}}\PY{l+s+s1}{A\PYZhy{}\PYZgt{}sin(A)}\PY{l+s+s1}{\PYZsq{}}\PY{p}{,}\PY{l+s+s1}{\PYZsq{}}\PY{l+s+s1}{A\PYZhy{}\PYZgt{}a}\PY{l+s+s1}{\PYZsq{}}\PY{p}{,}\PY{l+s+s1}{\PYZsq{}}\PY{l+s+s1}{A\PYZhy{}\PYZgt{}cos(A)}\PY{l+s+s1}{\PYZsq{}}\PY{p}{,} \PY{l+s+s1}{\PYZsq{}}\PY{l+s+s1}{A\PYZhy{}\PYZgt{}b}\PY{l+s+s1}{\PYZsq{}}\PY{p}{,} 
                           \PY{l+s+s1}{\PYZsq{}}\PY{l+s+s1}{A\PYZhy{}\PYZgt{}cos(A)}\PY{l+s+s1}{\PYZsq{}}\PY{p}{,}\PY{l+s+s1}{\PYZsq{}}\PY{l+s+s1}{A\PYZhy{}\PYZgt{}a}\PY{l+s+s1}{\PYZsq{}}\PY{p}{,}\PY{l+s+s1}{\PYZsq{}}\PY{l+s+s1}{A\PYZhy{}\PYZgt{}sin(A)}\PY{l+s+s1}{\PYZsq{}}\PY{p}{,} \PY{l+s+s1}{\PYZsq{}}\PY{l+s+s1}{A\PYZhy{}\PYZgt{}b}\PY{l+s+s1}{\PYZsq{}}\PY{p}{]}\PY{p}{)}  
\end{Verbatim}

A visualization of this rule being applied to an expression is shown in Figure~\ref{fig:rewrite-trig}.

\end{enumerate}

For practical reasons, we did not introduce rewrite rules that could make the e-graph exponentially large. Rules such as $a + b - b \leadsto a - b + b$ are omitted, since they lead to infinite symbolic variants. Such extreme cases are still beyond the capability of the current \method module.

Note that each rewrite rule is well-defined only on its feasible domain and is not applicable in out-of-domain cases. In our symbolic regression setting, such cases typically indicate that the completed expression does not fit the training data and will trigger numerical errors during evaluation. For example, the rule $\log(a\times b)\leadsto \log(a) + \log(b)$ is valid when $a,b>0$, but is undefined when $a<0$ or $b<0$.  The expressions $\log(x_1^3)+\log(x_2^2)$ and $\log(x_1^3 x_2^2)$ evaluated on datasets containing negative inputs will yield $-\infty$ in floating-point arithmetic, causing the NMSE evaluation to become Not-a-Number (NaN). This indicates that such candidate expressions are incompatible with the dataset. Systematically encoding and enforcing domain constraints for all rewrite rules requires a careful study of the underlying mathematical identities, and we leave this as an interesting direction for future work.

\subsection{Implementation of E-Graph}
Here we provide the implementation of the \method data structure proposed in Section~\ref{sec:egg-define}.
An e-graph consists of two core components: \textit{e-nodes} and \textit{e-classes}. The following snippet presents the skeleton structure of these two components:
\begin{Verbatim}[commandchars=\\\{\},numbers=left,firstnumber=1,stepnumber=1]
\PY{k}{class}\PY{+w}{ }\PY{n+nc}{ENode}\PY{p}{:}
    \PY{k}{def}\PY{+w}{ }\PY{n+nf+fm}{\PYZus{}\PYZus{}init\PYZus{}\PYZus{}}\PY{p}{(}\PY{n+nb+bp}{self}\PY{p}{,} \PY{n}{operator}\PY{p}{:} \PY{n+nb}{str}\PY{p}{,} \PY{n}{operands}\PY{p}{:} \PY{n}{List}\PY{p}{)}\PY{p}{:}
        \PY{n+nb+bp}{self}\PY{o}{.}\PY{n}{operator}\PY{p}{,} \PY{n+nb+bp}{self}\PY{o}{.}\PY{n}{operands} \PY{o}{=} \PY{n}{operator}\PY{p}{,} \PY{n}{operands}

    \PY{k}{def}\PY{+w}{ }\PY{n+nf}{canonicalize}\PY{p}{(}\PY{n+nb+bp}{self}\PY{p}{)}\PY{p}{:}
        \PY{c+c1}{\PYZsh{} Convert to the canonical form of this ENode}

\PY{k}{class}\PY{+w}{ }\PY{n+nc}{EClassID}\PY{p}{:}
    \PY{k}{def}\PY{+w}{ }\PY{n+nf+fm}{\PYZus{}\PYZus{}init\PYZus{}\PYZus{}}\PY{p}{(}\PY{n+nb+bp}{self}\PY{p}{,} \PY{n+nb}{id}\PY{p}{)}\PY{p}{:}
        \PY{n+nb+bp}{self}\PY{o}{.}\PY{n}{id} \PY{o}{=} \PY{n+nb}{id}  
        \PY{n+nb+bp}{self}\PY{o}{.}\PY{n}{parent} \PY{o}{=} \PY{k+kc}{None}
    \PY{k}{def}\PY{+w}{ }\PY{n+nf}{find\PYZus{}parent}\PY{p}{(}\PY{n+nb+bp}{self}\PY{p}{)}\PY{p}{:}
        \PY{c+c1}{\PYZsh{} Return the representative parent ID (using union\PYZhy{}find algorithm)}
    \PY{k}{def}\PY{+w}{ }\PY{n+nf+fm}{\PYZus{}\PYZus{}repr\PYZus{}\PYZus{}}\PY{p}{(}\PY{n+nb+bp}{self}\PY{p}{)}\PY{p}{:}
        \PY{k}{return} \PY{l+s+s1}{\PYZsq{}}\PY{l+s+s1}{e\PYZhy{}class}\PY{l+s+si}{\PYZob{}\PYZcb{}}\PY{l+s+s1}{\PYZsq{}}\PY{o}{.}\PY{n}{format}\PY{p}{(}\PY{n+nb+bp}{self}\PY{o}{.}\PY{n}{id}\PY{p}{)}
\end{Verbatim}

The e-graph structure is implemented in the following class. The dictionary \texttt{hashcons} stores all the edges from an e-node to an e-class. The API supports the key operations of (1) adding new expressions to the e-graph by calling \texttt{add\_enode}, and (2) merging two e-classes into one by calling $\texttt{merge}$.

\begin{Verbatim}[commandchars=\\\{\},numbers=left,firstnumber=1,stepnumber=1]
\PY{k}{class}\PY{+w}{ }\PY{n+nc}{EGraph}\PY{p}{:}
    \PY{c+c1}{\PYZsh{} hashcons: stores a mapping from ENode to its corresponding EClass}
    \PY{n}{hashcons}\PY{p}{:} \PY{n}{Dict}\PY{p}{[}\PY{n}{ENode}\PY{p}{,} \PY{n}{EClassID}\PY{p}{]} \PY{o}{=} \PY{p}{\PYZob{}}\PY{p}{\PYZcb{}}

    \PY{k}{def}\PY{+w}{ }\PY{n+nf}{add\PYZus{}enode}\PY{p}{(}\PY{n+nb+bp}{self}\PY{p}{,} \PY{n}{enode}\PY{p}{:} \PY{n}{ENode}\PY{p}{)}\PY{p}{:}
        \PY{c+c1}{\PYZsh{} Add an ENode into the e\PYZhy{}graph}

    \PY{k}{def}\PY{+w}{ }\PY{n+nf}{eclasses}\PY{p}{(}\PY{n+nb+bp}{self}\PY{p}{)} \PY{o}{\PYZhy{}}\PY{o}{\PYZgt{}} \PY{n}{Dict}\PY{p}{[}\PY{n}{EClassID}\PY{p}{,} \PY{n}{List}\PY{p}{[}\PY{n}{ENode}\PY{p}{]}\PY{p}{]}\PY{p}{:}
        \PY{c+c1}{\PYZsh{} Return all e\PYZhy{}classes and their associated e\PYZhy{}nodes}

    \PY{k}{def}\PY{+w}{ }\PY{n+nf}{merge}\PY{p}{(}\PY{n+nb+bp}{self}\PY{p}{,} \PY{n}{a}\PY{p}{:} \PY{n}{EClassID}\PY{p}{,} \PY{n}{b}\PY{p}{:} \PY{n}{EClassID}\PY{p}{)}\PY{p}{:}
        \PY{c+c1}{\PYZsh{} Merge two e\PYZhy{}classes using the union\PYZhy{}find structure}

    \PY{k}{def}\PY{+w}{ }\PY{n+nf}{substitute}\PY{p}{(}\PY{n+nb+bp}{self}\PY{p}{,} \PY{n}{pattern}\PY{p}{:} \PY{n}{Node}\PY{p}{,} \PY{o}{.}\PY{o}{.}\PY{o}{.}\PY{p}{)} \PY{o}{\PYZhy{}}\PY{o}{\PYZgt{}} \PY{n}{EClassID}\PY{p}{:}
        \PY{c+c1}{\PYZsh{} Construct a new expression in the e\PYZhy{}graph from a rule RHS}
\end{Verbatim}

\subsubsection{Construction} The e-graph is built by repeatedly applying a set of rewrite rules to the e-graph. 
Each rewrite rule consists of a left-hand side (LHS) and a right-hand side (RHS) expression. The function \texttt{apply\_rewrite\_rules} determines which new subgraphs to construct and where to merge them within the existing e-graph.

Specifically, the LHS pattern of each rule is matched against the e-graph using the \texttt{ematch} function, which identifies all subexpressions that match the pattern. For each match, the RHS is instantiated using the extracted variable bindings, and the resulting expression is inserted into the e-graph and merged with the corresponding e-class. The \texttt{equality\_saturation} function invokes \texttt{apply\_rewrite\_rules} for a fixed number of iterations until saturation is reached.

\begin{Verbatim}[commandchars=\\\{\},numbers=left,firstnumber=1,stepnumber=1]
\PY{k}{def}\PY{+w}{ }\PY{n+nf}{apply\PYZus{}rewrite\PYZus{}rules}\PY{p}{(}\PY{n}{eg}\PY{p}{:} \PY{n}{EGraph}\PY{p}{,} \PY{n}{rules}\PY{p}{:} \PY{n}{List}\PY{p}{[}\PY{n}{Rule}\PY{p}{]}\PY{p}{)}\PY{p}{:}
    \PY{n}{eclasses} \PY{o}{=} \PY{n}{eg}\PY{o}{.}\PY{n}{eclasses}\PY{p}{(}\PY{p}{)}
    \PY{n}{matches} \PY{o}{=} \PY{p}{[}\PY{p}{]}
    \PY{k}{for} \PY{n}{rule} \PY{o+ow}{in} \PY{n}{rules}\PY{p}{:}
        \PY{k}{for} \PY{n}{eid}\PY{p}{,} \PY{n}{env} \PY{o+ow}{in} \PY{n}{ematch}\PY{p}{(}\PY{n}{eclasses}\PY{p}{,} \PY{n}{rule}\PY{o}{.}\PY{n}{lhs}\PY{p}{)}\PY{p}{:}
            \PY{n}{matches}\PY{o}{.}\PY{n}{append}\PY{p}{(}\PY{p}{(}\PY{n}{rule}\PY{p}{,} \PY{n}{eid}\PY{p}{,} \PY{n}{env}\PY{p}{)}\PY{p}{)}
    \PY{k}{for} \PY{n}{rule}\PY{p}{,} \PY{n}{eid}\PY{p}{,} \PY{n}{env} \PY{o+ow}{in} \PY{n}{matches}\PY{p}{:}
        \PY{n}{new\PYZus{}eid} \PY{o}{=} \PY{n}{eg}\PY{o}{.}\PY{n}{substitute}\PY{p}{(}\PY{n}{rule}\PY{o}{.}\PY{n}{rhs}\PY{p}{,} \PY{n}{env}\PY{p}{)}
        \PY{n}{eg}\PY{o}{.}\PY{n}{merge}\PY{p}{(}\PY{n}{eid}\PY{p}{,} \PY{n}{new\PYZus{}eid}\PY{p}{)}
    \PY{n}{eg}\PY{o}{.}\PY{n}{rebuild}\PY{p}{(}\PY{p}{)}

\PY{k}{def}\PY{+w}{ }\PY{n+nf}{equality\PYZus{}saturation}\PY{p}{(}\PY{n}{fn}\PY{p}{,} \PY{n}{rules}\PY{p}{,} \PY{n}{max\PYZus{}iter}\PY{o}{=}\PY{l+m+mi}{20}\PY{p}{)}\PY{p}{:}
    \PY{n}{eg} \PY{o}{=} \PY{n}{EGraph}\PY{p}{(}\PY{n}{rules\PYZus{}to\PYZus{}s\PYZus{}expr}\PY{p}{(}\PY{n}{fn}\PY{p}{)}\PY{p}{)}
    \PY{k}{for} \PY{n}{it} \PY{o+ow}{in} \PY{n+nb}{range}\PY{p}{(}\PY{n}{max\PYZus{}iter}\PY{p}{)}\PY{p}{:}
        \PY{n}{apply\PYZus{}rewrite\PYZus{}rules}\PY{p}{(}\PY{n}{eg}\PY{p}{,} \PY{n}{rules}\PY{p}{)}
\end{Verbatim}

We employ the Graphviz API to visualize the e-classes, e-nodes, and their connections.  Several visualization example is shown in section~\ref{apx:exp-extra}.

\subsubsection{Expression Extraction} \label{apx:extract-egraph}
\textbf{Cost-based extraction} retrieves a simplified expression by minimizing a user-defined cost over operators. The procedure returns an expression tree with the lowest total cost, where the cost is computed across all operators. Each e-class is assigned a cost equal to that of its cheapest e-node, and the cost of an e-node is defined recursively as
\begin{equation}
\text{cost(enode)} = \text{cost(operator)} + \sum_{\text{child e-classes}} \text{cost(child)}.
\end{equation}
The algorithm iteratively updates costs for all e-nodes until convergence, i.e., when no further changes occur.  \nj{This strategy yields short, simplified expressions and has been used in prior work~\citep{de2025equality,rEGGression}, which employed e-graphs to simplify overly complex expressions in genetic programming. It is consistent with the philosophical principle of Occam's razor, which favors simpler expressions over more complex ones.}

% (1) Cost-based extraction, which retrieves a simplified expression by minimizing a user-defined cost function over operators; 

\textbf{Random walk-based sampling}, which draws a batch of expressions randomly from the e-graph. Starting from an e-class with no incoming edges, we randomly select an e-node within the current e-class and transition to the e-classes connected to that e-node. This process continues until an e-node with no outgoing edges is reached. The obtained sequence of visited e-nodes corresponds to drawing a valid expression from the constructed e-graph.

\subsubsection{Implementation Remark}

Existing e-graph implementations are available in Rust (egg library,~\cite{DBLP:journals/pacmpl/WillseyNWFTP21}), Haskell (Hegg or srtree library), and Julia (Metatheory.jl), or wrappers for Python (egglog-python,~\cite{shanabrook2024egglogpythonpythoniclibrary,rEGGression}). While these frameworks offer rich APIs for diverse use cases, none is directly tailored to \textit{grammar-based} symbolic expressions.
% We therefore implement \method in pure Python,  following the algorithmic pipeline of~\cite{DBLP:journals/pacmpl/WillseyNWFTP21}.

While packages such as \texttt{SymPy} provide APIs like \texttt{simplify} and \texttt{factor}, these functions map expressions to a fixed canonical form and do not support the richer class of rewrites and transformations. These APIs can return a smaller set of symbolically equivalent expressions than our \method framework. Hence, they are not used in our implementation.

Our implementation builds upon an existing code snippet\footnote{\url{https://colab.research.google.com/drive/1tNOQijJqe5tw-Pk9iqd6HHb2abC5aRid}}. 
The most relevant prior implementation we identified is available in Haskell\footnote{\url{https://github.com/folivetti/eggp}}, the language in which e-graphs were originally developed. However, this choice complicates integration with Python-based environments, especially when interfacing with learning algorithms implemented in Python. A Python wrapper for Haskell egglog also exists\footnote{\url{https://github.com/egraphs-good/egglog-python}}, but its API is cumbersome and lacks clarity, limiting its usability for practical applications.

% \begin{algorithm*}[!ht]
%    \caption{Symbolic Regression with Equality Graph (\method-SR)}\label{alg:main-pipeline}
%    \begin{algorithmic}[1]
%    \Require Number of input variables $n$; set of mathematical operators $O_p$; set of mathematical identities $\mathcal{I}$; training data $D$; symbolic regressor.
%    \Ensure Best predicted expression $\phi$.
%    \State Initialize the set of candidate expressions: $\mathcal{Q} = \emptyset$
%    \State Construct grammar rules from $O_p$ and variables $\{x_1, \ldots, x_n\}$
%    \State Construct rewrite rules $\mathcal{R}_e$ from mathematical identities $\mathcal{I}$
%    \For{$k \gets 1$ to $\mathtt{\#epochs}$}
%        \State Sample a sequence $[s_1, \ldots, s_N]$ from the symbolic regressor
%        \State Construct expressions $\phi_i$ from  $[s_1, \ldots, s_N]$ by grammar definition
%        \State Fit coefficients in expression: $c_i \gets \mathtt{BFGS}(\phi_i, D)$
%        \State Store tuple $\langle c_i, \phi_i \rangle$ in $\mathcal{Q}$
%        \State Insert $\phi_i$ into the e-graph using rewrite rules $\mathcal{R}_e$
%        \State Extract equivalent sequences from the e-graph
%        \State Compute equivalence-aware loss to update the symbolic regressor
%    \EndFor
%    \State \Return The best expression from $\mathcal{Q}$
%    \end{algorithmic}
% \end{algorithm*}

\newpage
\section{Theoretical Justification}
\subsection{Necessary Definitions of Markov Decision Process.}
Consider a deterministic, finite-horizon Markov Decision Process (MDP) with
state space $\mathcal{S}$ and action space $\mathcal{A}$.
At each stage $t \in \{0,1,\dots,H-1\}$ the agent observes its current state
$s_t \in \mathcal{S}$, selects an action $a_t \in \mathcal{A}$,
and deterministically transitions to the next state
$s_{t+1} = f(s_t,a_t)$ while receiving a bounded reward
$r_t \in [0,1]$. Let $H$ denote the maximum planning horizon.  
The total discounted reward over the $H$-step horizon is $\sum_{t=0}^{H-1} \gamma^t r_t $, where $\gamma \in (0,1)$ is the discount factor.  For any state--action pair $(s,a)$, the \emph{state--action value} is defined as:
\begin{align*}
    Q(s,a)\coloneqq \max_{\pi} \mathbb{E}_{\tau \sim \pi} \!\left[ \sum_{t=0}^{H-1} \gamma^t r_t \middle| s_0 = s,\, a_0 = a \right]
\end{align*}
where $\pi$ is any policy and
$\tau = (s_0,a_0,\dots,s_{H-1},a_{H-1})$ is a trajectory generated by following $\pi$.
Because the MDP is deterministic, the expectation is taken only over the agent’s
policy-induced action sequence.
The \emph{optimal value} of a state $s$ is:
\begin{align*}
V(s) \coloneqq  \max_{a \in \mathcal{A}} Q(s,a).
\end{align*}

\noindent\textbf{Extensions to Symbolic Regression.}
In our symbolic regression setting, a state corresponds to a sequence of
production rules that define a partially constructed expression,
or equivalently, to the class of mathematical expressions that can be
generated by completing this partially-complete expression.
An action represents the selection of an available production rule that
extends the current expression. The state space is the set of all possible expressions of maximum length $H$, and the action space is the set of production rules. 

We further provide concrete instantiations of this MDP for our Monte Carlo tree
search implementation in Appendix~\ref{sec:mcts-implement} and for deep
reinforcement learning in Appendix~\ref{sec:drl-implement}.

\subsection{Proof of Theorem~\ref{thm:regret-mcts-main}} \label{apx:proof-mcts}
\subsubsection{Problem settings}

A \emph{planning algorithm} is any procedure that, given a model of the MDP
and a limited computational budget, uses simulated rollouts to estimate the value
of available actions and to recommend a single action for execution.
Suppose that after $n$ simulations of a planning algorithm, the agent recommends an
action $a_n$ to execute at the current state $s$.
Its estimated value is $Q(s,a_n)$, obtained by evaluating the discounted return
that results from first taking $a_n$ and then following an optimal policy
for the remaining steps.
The \emph{simple regret} of this recommendation is
\begin{equation}
\mathtt{regret}(n) \coloneqq V(s) - Q(s,a_n),
\end{equation}
which quantifies the \textit{loss in discounted return} incurred by choosing $a_n$
instead of the truly optimal action.
A smaller $\mathtt{regret}(n)$ indicates that the planning algorithm used its limited simulation
Budget more effectively to identify a near-optimal action.

Planning involves selecting promising transitions to simulate at each iteration, in order to recommend actions that minimize regret $\mathtt{regret}(n)$.  A popular strategy is the \emph{optimism in the face of uncertainty} (OFU) principle, which explores actions by maximizing an upper confidence bound on the value function $V$.  

In the context of symbolic regression, the planning task is therefore to sequentially apply production rules
so as to construct a complete expression whose predicted outputs best
match the target data, while respecting the grammar and structural
constraints of the search space.

\begin{definition}[Difficulty measure] \label{def:diff-m}
The \emph{near-optimal branching factor} $\kappa$ of an MDP is defined as
\begin{equation}
    \kappa \coloneqq \limsup_{h \to \infty}  |T_h^\infty|^{1/h} \in [1,K],
\end{equation}
where $T_h^\infty \coloneqq \Big\{ a \in \mathcal{A}^h : V^* - V(a) \le \tfrac{\gamma^h}{1-\gamma} \Big\}$ is the set of near-optimal nodes at depth $h$.
\end{definition}

In standard MCTS, the root of the search tree corresponds to the initial state $s_0 \in S$.  
A node at depth $h$ represents an action sequence $a=(a_1,\ldots,a_h)\in \mathcal{A}^h$, which leads to a terminal state $s(a)$. The search tree at iteration $n$ is denoted $\mathtt{tree}(T_n)$, and its maximum expanded depth is $d_n$.

Historical research has shown that the UCT principle suffers from a theoretical worst case in which it is difficult to derive meaningful regret bounds, making it unsuitable for analyzing the benefit of \method over standard MCTS. Therefore, we assume that MCTS operates under the Optimistic Planning for Deterministic Systems (OPD) framework~\citep{DBLP:conf/acml/LeurentM20} rather than the UCT principle.

\subsubsection{Regret bound of MCTS}
\begin{theorem}[Regret bound of MCTS (\citet{DBLP:journals/ftml/Munos14}, chapter 5)]
\label{thm:regret_mcts}
Let $\mathtt{regret}(n)$ denote the simple regret after $n$ iterations of OPD used in MCTS. Then, we have:
\begin{equation}
    \mathtt{regret}(n) = \widetilde{\mathcal{O}}\!\left(n^{-\frac{\log(1/\gamma)}{\log \kappa}}\right).
\end{equation}
Here, the soft-O version of big-O notation for time complexity is defined as: $f_n = \widetilde{\mathcal{O}}(n^{-\alpha})$ means that decays at least as fast as $n^{-\alpha}$, up to a polylog factor.
\end{theorem}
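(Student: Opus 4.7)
The plan is to reproduce the classical OPD analysis of \citet{DBLP:journals/ftml/Munos14} (Chapter~5) and make explicit how the near-optimal branching factor $\kappa$ enters the exponent. The core of the argument is a two-step reduction: (i) bound the depth $d_T$ that OPD reaches after $T$ expansions in terms of the near-optimal set sizes $|T_h^\infty|$, and (ii) translate that depth into a regret bound via the discount factor, using that a fully explored subtree of depth $d$ gives regret of order $\gamma^d/(1-\gamma)$.

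First, I would recall the OPD upper-confidence construction. For every node $a\in\mathcal{A}^h$ in the current tree, OPD maintains the optimistic value
\[
B(a) \;=\; \sum_{t=0}^{h-1} \gamma^t r_t(a) \;+\; \frac{\gamma^h}{1-\gamma},
\]
where the second term bounds the remaining discounted return under $r_t\in[0,1]$. By construction $B(a)\ge V(a)$, and $B(a)\ge V^*$ whenever $a$ is a prefix of an optimal trajectory; OPD always expands a node with the largest $B$-value. Combining these two facts, any expanded node at depth $h$ must satisfy $V^*-V(a)\le \gamma^h/(1-\gamma)$, i.e.\ it belongs to the near-optimal set $T_h^\infty$ of Definition~\ref{def:diff-m}. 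This ``only near-optimal nodes are expanded'' lemma is the conceptual heart of the argument.

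Second, I would count expansions by depth. If $n_h$ denotes the number of OPD expansions at depth $h$ after $T$ iterations, the previous step yields $n_h\le |T_h^\infty|$. The $\limsup$ form of $\kappa$ means that for every $\varepsilon>0$ there is a constant $C_\varepsilon$ with $|T_h^\infty|\le C_\varepsilon(\kappa+\varepsilon)^h$, so
\[
T \;=\; \sum_{h=0}^{d_T} n_h \;\le\; C_\varepsilon\sum_{h=0}^{d_T}(\kappa+\varepsilon)^h,
\]
which inverts to $d_T \ge \log T/\log(\kappa+\varepsilon) - O_\varepsilon(1)$. The recommended action at termination lies on a branch explored to depth $d_T$, and the standard OPD analysis gives $\mathtt{regret}(T)\le \gamma^{d_T}/(1-\gamma)$; substituting the depth lower bound produces the target rate $\mathtt{regret}(T)=O_\varepsilon(T^{-\log(1/\gamma)/\log(\kappa+\varepsilon)})$.

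The main obstacle I expect is turning the $\limsup$ definition of $\kappa$ into a uniform exponential bound and then eliminating the $\varepsilon$ from the exponent. I would handle this by proving the estimate above for every $\varepsilon>0$ and then absorbing the $\varepsilon$-dependent prefactor $C_\varepsilon$, the $1/(1-\gamma)$ scaling, and the $O_\varepsilon(1)$ depth shift into the soft-O notation, which is precisely what the $\widetilde{\mathcal{O}}$ in the statement is designed to hide. A secondary technicality is to argue that OPD's recommended action, not only its deepest expansion, enjoys the $\gamma^{d_T}/(1-\gamma)$ slack; here I would invoke the standard observation that OPD returns the action leading to the current best optimistic leaf, whose optimism gap is itself at most $\gamma^{d_T}/(1-\gamma)$.
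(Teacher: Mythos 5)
Your proposal is correct and is essentially the argument the paper relies on: the paper states this theorem as a cited result (Munos 2014, Chapter~5) without reproving it, and your reconstruction --- the optimistic values $B(a)$, the lemma that only nodes in $T_h^\infty$ are ever expanded, the depth-counting bound $T \le \sum_h |T_h^\infty|$, and the conversion of reached depth into regret via $\gamma^{d_T}/(1-\gamma)$ --- is precisely the standard OPD analysis behind that citation. The one loose point, absorbing the $\varepsilon$ from the $\limsup$ definition of $\kappa$ into the final exponent, is a looseness already present in the theorem statement itself (and in the cited source's soft-$\mathcal{O}$ formulation), so your treatment matches the intended level of rigor.
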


When $\kappa$ is small, only a limited number of nodes must be explored at each depth,
allowing the algorithm to plan deeper, given a budget of $n$ simulations.
The quantity $\kappa$ represents the branching factor of the subtree of near-optimal nodes
that can be sampled by the OPD algorithm, serving as an \emph{effective} branching factor
in contrast to the true branching factor $K$.
Hence, $\kappa$ directly governs the achievable simple regret of OPD (and its variants) on a given MDP.

The theoretical analysis of \method-MCTS is built on top of the analysis procedure in \cite{DBLP:conf/acml/LeurentM20}.  
Unlike their analysis, which requires unrolling the graph into a tree and addressing potentially infinite-length paths induced by cycles, our analysis starts from the unrolled tree derived from the graph in~\cite{DBLP:conf/acml/LeurentM20}.

\begin{definition}[Upper and Lower bounds of the Value function]  
Let $\mathtt{tree}(T)$ be the search tree after $n$ expansions.
Each node corresponds to an action sequence $(a_0,a_1,\dots,a_{H-1})$ from the root, and let $s$ denote the MDP state reached by executing this sequence from the root state $s_0$.
The true value associated with this node is the optimal value of the reached state, denoted by $V(s)$. 
A pair of functions $(L_n,U_n)$ is said to provide
\emph{lower and upper bounds} for $V$ on $\mathtt{tree}(T)$ if
\[
   L_n \le V(s) \le U_n,
   \qquad \text{ for internal node $s$ in the } \mathtt{tree}(T)  
\]

\end{definition}

\begin{definition}[Finer Difficulty Measure]   \label{def:ref-diff-m}
We define the \emph{near-optimal branching factor} associated with bounds $(L_n,U_n)$ as
\begin{equation}
    \kappa(L_n,U_n) \coloneqq \limsup_{h \to \infty}  \big|T_h^\infty(L_n,U_n)\big|^{1/h} \in [1,K],
\end{equation}
where $T_h^\infty(L,U) \coloneqq \Big\{ a \in \mathcal{A}^h : V^* - V(a) \le \gamma^h \big(U(a)-L(a)\big) \Big\}$.
\end{definition}
Since $(L_n,U_n)$ tighten with $n$, the sequence of bounds $\bigl(L_n, U_n\bigr)_{n\ge 0}$ is non-increasing, i.e.,
\begin{align*}
0 \le \cdots \le L_{n-1} \le L_n \le V \le U_n \le U_{n-1} \le \cdots \le V_{\max}
\end{align*}
they will finally converges to a limit $\kappa_\infty=\lim_{n \to \infty} \kappa(L_n,U_n)$.

\subsubsection{Regret bound of \method-MCTS}

\begin{theorem}[Regret Bound of \method-MCTS]
Let $\kappa_n = \kappa(L_n,U_n)$, and define $\kappa_\infty = \lim_{n \to \infty} \kappa(L_n,U_n) \in [1,K]$.  
Then \method-MCTS achieves the regret bound
\begin{equation} \label{eq:regret}
    \mathtt{regret}_{\mathtt{egg}}(n) = \widetilde{\mathcal{O}}\!\left(n^{-\frac{\log(1/\gamma)}{\log \kappa_\infty}}\right),\qquad \text{ where } \kappa_\infty \le \kappa
\end{equation}
\end{theorem}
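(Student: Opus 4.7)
The plan is to reduce the analysis of \method-MCTS to the OPD-with-bounds framework of \cite{DBLP:conf/acml/LeurentM20} by interpreting the equivalence-aware backpropagation as an implicit graph aggregation. First I would formalise the map between e-classes and states of an aggregated MDP: whenever two paths in the search tree correspond to symbolically equivalent partial expressions (i.e.\ they lie in the same saturated e-class), they reach the same MDP state $s$, and hence share the same optimal value $V(s)$. Thus the search tree maintained by \method-MCTS can be viewed as a cover of an underlying graph whose nodes are e-classes, and whose edges are production-rule applications modulo rewriting. The first step is therefore a structural lemma stating that the \method-MCTS tree coincides (up to the order of expansions) with the \emph{unrolling} of this e-graph used in \cite[Thm.~16]{DBLP:conf/acml/LeurentM20}.

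Next I would construct the bound sequences $(L_T,U_T)$ that drive the refined difficulty measure $\kappa(L_T,U_T)$. At iteration $T$, define on each internal node $s$
\begin{equation*}
L_T(s) \;=\; \max_{s'\equiv_{\mathcal{R}} s} \widehat{L}_T(s'), \qquad U_T(s) \;=\; \min_{s'\equiv_{\mathcal{R}} s} \widehat{U}_T(s'),
\end{equation*}
where $\widehat{L}_T,\widehat{U}_T$ are the path-based bounds computed by rollouts, and the extrema are over the e-class of $s$. Because \method-MCTS backpropagates rewards and visits to all equivalent paths, these aggregated bounds are exactly the quantities maintained by the algorithm. I would then verify that (i) $L_T \le V \le U_T$ by induction on $T$, using $V(s')=V(s)$ for $s'\equiv_{\mathcal{R}} s$, and (ii) the sequences are monotone ($L_T$ non-decreasing, $U_T$ non-increasing), so that $\kappa_T=\kappa(L_T,U_T)$ converges to the limit $\kappa_\infty$.

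With these ingredients in place, I would directly import the regret analysis of OPD with bounds: the OFU step of \method-MCTS expands only nodes in the near-optimal set $T_h^\infty(L_T,U_T)$, whose cardinality grows at rate $\kappa_\infty^{\,h}$. A horizon-budget trade-off of the same form as in Theorem~\ref{thm:regret_mcts}, but with $\kappa$ replaced by $\kappa_\infty$, yields
\begin{equation*}
\mathtt{regret}_{\mathtt{egg}}(T) \;=\; \widetilde{\mathcal{O}}\!\left(n^{-\frac{\log(1/\gamma)}{\log \kappa_\infty}}\right).
\end{equation*}
The comparison $\kappa_\infty \le \kappa$ then follows from Definitions~\ref{def:diff-m} and~\ref{def:ref-diff-m}: setting $\widehat{L}_T\equiv 0$ and $\widehat{U}_T\equiv \tfrac{1}{1-\gamma}$ recovers the standard-MCTS branching factor $\kappa$, and since our $(L_T,U_T)$ are tighter for every $T$, the near-optimal sets shrink and $\kappa(L_T,U_T)\le \kappa$.

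The main obstacle will be the structural lemma linking \method-MCTS to the unrolled e-graph. In \cite{DBLP:conf/acml/LeurentM20} the graph is assumed known in advance and unrolled once; in our setting the e-graph is built incrementally via equality saturation, and the rewrite system may not be confluent within a finite budget, so two syntactically distinct paths may be declared equivalent only after further saturation. I would handle this by showing that the extraction step samples a \emph{sound} subset of equivalences at every iteration (soundness: every extracted equivalence is a genuine semantic equality), which is enough to guarantee the validity of the aggregated bounds, while completeness only affects how fast $\kappa_T$ approaches $\kappa_\infty$. A careful treatment of this incremental saturation, together with a standard argument that the tail behaviour of the regret depends only on the limiting branching factor, will close the gap to the OPD analysis and yield the stated bound. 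The full details are deferred to Appendix~\ref{apx:proof-mcts}.
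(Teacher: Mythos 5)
Your proposal follows essentially the same route as the paper's proof: both reduce \method-MCTS to the unrolled-tree analysis of \citet[Theorem~16]{DBLP:conf/acml/LeurentM20} and transport the graph-based bounds $(L_T,U_T)$ to that tree to obtain the rate with $\kappa_\infty$ in place of $\kappa$. Your explicit construction of the aggregated bounds over e-classes, the monotonicity check, and the soundness-versus-completeness discussion of incremental saturation supply details that the paper's own sketch leaves implicit, but they do not change the underlying argument.
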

\begin{proof}
The analysis is similar to that of \citet[Theorem~16]{DBLP:conf/acml/LeurentM20}.  
In their approach, the graph is \emph{unrolled} into a tree to leverage the analysis of Theorem~\ref{thm:regret_mcts}.  
This unrolled tree contains every action sequence that can be traversed in $G_n$.  There would exist a path of infinite length if the graph contains cycles.
The behavior of the graph $G_n$ is therefore analyzed through its corresponding unrolled tree $\mathtt{UnrollTree}(T_n)$.

Our \method-MCTS behaves almost identically to this unrolled tree, except that it contains no paths of infinite length, since our algorithm cannot empirically generate or update along an infinite action sequence.
Transporting the graph-based value bounds $(L_n, U_n)$ to the unrolled tree
and applying the depth–regret argument of \citet[Appendix~A.9]{DBLP:conf/acml/LeurentM20}
yields the rate in~\eqref{eq:regret}.
The detailed derivation is therefore omitted here for brevity.
\end{proof}

\begin{remark}
For a class of symbolic regression problems where a large set of mathematical identities is applicable, the resulting overlaps among action paths can be extensive.  
In this case, the effective branching factor $\kappa_\infty$ can be much smaller than the nominal branching factor $\kappa$,  
leading to substantially tighter regret guarantees.
\end{remark}

\newpage

\subsection{Proof of Theorem~\ref{thm:unbias-variance}}\label{apx:proof-drl}

\subsubsection{Notation and Assumptions}
Let $\tau$ be a sequence of production rules sampled from a sequential decoder with probability $p_\theta(\tau)$.
$\phi$ is the expression constructed by $\tau$ following grammar definition.
Let $\mathcal{S}_\phi \subseteq \Pi$ be the equivalence class (under the rewrite system $\mathcal{R}$) of sequences that deterministically construct $\phi$ or can be rewritten into $\phi$ by applying rewrite rules in $\mathcal{R}$.
For  notation simplicity and clarity, we define the probability over the set of sequences $\mathcal{S}_{\phi}$
\begin{equation}
\label{eq:q-def-app}
q_\theta(\phi) \;\coloneqq\; \sum_{\tau \in \mathcal{S}_\phi} p_\theta(\tau).
\end{equation}
% The reward of the expression $\phi$ is defined as the goodness-of-fit on the dataset $D$.

\subsubsection{Proof of Unbiasedness}
\begin{lemma}[Key identity]
\label{lem:cond-exp}
For any $\phi \in \Phi$ with $q_\theta(\phi)>0$, $\mathbb{E}_{\tau \sim p_\theta}\left[\,\nabla_\theta \log p_\theta(\tau)| \phi \right]
=
\nabla_\theta \log q_\theta(\phi)$.
\end{lemma}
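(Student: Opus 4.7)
The plan is to unfold the conditional expectation using the definition of the equivalence class $\mathcal{S}_\phi$ and then apply the log-derivative trick ("REINFORCE identity") to convert a probability-weighted sum of gradients into a gradient of a sum, which is exactly $\nabla_\theta q_\theta(\phi)$ by the definition in~\eqref{eq:q-def-app}.

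First I would write out the conditional distribution of $\tau$ given that the constructed expression equals $\phi$. Since $\phi$ is a deterministic function of $\tau$ and $\mathcal{S}_\phi$ is precisely the preimage of $\phi$ under this map, conditioning on $\phi$ restricts $\tau$ to $\mathcal{S}_\phi$ and renormalizes. Using the assumption $q_\theta(\phi) > 0$, the conditional law is
\[
\Pr_{\tau \sim p_\theta}\!\left(\tau = \tau' \mid \phi\right) = \frac{p_\theta(\tau')}{q_\theta(\phi)}, \qquad \tau' \in \mathcal{S}_\phi.
\]
Substituting this into the conditional expectation gives
\[
\mathbb{E}_{\tau \sim p_\theta}\!\left[\nabla_\theta \log p_\theta(\tau) \mid \phi\right] = \sum_{\tau \in \mathcal{S}_\phi} \frac{p_\theta(\tau)}{q_\theta(\phi)}\, \nabla_\theta \log p_\theta(\tau).
\]

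Next I would apply the identity $p_\theta(\tau)\, \nabla_\theta \log p_\theta(\tau) = \nabla_\theta p_\theta(\tau)$ inside the sum, yielding
\[
\mathbb{E}_{\tau \sim p_\theta}\!\left[\nabla_\theta \log p_\theta(\tau) \mid \phi\right] = \frac{1}{q_\theta(\phi)} \sum_{\tau \in \mathcal{S}_\phi} \nabla_\theta p_\theta(\tau).
\]
Interchanging the gradient and the (finite, or dominated) summation produces $\nabla_\theta q_\theta(\phi)$ in the numerator, and a final application of $\nabla_\theta \log q_\theta(\phi) = \nabla_\theta q_\theta(\phi)/q_\theta(\phi)$ delivers the claimed identity.

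The only nontrivial step is the interchange of $\nabla_\theta$ and $\sum_{\tau \in \mathcal{S}_\phi}$, which I expect to be the main (though mild) obstacle. For symbolic regression with a bounded horizon $H$, $\mathcal{S}_\phi$ is finite and the swap is immediate; more generally, one appeals to dominated convergence, using that $\sum_\tau |\nabla_\theta p_\theta(\tau)|$ is controlled by standard smoothness assumptions on the decoder (e.g., $p_\theta$ differentiable in $\theta$ with a summable gradient envelope), which are already implicit in the policy-gradient framework of Equation~\ref{eq:p-estimator}.
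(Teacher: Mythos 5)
Your proposal is correct and follows essentially the same route as the paper's proof: unfold the conditional law $\Pr(\tau\mid\phi)=p_\theta(\tau)/q_\theta(\phi)$ on $\mathcal{S}_\phi$, apply the log-derivative identity, and pull $\nabla_\theta$ through the sum to recover $\nabla_\theta q_\theta(\phi)/q_\theta(\phi)$. Your added remark justifying the gradient--sum interchange (finiteness of $\mathcal{S}_\phi$ under a bounded horizon) is a minor refinement the paper leaves implicit.
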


\begin{proof}
By the definition of conditional probability, $\mathbb{P}(\tau | \phi) = \frac{p_\theta(\tau)}{q_\theta(\phi)}$,
for $\tau \in \mathcal{S}_\phi$ and $0$ otherwise.
Hence,
\begin{align*}
\mathbb{E}\;\left[\,\nabla_\theta \log p_\theta(\tau)\,|\, \phi \right]= \sum_{\tau \in \mathcal{S}_\phi} \nabla_\theta \log p_\theta(\tau)\; \frac{p_\theta(\tau)}{q_\theta(\phi)} = \frac{1}{q_\theta(\phi)} \sum_{\tau \in \mathcal{S}_\phi} \nabla_\theta p_\theta(\tau) &= \frac{1}{q_\theta(\phi)} \nabla_\theta \sum_{\tau \in \mathcal{S}_\phi} p_\theta(\tau) \\
&= \frac{\nabla_\theta q_\theta(\phi)}{q_\theta(\phi)}\\
&= \nabla_\theta \log q_\theta(\phi).
\end{align*}
\end{proof}

\begin{proposition}[Unbiasedness] 
$\mathbb{E}_{\tau \sim p_\theta}\left[g(\theta)\right] = \mathbb{E}_{\tau \sim p_\theta}\left[g_{\mathtt{egg}}(\theta)\right]$.
\label{prop:unbiased}
\end{proposition}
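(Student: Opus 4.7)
The plan is to compute $\mathbb{E}_{\tau\sim p_\theta}[g(\theta)]$ and $\mathbb{E}_{\tau\sim p_\theta}[g_{\mathtt{egg}}(\theta)]$ separately and verify that they coincide. By linearity of expectation and the fact that the $N$ samples are i.i.d., it suffices to compare a single per-sample term: I must show that $\mathbb{E}_{\tau\sim p_\theta}\bigl[(\mathtt{reward}(\tau)-b)\,\nabla_\theta\log p_\theta(\tau)\bigr]$ equals $\mathbb{E}_{\tau\sim p_\theta}\bigl[(\mathtt{reward}(\tau)-b')\,\nabla_\theta\log\!\sum_{k=1}^{K} p_\theta(\tau^{(k)})\bigr]$, where $\{\tau^{(k)}\}_{k=1}^{K}$ enumerates the equivalence class of $\tau$ returned by the \method module.

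The central step is to apply the tower property by conditioning on the expression $\phi$ induced by $\tau$. The crucial observation is that any two sequences $\tau,\tau'\in\mathcal{S}_\phi$ produce identical numerical outputs on the dataset and thus share the same NMSE; hence $\mathtt{reward}(\tau)=\mathtt{reward}(\phi)$ depends on $\tau$ only through $\phi$ and can be pulled outside the inner conditional expectation. Applying Lemma~\ref{lem:cond-exp} then converts $\mathbb{E}[\nabla_\theta\log p_\theta(\tau)\mid\phi]$ into $\nabla_\theta\log q_\theta(\phi)$, which coincides with the gradient term appearing in $g_{\mathtt{egg}}$ once one identifies $\sum_{k=1}^{K}p_\theta(\tau^{(k)})$ with $q_\theta(\phi)=\sum_{\tau'\in\mathcal{S}_\phi}p_\theta(\tau')$ from equation~\eqref{eq:q-def-app}. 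Reassembling the outer expectation over $\phi$ yields the same integrand on both sides.

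For the baseline terms, I would invoke the standard score-function identity $\mathbb{E}_{\tau\sim p_\theta}[\nabla_\theta\log p_\theta(\tau)]=0$, which zeroes out the $b$ contribution on the classical side. The analogous identity $\mathbb{E}_{\phi\sim q_\theta}[\nabla_\theta\log q_\theta(\phi)]=0$ handles the $b'$ contribution on the \method side, so the choice of constant baselines is irrelevant and need not even agree. Together with the previous step, this gives $\mathbb{E}[g(\theta)]=\mathbb{E}[g_{\mathtt{egg}}(\theta)]$.

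The main obstacle will be justifying the identification $\sum_{k=1}^{K}p_\theta(\tau^{(k)})=q_\theta(\phi)$. This requires that the equality-saturation and extraction procedures enumerate the entire equivalence class $\mathcal{S}_\phi$; in practice, one may extract only a subset, in which case the equality becomes exact only when the e-graph has converged, or must be stated with respect to the class actually produced. A clean route is to redefine $\mathcal{S}_\phi$ in Lemma~\ref{lem:cond-exp} as the set of variants generated by the \method module for the given $\tau$, under which the per-sample identity, and hence unbiasedness, holds by construction.
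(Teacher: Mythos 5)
Your proof is correct and rests on the same core idea as the paper's: group sequences by the expression $\phi$ they induce, use class-invariance of the reward, and identify $\sum_{k} p_\theta(\tau^{(k)})$ with $q_\theta(\phi)$ from equation~\ref{eq:q-def-app}. The packaging differs slightly: the paper establishes unbiasedness by a direct double-sum computation showing that both $\mathbb{E}[g(\theta)]$ and $\mathbb{E}[g_{\mathtt{egg}}(\theta)]$ equal $\nabla_\theta \sum_{\phi} \mathtt{reward}(\phi)\, q_\theta(\phi)$, reserving Lemma~\ref{lem:cond-exp} and the tower property for the variance-reduction step, whereas you invoke the conditional-expectation identity already at the unbiasedness stage; these are the same regrouping written in different notation, so neither buys anything substantive over the other. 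You go beyond the paper in two useful ways. First, you treat the baselines explicitly via the score-function identities $\mathbb{E}[\nabla_\theta\log p_\theta(\tau)]=0$ and $\mathbb{E}[\nabla_\theta\log q_\theta(\phi)]=0$, which the paper only addresses in a closing remark. Second, the obstacle you flag is genuine: the estimator in equation~\ref{eq:p-egg-estimator} sums over only $K$ extracted variants, while the appendix proof silently replaces this with the full-class sum $q_\theta(\phi)=\sum_{\tau'\in\mathcal{S}_\phi}p_\theta(\tau')$, and the paper never reconciles the two. Your proposed fix---redefining $\mathcal{S}_\phi$ as the set of variants the \method module actually produces, provided these sets still partition the sequence space and the reward is constant on each---is the right way to make the proposition rigorous as stated.
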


\begin{proof}
First, for the \method-based estimator,
\begin{align*}
\mathbb{E}\;\left[g_{\mathtt{egg}}(\theta)\right] = \sum_{\tau \in \Pi} \mathtt{reward}(\phi)\, \nabla_\theta \log q_\theta(\phi)\; p_\theta(\tau)&= \sum_{\phi \in \Phi} \sum_{\tau \in \mathcal{S}_\phi} \mathtt{reward}(\phi)\, \nabla_\theta \log q_\theta(\phi)\; p_\theta(\tau) \\
&= \sum_{\phi \in \Phi} \mathtt{reward}(\phi)\, \nabla_\theta \log q_\theta(\phi) \underbrace{\sum_{\tau \in \mathcal{S}_\phi} p_\theta(\tau)}_{=\,q_\theta(\phi)} \\
&= \sum_{\phi \in \Phi} \mathtt{reward}(\phi)\, \nabla_\theta q_\theta(\phi)\\
&= \nabla_\theta \sum_{\phi \in \Phi} \mathtt{reward}(\phi)\, q_\theta(\phi).
\end{align*}
For the standard estimator,
\begin{align*}
\mathbb{E}\;\left[g(\theta)\right]= \sum_{\tau \in \Pi} \mathtt{reward}(\tau)\, \nabla_\theta \log p_\theta(\tau)\; p_\theta(\tau)&= \sum_{\tau \in \Pi} \mathtt{reward}(\tau)\, \nabla_\theta p_\theta(\tau) \\
&= \nabla_\theta \sum_{\tau \in \Pi} \mathtt{reward}(\tau)\, p_\theta(\tau) \\
&= \nabla_\theta \sum_{\phi \in \Phi} \sum_{\tau \in \mathcal{S}_\phi} \mathtt{reward}(\phi)\, p_\theta(\tau) \\
&= \nabla_\theta \sum_{\phi \in \Phi} \mathtt{reward}(\phi)\, q_\theta(\phi),
\end{align*}
where we used class-invariance of the reward and~\eqref{eq:q-def-app}. This equals the expression obtained for $g_{\mathtt{egg}}$, proving unbiasedness.
\end{proof}

\subsubsection{Proof of Variance Reduction}
Define the $\sigma$-field generated by the expression $\phi$ as $\sigma(\phi)$.
Consider the random variable
\begin{equation*}
Z \coloneqq \mathtt{reward}(\tau)\,\nabla_\theta \log p_\theta(\tau)
= \mathtt{reward}(\phi)\,\nabla_\theta \log p_\theta(\tau),
\end{equation*}
where the equality uses reward invariance within each $\mathcal{S}_\phi$.
By Lemma~\ref{lem:cond-exp},
\begin{equation*}
\mathbb{E}\;\left[\,Z | \phi \right]
= \mathtt{reward}(\phi)\, \mathbb{E}\;\left[\,\nabla_\theta \log p_\theta(\tau) | \phi \right]
= \mathtt{reward}(\phi)\, \nabla_\theta \log q_\theta(\phi)
= g_{\mathtt{egg}}(\theta).
\end{equation*}
Thus, $g_{\mathtt{egg}}(\theta)$ is the \emph{Rao–Blackwellization} of $g(\theta)$ with respect to $\sigma(\phi)$~\citep{8a19a051-2be1-32f8-8b83-075ff1f85473}.

\begin{proposition}[Variance reduction] $\Var_{\tau \sim p_\theta}\big[g_{\mathtt{egg}}(\theta)\big] \le \Var_{\tau \sim p_\theta}\big[g(\theta)\big]$.
\label{prop:variance}
\end{proposition}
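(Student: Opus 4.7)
The plan is to complete the Rao--Blackwell argument that the paragraph preceding the statement has already set up. The paper has identified the per-sample random variable $Z = \mathtt{reward}(\phi)\,\nabla_\theta \log p_\theta(\tau)$ and established, via Lemma~\ref{lem:cond-exp} together with reward invariance within $\mathcal{S}_\phi$, that
\[
\E[Z \mid \phi] = \mathtt{reward}(\phi)\,\nabla_\theta \log q_\theta(\phi),
\]
which is precisely the per-sample summand of $g_{\mathtt{egg}}(\theta)$, while the per-sample summand of $g(\theta)$ is $Z$ itself. Hence $g_{\mathtt{egg}}(\theta)$ is the Rao--Blackwellization of $g(\theta)$ with respect to the sub-$\sigma$-field $\sigma(\phi)$, and the inequality should follow from the law of total variance.

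First I would handle the vector-valued nature of the gradient by interpreting $\Var$ coordinatewise (equivalently, as the trace of the covariance matrix) so that the scalar Rao--Blackwell identity applies in each coordinate and then sums. Then, for a single draw $\tau \sim p_\theta$, the law of total variance gives
\[
\Var(Z) = \E\bigl[\Var(Z \mid \phi)\bigr] + \Var\bigl(\E[Z \mid \phi]\bigr),
\]
and non-negativity of the conditional variance term yields $\Var(\E[Z \mid \phi]) \le \Var(Z)$. Identifying the two sides with the per-sample contributions of $g_{\mathtt{egg}}(\theta)$ and $g(\theta)$ completes the single-draw case. Averaging over the $N$ i.i.d.\ draws scales both variances by $1/N$ and preserves the inequality, yielding the stated bound.

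The main subtle step is that the two estimators use possibly different baselines $b$ and $b'$. The cleanest route is to first establish the inequality with a common baseline $b = b' = 0$ via the argument above, and then observe that subtracting any baseline amounts to adding a mean-zero control variate, since $\E[\nabla_\theta \log p_\theta(\tau)] = 0$ and, by Lemma~\ref{lem:cond-exp} applied inside an outer expectation, also $\E[\nabla_\theta \log q_\theta(\phi)] = 0$. Hence one may argue that the variance-minimizing baseline for $g_{\mathtt{egg}}$ dominates any reduction the baseline $b$ achieves for $g$ (itself a Rao--Blackwell statement applied to the baseline-corrected $Z$). Equality in the final bound corresponds to $\Var(Z \mid \phi) = 0$ almost surely, i.e., essentially no within-class variation of $\nabla_\theta \log p_\theta(\tau)$ under $p_\theta(\cdot \mid \phi)$; the inequality is strict whenever \method\ finds a genuinely non-trivial equivalence class with differing scores $\nabla_\theta \log p_\theta$ across its members.
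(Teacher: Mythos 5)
Your proposal is correct and follows essentially the same route as the paper: identify $g_{\mathtt{egg}}(\theta)$ as the Rao--Blackwellization $\E[Z\mid\phi]$ of the per-sample score $Z=\mathtt{reward}(\phi)\,\nabla_\theta\log p_\theta(\tau)$ via Lemma~\ref{lem:cond-exp}, then apply the law of total variance and drop the non-negative term $\E[\Var(Z\mid\phi)]$. Your extra care about the coordinatewise reading of $\Var$, the $1/N$ averaging, and the baselines is a welcome refinement of details the paper compresses (the baseline point appears only as a closing remark there), but it does not change the argument.
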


\begin{proof}
By the law of total variance,
\begin{align*}
\Var(Z)
= \Var\big(\E[Z | \phi]\big) \;+\; \E\big[\Var(Z | \phi)\big]
\;\ge\; \Var\;\big(\E[Z | \phi]\big).
\end{align*}
Substituting $Z=g(\theta)$ and $\mathbb{E}[Z | \phi]=g_{\mathtt{egg}}(\theta)$ yields
\begin{align*}
\Var_{\tau \sim p_\theta}\big[g_{\mathtt{egg}}(\theta)\big] \le \Var_{\tau \sim p_\theta}\big[g(\theta)\big]
\end{align*}
\end{proof}

\medskip
Combining Propositions~\ref{prop:unbiased} and~\ref{prop:variance} proves Theorem~\ref{thm:unbias-variance}.

\noindent\textbf{Remark on baselines.}
If a baseline $b(\cdot)$ independent of the sampled action (sequence) is included, both estimators remain unbiased, and the Rao–Blackwell argument applies to the centered variable as well, so the variance reduction still holds (and can be further improved by an appropriate baseline choice).

\newpage
\subsection{Implementation of \method-embedded Symbolic Regression}

\subsubsection{Implementation of \method-MCTS} \label{sec:mcts-implement} 
\textbf{Markov Decision Process Definition for MCTS.}
We model the search process as a finite-horizon Markov decision process (MDP). A state corresponds to a partially constructed expression, and each node in the MCTS search tree represents one such state. An action corresponds to the application of a single production rule to the first non-terminal symbol in the current state. The transition dynamics are deterministic: given a state–action pair, the successor state is uniquely determined by the application of the selected production rule. Rewards are assigned only at terminal states (leaf nodes of the search tree), obtained by evaluating the completed expression on the dataset according to an evaluation metric. 
The episode length is thus equal to the number of applied production rules, and we set the discount factor to $1$. A trajectory is a sequence of state–action transitions corresponding to a path from the root node to a leaf node in the search tree, resulting in either a valid or an invalid expression. 
The MDP horizon is finite: each episode terminates when the path from the root corresponds to a valid expression with no non-terminal symbols or when the maximum depth $H$ is reached.

\begin{example}\label{example:identity}
Figure~\ref{fig:egg-mcts-identity} illustrates an example execution pipeline of \method-MCTS.
We focus on two distinct root-to-leaf paths in the search tree:
\begin{multline*}
\hspace{-1em}\text{Path 1: } (A \to A + A,\, A \to \sin^2(x_2),\, A \to A + A,\, A \to \cos^2(x_2)), \hfill \\
\hfill \text{Node $s_1$: } \sin^2(x_2) + \cos^2(x_2) + A. \\
\text{Path 2: } (A \to A + A,\, A \to A/A,\, A \to x_1,\, A \to x_1), \hfill \\
\hfill \text{Node $s_2$: }  x_1/x_1 + A. \hspace{5em}
\end{multline*}
Under the grammar in Section~\ref{sec:prelim}, $s_1$ corresponds to the partially specified expression $\sin^2(x_2) + \cos^2(x_2) + A$, whereas $s_2$ corresponds to $x_1/x_1 + A$.
The two nodes are symbolically equivalent under the rewrite rules
$\{\sin^2(a)+\cos^2(a)\leadsto 1,\; a/a \leadsto 1\}$, and both reduce to the same partially specified form $1 + A$.

\begin{figure}[!ht]
    \centering
    \includegraphics[width=\linewidth]{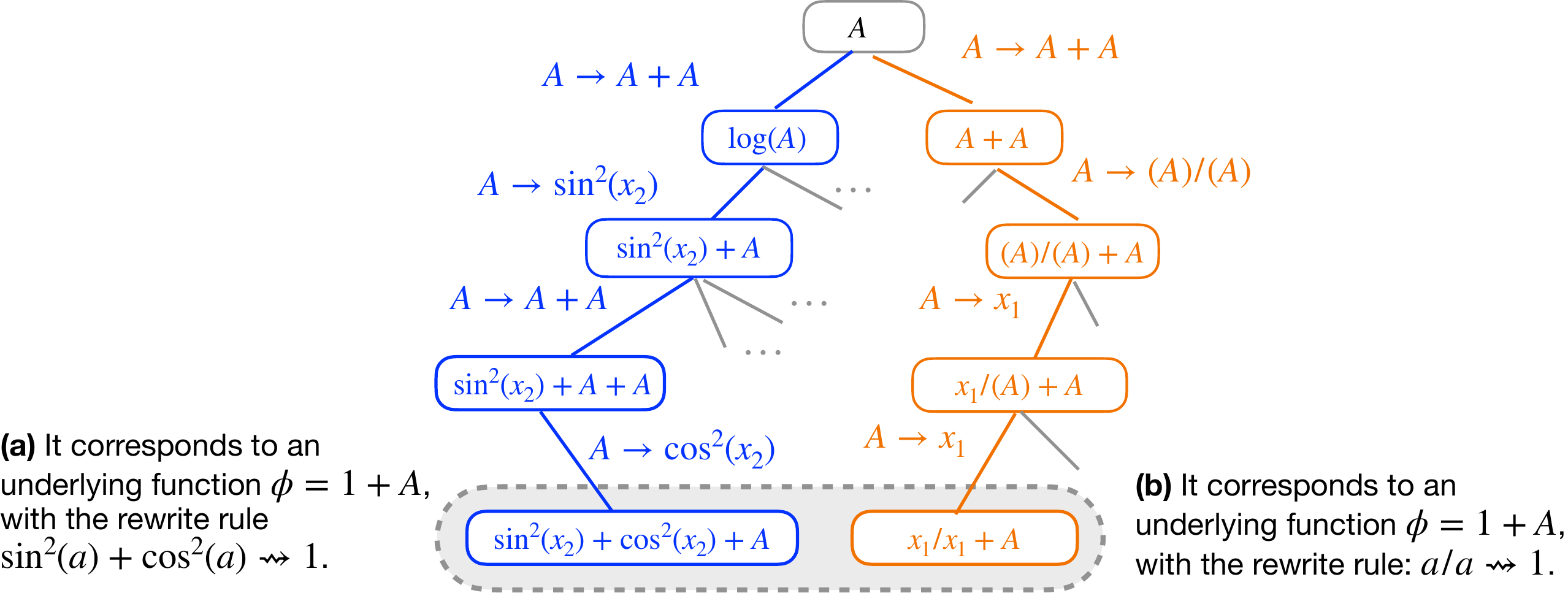}
    \caption{Two distinct paths reach leaf nodes that represent the same underlying function $\phi = 1 + A$, where $A$ denotes an arbitrary sub-expression. This equivalence follows from the rewrite rules $\sin^2(a) + \cos^2(a) \leadsto 1$ and $a/a \leadsto 1$.}
    \label{fig:egg-mcts-identity}
\end{figure}

Consequently, their rewards---which estimate the average goodness-of-fit on the dataset---should be approximately identical.
Exploring the subtrees rooted at $s_1$ and $s_2$ independently, therefore, incurs redundant computation.
In contrast, \method identifies their equivalence and jointly updates the visit counts and reward estimates along the two paths (blue and orange), eliminating the need to further explore duplicate subtrees in subsequent iterations.
\end{example}

% The algorithmic procedure of \method-MCTS is described in Section~\ref{sec:pipeline}, with a visual overview in Figure~\ref{fig:egg-mcts-pipeline}.

\noindent\textbf{Connection to equivalence-aware learning in MCTS.}
In many applications, multiple action sequences may lead to the same state, resulting in duplicate nodes within the search tree.  
To mitigate this redundancy, \citet{DBLP:journals/kbs/SaffidineCM12} proposed the use of a transposition table in the context of Go, enabling the reuse of information from previous updates.  
Similarly, \citet{DBLP:conf/acml/LeurentM20} introduced \emph{Monte Carlo Graph Search} (MCGS), which replaces the search tree with a graph that merges identical states.  
In MCGS, nodes correspond to unique states $s \in \mathcal{S}$ and edges represent state transitions, with the root node corresponding to the initial state $s_0$.  
At iteration $n$, if executing action $a$ from state $s$ leads to a successor state $s'$ that already exists in the graph, no new node is created.

Our \method-MCTS adopts the same principle of detecting identical states during search, but avoids explicit graph construction.
The constructed e-graph serves as a transposition table that stores and detects identical states.
Following the same update rule used in a transposition table, \method-MCTS simultaneously updates all paths in the tree that lead to the selected state.
This design preserves compatibility with standard MCTS implementations while achieving a theoretical acceleration comparable to that of \citet{DBLP:conf/acml/LeurentM20}.

\subsubsection{Implementation of \method-DRL}  \label{sec:drl-implement}

\textbf{Markov Decision Process Formulation for DRL.}
We model this problem as an RL agent that searches for the optimal sequence of grammar rules.  
At decoding step $t$, the agent samples a rule $\tau_t$ conditioned on the previously selected rules $\tau_1,\ldots,\tau_{t-1}$.  
We define the \emph{state} at step $t$ as the prefix of sampled rules $s_t \coloneqq (\tau_1,\ldots,\tau_{t-1})$, and the grammar rules in the output vocabulary constitute the \emph{action space} for the RL agent.  
The sequential decoder, parameterized by $\theta$, defines a stochastic policy
\begin{align*}
\pi_\theta(a_t=a' | s_t) \coloneqq p_\theta(\tau_t=a' | \tau_1,\ldots,\tau_{t-1}).
\end{align*}
The environment transition is deterministic: applying action $\tau_t$ in state $s_t$ yields the next state $s_{t+1} = (\tau_1,\ldots,\tau_t)$.
Rewards are obtained by evaluating the completed expression on the dataset according to a chosen evaluation metric (e.g., $1/(1+\texttt{NMSE}(\phi))$).  
The MDP has a finite horizon: an episode terminates once the output sequence corresponds to a valid expression with no non-terminal symbols, or when the maximum sequence length $H$ is reached.  
The objective of the RL agent is to learn a policy that selects sequences of grammar rules so as to maximize the expected reward.

\textbf{Configuration of Sequential Decoder.}
The sequential decoder can be implemented using various architectures, such as RNNs~\citep{salehinejad2017recent}, GRUs~\citep{chung2014empirical}, LSTMs~\citep{greff2016lstm}, or decoder-only Transformers~\citep{vaswani2017attention}.
The input and output vocabularies for the decoder consist of grammar rules that encode variables, coefficients, and mathematical operators.
Figure~\ref{fig:egg-drl-pipeline}(a) illustrates an example of the output vocabulary.

At each time step, the model predicts a categorical distribution over the next rule, conditioned on the previously generated rules.
At step $t$, the decoder (denoted as ``$\mathtt{SequentialDecoder}$'') takes the previously generated rules $(\tau_1, \dots, \tau_{t-1})$ and the hidden state $\mathbf{h}_{t-1}$, and computes
\begin{align*}
\mathbf{h}_t &= \mathtt{SequentialDecoder}(\tau_{t-1}, \mathbf{h}_{t-1}), \\
\mathbf{z}_t &= W_o \mathbf{h}_t + b_o, \\
p_{\theta}(\tau_t | \tau_1, \dots, \tau_{t-1}) &= \mathtt{softmax}(\mathbf{z}_t),
\end{align*}
where $W_{o} \in \mathbb{R}^{ |V|\times d}$ is the output weight matrix, $b_{o} \in \mathbb{R}^{|V|}$ is the bias vector, and $|V|$ is the size of the vocabulary.
The next rule $\tau_t$ is then sampled from this distribution, $\tau_t \sim p_{\theta}(\tau_t | \tau_1, \dots, \tau_{t-1})$, and fed back into the decoder as input for the next step, until the sequence is completed.
After $H$ steps, the full sequence $\tau = (\tau_1, \ldots, \tau_H)$ is generated with probability $p_{\theta}(\tau) = \prod_{t=1}^{H} p_{\theta}(\tau_t | \tau_1, \dots, \tau_{t-1})$,
with the convention that $\tau_1$ is a special start symbol.

\begin{figure}[!t]
    \centering
    \includegraphics[width=\linewidth]{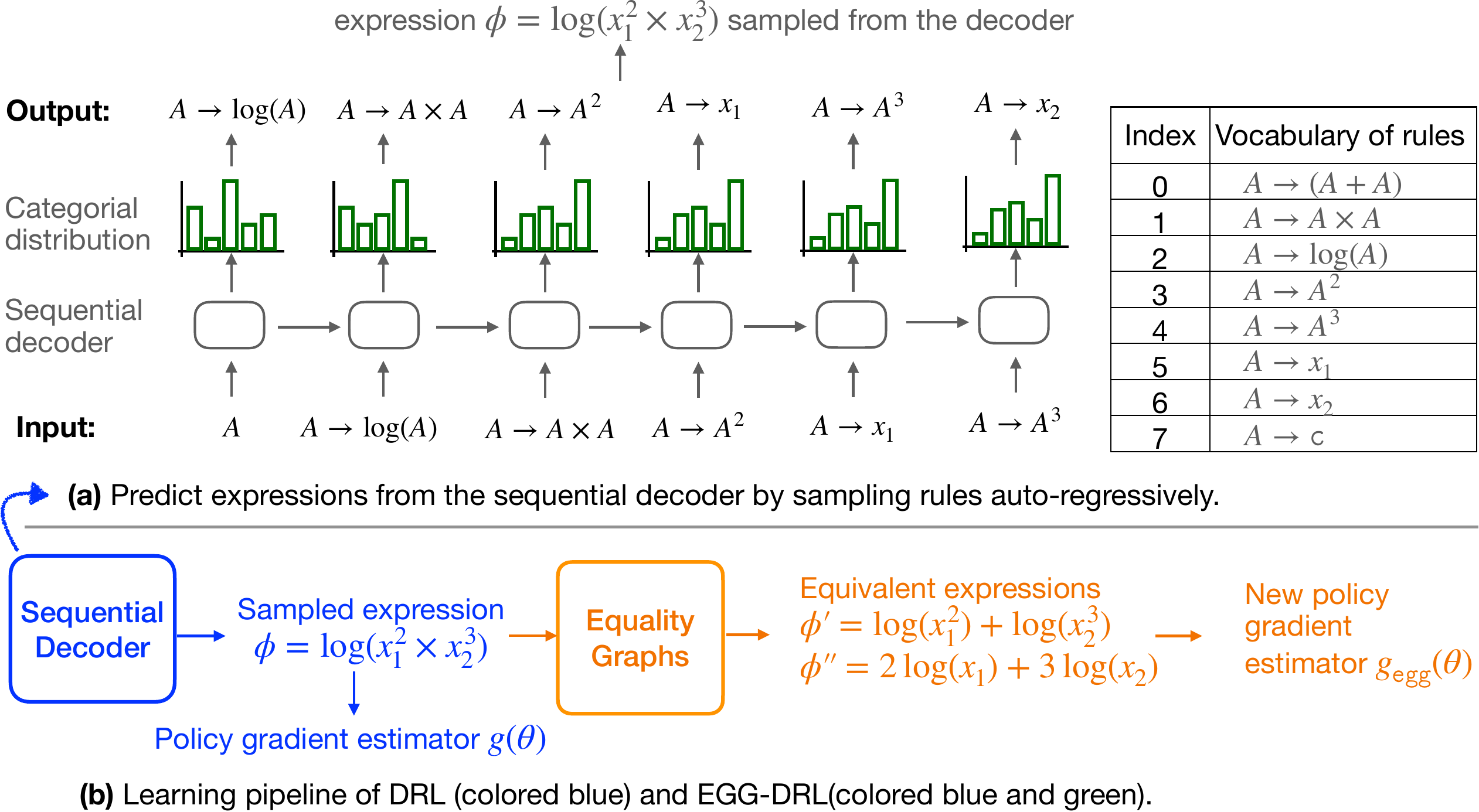}
    \caption{Framework of DRL and our \method-DRL. 
    \textbf{(a)} The sequential decoder autoregressively samples grammar rules according to the modeled probability distribution. 
    \textbf{(b)} In classic DRL, the sampled expressions are directly used to compute the policy gradient estimator $g(\theta)$. 
    While \method-DRL constructs e-graphs from the sampled expressions, extracts symbolic variants, and computes the revised estimator $g_{\mathtt{egg}}(\theta)$.}
    \label{fig:egg-drl-pipeline}
\end{figure}

Each sequence $\tau$ is then converted into an expression following the grammar definition in Section~\ref{sec:prelim}. If the sequence terminates prematurely, grammar rules for variables or constants are appended at random to complete the expression. Conversely, if a valid expression is obtained before the sequence ends, the remaining rules are discarded. In both cases, the probability $p_{\theta}(\tau)$ is updated consistently with the applied modifications.  

Finally, the model parameters are updated using gradient-based optimization (e.g., the Adam optimizer) with either the classic policy gradient estimator $g(\theta)$ or the proposed \method-based estimator $g_{\mathtt{egg}}(\theta)$. The whole pipeline is visualized in Figure~\ref{fig:egg-drl-pipeline}.

\noindent\textbf{Connection to equivalence-aware learning in DRL.}
Several techniques have been introduced to reduce the variance of policy gradient estimates. One widely used approach is the control variate method, where a baseline is subtracted from the reward to stabilize the gradient~\citep{DBLP:conf/uai/WeaverT01}. Another idea is Rao-Blackwellization~\citep{8a19a051-2be1-32f8-8b83-075ff1f85473}.  Other approaches, such as reward reshaping~\citep{DBLP:conf/nips/ZhengOS18}, modify rewards for specific state-action pairs. Inspired by stochastic variance-reduced gradient methods~\citep{johnson2013accelerating,DBLP:conf/iclr/0002FKLL21}, \cite{DBLP:conf/icml/PapiniBCPR18} proposed a variance-reduction technique tailored for policy gradients. 

Unlike these existing methods, our proposed \method is the first to reduce variance through the domain knowledge of the symbolic regression application, and show a seamless integration into the learning framework and attain reduced variance.

\subsubsection{Implementation of \method-LLM}
Our \method is adopted on top of the original LLM-SR~\citep{DBLP:journals/iclr/llmsr25}. 

\begin{enumerate}[labelindent=0pt,itemindent=0pt,leftmargin=*,label=(\alph*)]
    \item Hypothesis Generation. The LLM generates multiple candidate expressions based on a prompt describing the problem background and the definitions of each variable. 
    \item Data-driven Evaluation. Each candidate expression is evaluated based on its fitness on the training dataset.

\item \textbf{\method-based Feedback.} 
\textbf{Inferring Equivalent Expressions.} For ease of integration with the e-graph system, we convert the generated Python functions into lambda expressions. This simplifies their manipulation and evaluation during equivalence analysis.
\begin{Verbatim}[commandchars=\\\{\},numbers=left,firstnumber=1,stepnumber=1]
\PY{c+c1}{\PYZsh{} Original function format}
\PY{k}{def}\PY{+w}{ }\PY{n+nf}{hypothesis}\PY{p}{(}\PY{n}{x1}\PY{p}{,} \PY{n}{x2}\PY{p}{,} \PY{n}{coeffs}\PY{p}{)}\PY{p}{:}
    \PY{n}{expr} \PY{o}{=} \PY{n}{x1} \PY{o}{*} \PY{n}{coeffs}\PY{p}{[}\PY{l+m+mi}{0}\PY{p}{]} \PY{o}{+} \PY{n}{x2} \PY{o}{*} \PY{n}{coeffs}\PY{p}{[}\PY{l+m+mi}{0}\PY{p}{]}
    \PY{k}{return} \PY{n}{expr}

\PY{c+c1}{\PYZsh{} Converted into lambda format}
\PY{n}{hypothesis} \PY{o}{=} \PY{k}{lambda} \PY{n}{x1}\PY{p}{,} \PY{n}{x2}\PY{p}{,} \PY{n}{coeffs}\PY{p}{:} \PY{n}{x1} \PY{o}{*} \PY{n}{coeffs}\PY{p}{[}\PY{l+m+mi}{0}\PY{p}{]} \PY{o}{+} \PY{n}{x2} \PY{o}{*} \PY{n}{coeffs}\PY{p}{[}\PY{l+m+mi}{0}\PY{p}{]}
\end{Verbatim}

This conversion allows us to transform the hypothesized Python function into an expression tree directly. For each fitted expression, we construct an e-graph and apply a predefined set of rewrite rules until reaching a fixed iteration limit. From the resulting e-graph, we sample $K$ unique equivalent expressions. Each sampled expression is then converted back into a Python function to facilitate further interaction with the LLM.
\begin{Verbatim}[commandchars=\\\{\},numbers=left,firstnumber=1,stepnumber=1]
\PY{c+c1}{\PYZsh{} obtained equivalent expression}
\PY{n}{equiv\PYZus{}seq} \PY{o}{=} \PY{p}{[}\PY{l+s+s2}{\PYZdq{}}\PY{l+s+s2}{A\PYZhy{}\PYZgt{}log(A)}\PY{l+s+s2}{\PYZdq{}}\PY{p}{,} \PY{l+s+s2}{\PYZdq{}}\PY{l+s+s2}{A\PYZhy{}\PYZgt{}A*A}\PY{l+s+s2}{\PYZdq{}}\PY{p}{,} \PY{l+s+s2}{\PYZdq{}}\PY{l+s+s2}{A\PYZhy{}\PYZgt{}x1}\PY{l+s+s2}{\PYZdq{}}\PY{p}{,} \PY{l+s+s2}{\PYZdq{}}\PY{l+s+s2}{A\PYZhy{}\PYZgt{}x2}\PY{l+s+s2}{\PYZdq{}}\PY{p}{]}
\PY{c+c1}{\PYZsh{} Converted function format}
\PY{k}{def}\PY{+w}{ }\PY{n+nf}{equiv\PYZus{}hypothesis}\PY{p}{(}\PY{n}{x1}\PY{p}{,} \PY{n}{x2}\PY{p}{,} \PY{n}{coeffs}\PY{p}{)}\PY{p}{:}
    \PY{k}{return} \PY{n}{x1} \PY{o}{*} \PY{n}{x2}
\end{Verbatim}

In subsequent iterations, the LLM receives feedback consisting of previously generated expressions along with their fitness scores. Our feedback is enriched with both the original hypotheses and their equivalent expressions derived via \method module, enabling the model to refine future generations. High-performing expressions are retained and further updated over multiple rounds.
\end{enumerate}

\begin{figure}[!t]
    \centering
    \includegraphics[width=\linewidth]{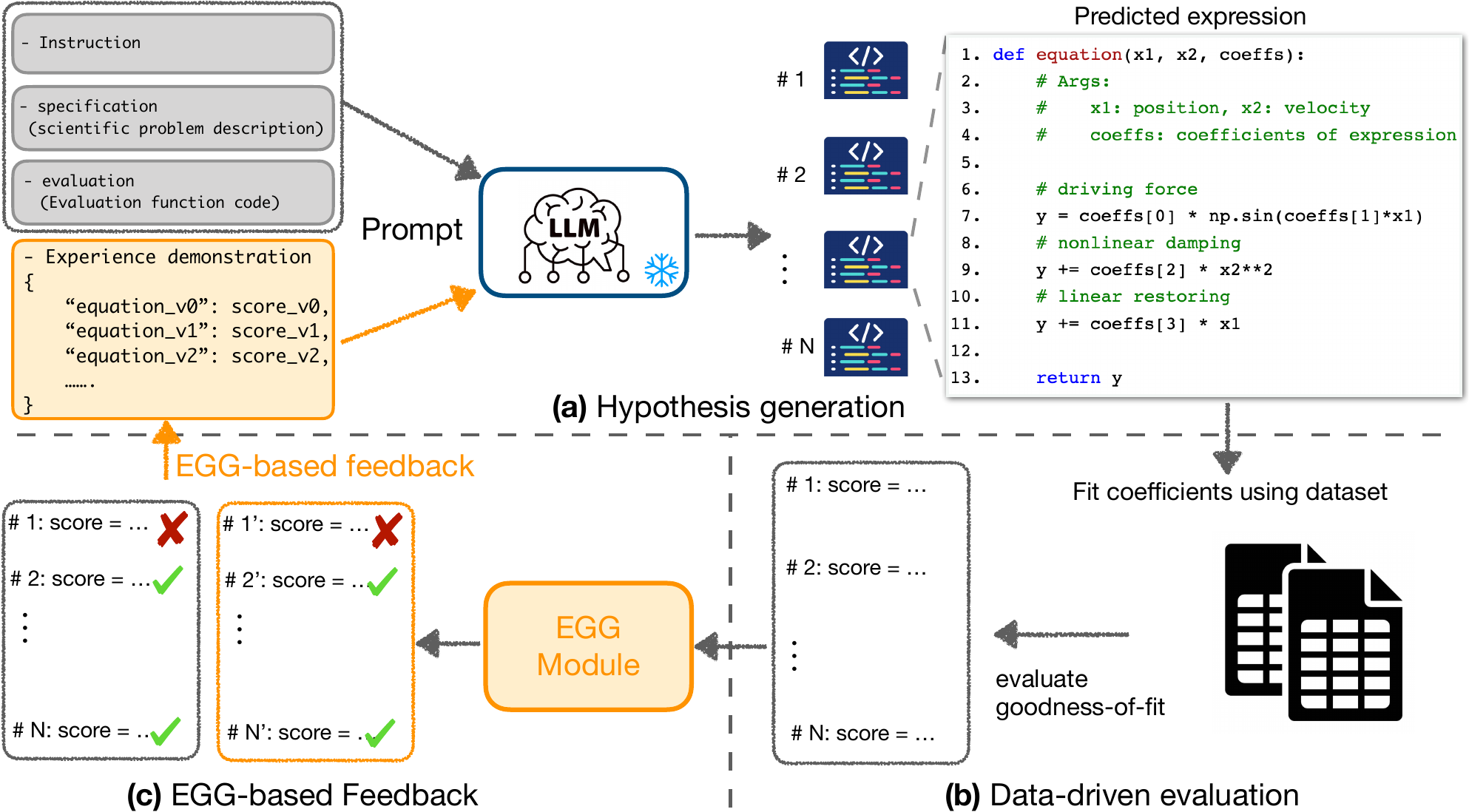}
    \caption{Pipeline of \method-LLM. 
    \textbf{(a)} Hypothesis Generation. The LLM receives prompts derived from the problem specification and predicts multiple candidate expressions. 
    \textbf{(b)} Data-Driven Evaluation. The coefficients of each candidate are fitted to the dataset and then evaluated on a separate validation set to compute a goodness-of-fit score. 
    \textbf{(c)} EGG-Based Feedback. The proposed \method module generates symbolically equivalent variants, which are fed back into step~(a) to guide the next round of hypothesis generation.}
    \label{fig:egg-llm-pipeline}
\end{figure}

\subsubsection{Final Remark}
The E2E-Transformer framework~\citep{DBLP:conf/nips/KamiennydLC22} is structurally very similar to the neural network used in the DRL model, with the main difference being that it is trained using a cross-entropy loss function. A straightforward way to integrate the \method module into E2E-Transformer is through data augmentation, where additional training examples are generated using \method.  

To date, only one work has explored the use of a text-based diffusion model for symbolic regression~\citep{bastiani2025diffusion}. However, its implementation has not been released publicly. Moreover, the proposed pipeline relies on multiple interconnected components, which makes it difficult to reproduce or isolate for independent evaluation. For these reasons, we do not include a comparison between the text-based diffusion approach and \method in this paper.

Several extensions of the baseline considered in this study offer empirical improvements but lack theoretical guarantees.  
For example, \citet{tenachi2023deep} employ genetic programming to refine the predictions of the DRL model, and \citet{tenachi2023deep} introduce unit constraints in physical systems to eliminate infeasible expressions.  
Integrating \method with these variants is an interesting direction that we leave for future investigation.

\newpage
\section{Experiment Settings}\label{apx:exp-set}

\subsection{Baselines and Hyper-parameters Configurations}
The representative baselines are selected as they all regard the search for the optimal expression as a sequential decision-making process.
A detailed description of each method is provided below:

\noindent\textbf{MCTS.} 
\cite{DBLP:conf/iclr/Sun0W023} propose to use the Monte Carlo tree search algorithm to explore the space of symbolic expressions defined by a context-free grammar (described in Section~\ref{sec:prelim}) to find high-performing expressions. Despite being implemented similarly, this method appears under various names in previous works~\citep{DBLP:conf/icml/TodorovskiD97,DBLP:conf/dis/GanzertGSK10,DBLP:journals/kbs/BrenceTD21,DBLP:conf/iclr/Sun0W023,DBLP:conf/icml/KamiennyLLV23}. We choose the implementation~\footnote{\url{https://github.com/isds-neu/SymbolicPhysicsLearner}} from~\citet{DBLP:conf/iclr/Sun0W023} and serve as a representative of this family of methods.

\noindent\textbf{DRL.} \cite{DBLP:conf/iclr/PetersenLMSKK21} propose to use a recurrent neural network (RNN) trained via a (risk-seeking) reinforcement learning objective. The RNN sequentially generates candidate expressions and is optimized using a risk-seeking policy gradient to encourage the discovery of high-quality expressions. We chose this implementation~\footnote{\url{https://github.com/dso-org/deep-symbolic-optimization}}. We use three types of sequential decoders for the time benchmark setting. The major configurations are listed in Table~\ref{tab:rnn-model-config}.

\begin{table}[!ht]
\centering
\caption{Hyperparameters for the DRL Model with different types of neural network. This configuration is also used in Figure~\ref{fig:time-bench}.} \label{tab:rnn-model-config}
\begin{tabular}{l|l|l}
\hline
\multicolumn{3}{c}{\textbf{General Parameters}}    \\ 
\hline
max length of output  sequence          & 20                     \\ 
batch size of generated sequence & 1024 \\
total learning iterations & 200 \\
Reward function &  $\frac{1}{1+\mathtt{NMSE}(\phi)}$ \\
\hline
\multicolumn{3}{c}{\textbf{Optimizer Hyperparameters}}  \\ 
\hline
optimizer                   & Adam                   \\ 
learning rate              & 0.009                  \\ 
entropy weight             & 0.03                   \\ 
entropy gamma              & 0.7                    \\ 
\hline
\multicolumn{3}{c}{\textbf{Decoder-relevant Hyperparameters}}  \\ 
\hline
choice of decoder        & LSTM    & Decoder-only Transformer       \\ 
num of layers    & 3           & 3           \\ 
hidden size    & 128         & 128            \\ 
dropout rate     & 0.5        &   /        \\ 
number of head  & / & 6 \\
\hline
\end{tabular}
\end{table}

\noindent\textbf{LLM-SR.} \cite{DBLP:journals/iclr/llmsr25} proposed to use pretrained large language models, such as GPT3.5, to generate symbolic expressions based on task-specific prompts. The expressions are evaluated and iteratively refined over multiple rounds. We choose their implementation at~\footnote{\url{https://github.com/deep-symbolic-mathematics/LLM-SR}}.

For each baseline, we adopted the most straightforward implementation. Only minimal modifications were made to ensure that all baselines use the same input data and problem configurations, and are compatible with the proposed \method module. We also rename the abbreviation of each method, to clearly present the core idea in each method and uniformly present the adaptation of our \method on top of these baselines.

\begin{table}[h!]
\centering
\caption{Hyper-parameter configurations for symbolic expression computation.} \label{tab:general-config}
\begin{tabular}{l|l}
\hline 
\multicolumn{2}{c}{\textbf{Expressions and Dataset}}   \\
\hline
training dataset size & 2048 \\
validation and testing dataset size & 2048 \\
coefficient fitting optimizer & BFGS \\
maximum allowed coefficients & 20 \\
optimization termination criterion & error is less than $1e-6$ \\
\hline
\end{tabular}
\end{table}

\noindent\textbf{Expression-related Configurations.} When fitting the values of open constants in each expression, we sample a batch of data with batch size $1024$ from the data Oracle. The open constants in the expressions are fitted on the data using the BFGS optimizer\footnote{\url{https://docs.scipy.org/doc/scipy/reference/optimize.minimize-bfgs.html}}. We use a multi-processor library to fit multiple expressions using 8 CPU cores in parallel. This greatly reduced the total training time.

An expression containing a placeholder symbol $A$ or containing more than 20 open constants is not evaluated on the data; its fitness score is $-\infty$.    

% The normalized mean-squared error metric is further defined in Equation~\ref{eq:loss-function}. 

To ensure fairness, we use an interface, called \texttt{dataOracle}, which returns a batch of (noisy) observations of the ground-truth equations. 
To fast evaluate the obtained expression is evaluated using the Sympy library, and the step for fitting open constants in the expression with the dataset uses the optimizer provided in the Scipy library\footnote{\url{https://docs.scipy.org/doc/scipy/tutorial/optimize.html}}. 

During testing, we ensure that all predicted expressions are evaluated on the same testing dataset by configuring the \texttt{dataOracle} with a fixed random seed, which guarantees that the same dataset is returned for evaluation.

We further summarize all the above necessary configurations in Table~\ref{tab:general-config}.

\subsection{Evaluation Metrics} 
We mainly consider two evaluation criteria for the learning algorithms tested in our work: 1) the goodness-of-fit measure and 2) the total running time of the learning algorithms. 

The goodness-of-fit indicates how well the learning algorithms perform in discovering unknown symbolic expressions.  Given a testing dataset $D_{\text{test}}=\{(\mathbf{x}_{i},y_i)\}_{i=1}^n$ generated from the ground-truth expression $\phi$, we measure the goodness-of-fit of a predicted expression $\phi_{\mathtt{pred}}$, by evaluating the mean-squared-error (MSE) and normalized-mean-squared-error (NMSE):
\begin{equation}\label{eq:loss-function}
\begin{aligned}
\text{MSE}&=\frac{1}{n}\sum_{i=1}^n(y_{i}-\phi_{\mathtt{pred}}(\mathbf{x}_{i}))^2, \qquad\quad \text{NMSE}=\frac{\frac{1}{n}\sum_{i=1}^n(y_{i}-\phi_{\mathtt{pred}}(\mathbf{x}_{i}))^2}{\sigma_y^2} \\
\text{RMSE}&=\sqrt{\frac{1}{n}\sum_{i=1}^n(y_{i}-\phi_{\mathtt{pred}}(\mathbf{x}_{i}))^2} ,\qquad
\text{NRMSE}=\frac{1}{\sigma_y} \sqrt{\frac{1}{n}\sum_{i=1}^n(y_{i}-\phi_{\mathtt{pred}}(\mathbf{x}_{i}))^2} 
\end{aligned}
\end{equation}
where the empirical variance $\sigma_y=\sqrt{\frac{1}{n}\sum_{i=1}^n \left(y_i-\frac{1}{n}\sum_{i=1}^n y_i\right)^2}$. Note that the coefficient of determination ($R^2$) metric~\citep{nagelkerke1991note,DBLP:conf/nips/CavaOBFVJKM21} is equal to $(1-\text{NMSE})$ and therefore omitted in the experiments.

% \subsection{Loss Benchmark of \method} \label{apx:loss-bench}
% \begin{enumerate}
%     \item The vanilla policy gradient follows the definition in Equation~\ref{eq:q-estimator}.

%     \item we choose the baseline function as the average of the reward of the current sampled batch expressions. Thus we have:
%  \begin{align*}
% \nabla_{\theta}J(\theta)\approx \frac{1}{N}\sum_{i=1}^N(R(\tau^i)-b)\nabla_{\theta}\log p_{\theta}(\tau^i), \qquad\text{where } b=\sum_{i=1}^N R(\tau^i)
% \end{align*}
% \item policy gradient subtracting a risk-seeking quantile. This originated from DSR~\cite{DBLP:conf/iclr/PetersenLMSKK21}, where they encourage the model to factor in those well-fitted output expressions and ignore the poor-fitted output expressions. They proposed to subtract a $k$\% of the quantile of the rewards instead of the empirical mean of the rewards.
% \begin{align*}
% \nabla_{\theta}J(\theta)\approx \frac{1}{N}\sum_{i=1}^N(R(\tau^i)-b)\nabla_{\theta}\log p_{\theta}(\tau^i), \qquad\text{where } b=k\text{\% quantile of the rewards}
% \end{align*}
% In Figure~\ref{fig:loss-bench},  $b$ is set as $75\%$ quantile of rewards, following the original hyper-parameter definition in DSR~\cite{DBLP:conf/iclr/PetersenLMSKK21}.
% \end{enumerate}

\newpage

\section{Extra Experimental Results} \label{apx:exp-extra}

\subsection{Visualization of \method Construction} \label{apx:extra-egg-vis}
We summarize the list of example expressions, their symbolic-equivalent variants, and the visualized EGG in Table~\ref{tab:more-egraphs}.
\begin{table}[H]
    \centering
    \caption{Summary of expressions, their symbolic-equivalent variants, and the visualized o it visualization of more e-graphs obtained via \method.}
\label{tab:more-egraphs}
    \begin{tabular}{ll|l|l}
    \hline  
      \multirow{2}{*}{ID}  &     \multirow{2}{*}{Original equation}  &    \multirow{2}{*}{Symbolic-equivalent  variant} &  E-graph \\ 
        &     &      &  visualization \\\midrule
      1 & $\log(x_1^3 x_2^2)$ &  $3\log (x_1)+2\log (x_2)$ & Figure~\ref{fig:implemnt-egg-log} \\
      2 & $\exp(c_1 x_1 + x_2)$ & $\exp(c_1x_1)\exp(x_2)$ & Figure~\ref{fig:rewrite-exp} \\
      3 & $\sin(x_1+x_2)$ & $\sin(x_1)\cos(x_2)+\cos(x_1)\sin(x_1)$ & Figure~\ref{fig:rewrite-trig}\\
    4 & $\log(x_1\times \ldots x_n)$  & $\log(x_1)+\ldots \log(x_n)$ & Figure~\ref{fig:memory-log}\\
      5 & $\sin(x_1+\ldots+ x_n)$ & omitted & Figure~\ref{fig:memory-sin} \\
       \hline
   \multicolumn{4}{c}{\textbf{(a)} Example symbolic regressions} \\
    \midrule  
 \multirow{2}{*}{ID}   &     \multirow{2}{*}{Original equation}  &    \multirow{2}{*}{Symbolic-equivalent  variant} &  E-graph \\ 
        &     &      &  visualization \\\midrule
        I.15.3x  & $\frac{x_0-x_1x_2}{\sqrt{c^2-x_1^2}}$ & $ \frac{x_0 -x_1x_2}{\sqrt{c+ x_1}\sqrt{c - x_1}}$ &   Figure~\ref{fig:rewrite-I.15.3x} \\
        I.30.3 &  $x_0\frac{\sin^2(x_1x_2/2)}{\sin^2(x_2/2)}$ & $ x_0\left(\frac{\sin(x_1x_2/2)}{\sin(x_2/2)}\right)^2$ &   Figure~\ref{fig:rewrite-I.30.3} \\
       I.44.4  & $c_1x_0 x_1 \log(x_2/x_3)$ & $c_1x_0 x_1 (\log(x_2)- \log(x_3))$  &  Figure~\ref{fig:rewrite-I.44.4} \\
    I.50.26 & $x_0(\cos(x_1x_2)+x_3 \cos^2(x_1x_2))$  & $x_0 \cos(x_1x_2) \left( 1 + x_3 \cos(x_1x_2) \right)$ & Figure~\ref{fig:rewrite-I.50.26} \\
       
        II.6.15b & $\frac{3x_0}{4\pi}\frac{3\cos x_1\sin x_1}{x_2^3}$ & $\frac{3x_0}{8\pi } \frac{\sin(2 x_1)}{x_2^3}$ &  Figure~\ref{fig:rewrite-II.6.15b}  \\
      II.35.18  & $\frac{x_0}{\exp(x_1x_2/x_3) + \exp(-x_1x_2/x_3)}$ & $ \frac{x_0}{2 \cosh\left(x_1x_2/x_3\right)}$ &  Figure~\ref{fig:rewrite-II.35.18}  \\
       II.35.21 & $x_0x_1 \tanh(x_1x_2/x_3)$ & $x_0x_1 \frac{\exp(2x_1x_2/x_3) - 1}{\exp(2x_1x_2/x_3) + 1}$ &   Figure~\ref{fig:rewrite-II.35.21}  \\
        \bottomrule
         \multicolumn{4}{c}{\textbf{(b)} Selected complex symbolic regressions from Feynman Dataset.} \\
    \end{tabular}
    
\end{table}

\newpage

\textbf{E-graph for $\log$ operator.} Figure~\ref{fig:implemnt-egg-log} illustrates the e-graph construction for the expression $\log(x_1^3 x_2^2)$ using rewrite rules for  $\log$ operator:  $\log(a\times b) \leadsto \log(a) + \exp(b)$ and also  $\log(a^b)\leadsto b\log a$. This visualization is based on the Graphviz API. We also label each e-class with a unique index for clarity. 
As shown in Figure~\ref{fig:implemnt-egg-log}, the saturated e-graph grows significantly larger as more rewrite rules are applied.
In our full implementation, we incorporate many additional rewrite rules beyond these basic logarithmic identities, resulting in substantially larger and more complex e-graphs.

\begin{figure}[H]
\centering
\includegraphics[width=0.45\linewidth]{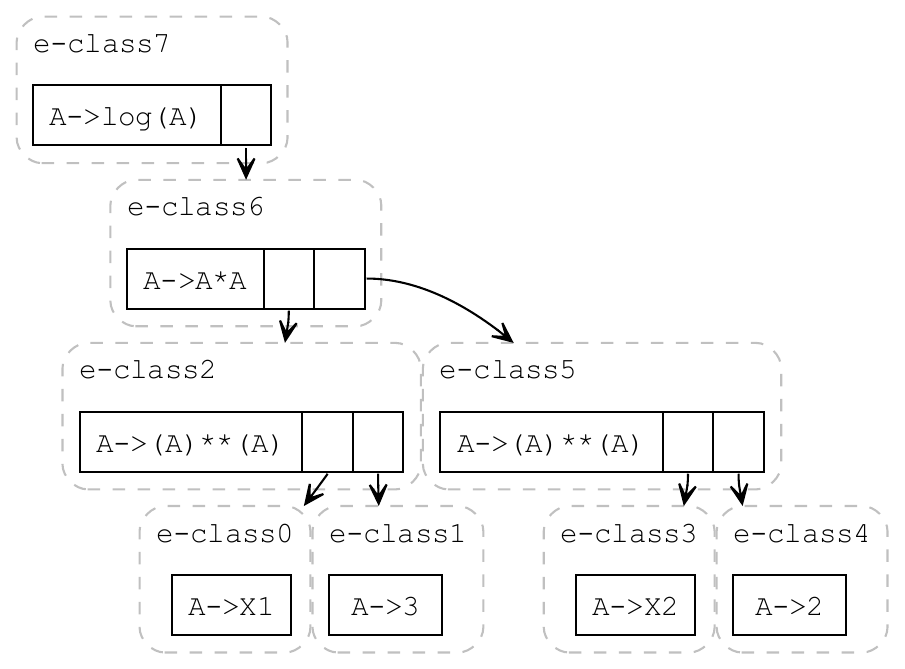}
\includegraphics[width=0.54\linewidth]{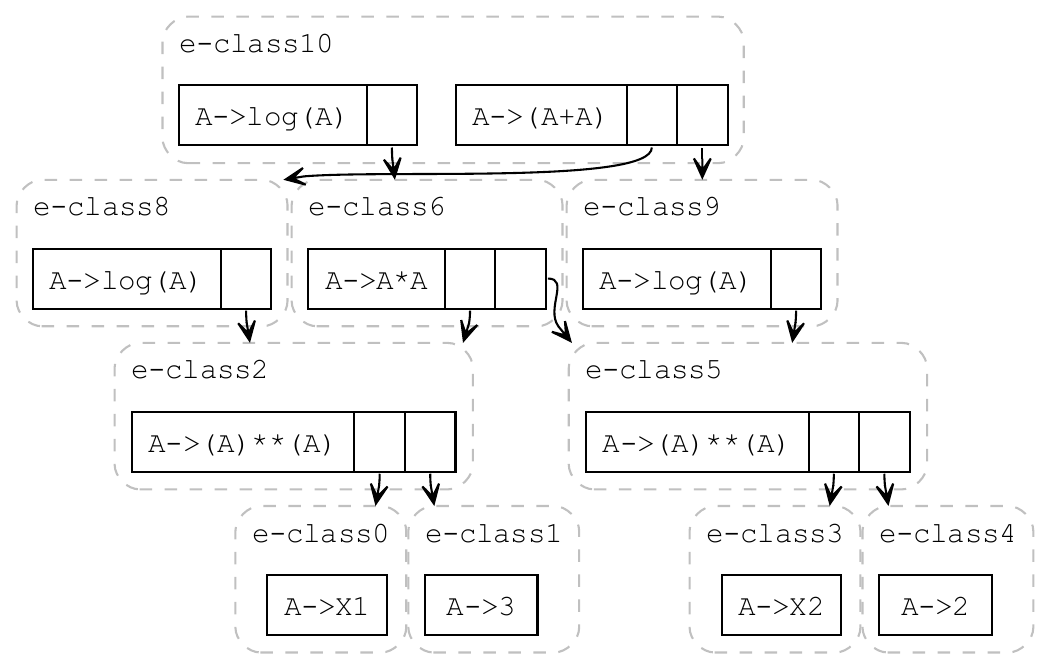}
\includegraphics[width=0.79\linewidth]{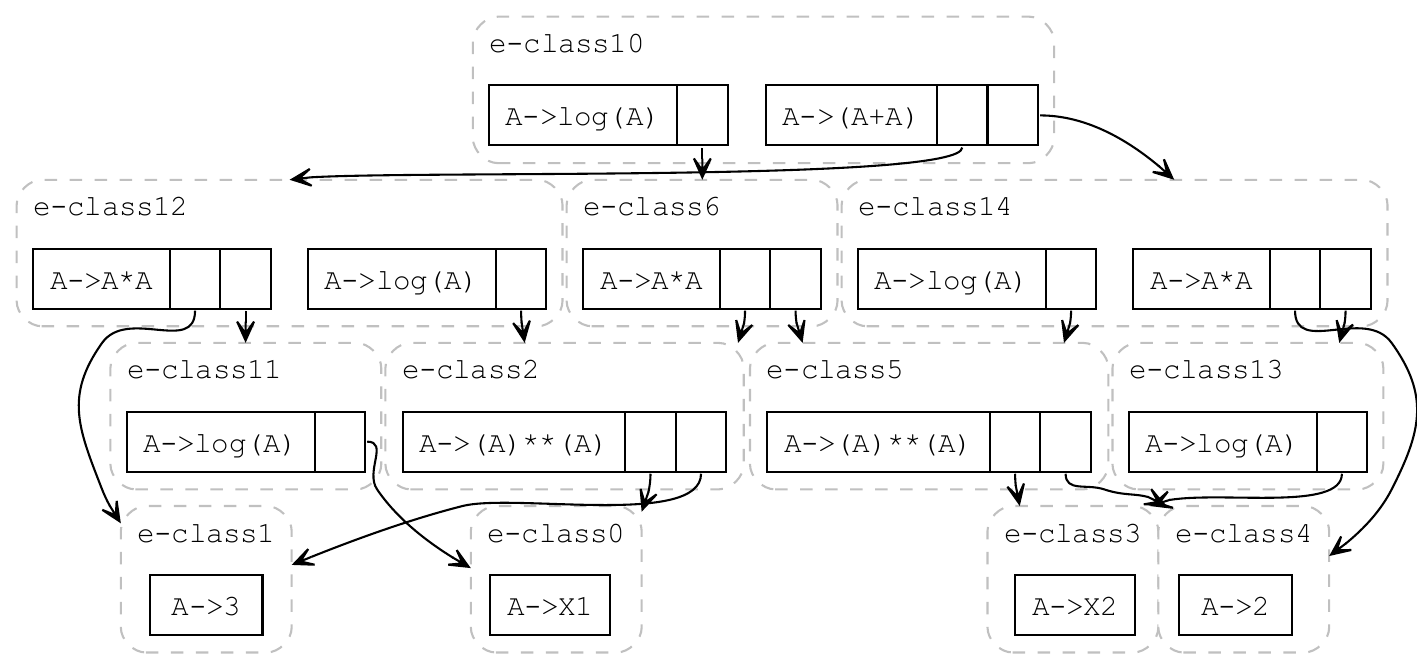}
\caption{ \textbf{(Top left)}  Initialized e-graphs for expression $\log(x_1^3 x_2^2)$. \textbf{(Top Right)} The saturated e-graphs after applying one rewrite rule $\log(ab)\leadsto\log a+\log b$. \textbf{(Bottom)} The saturated e-graphs after applying two rewrite rules: $\log(ab)\leadsto\log a+\log b$ and $\log(a^b)\leadsto b\log a$.}
\label{fig:implemnt-egg-log}
\end{figure}

\textbf{E-graph for $\exp$ operator.}
Figure~\ref{fig:rewrite-exp} illustrates the e-graph construction for the expression $\exp(c_1 x_1 + x_2)$ using the exponential rewrite rule  $\exp(a + b) \leadsto \exp(a) \times \exp(b)$.
The input expression $\exp(c_1 \times x_1 + x_2)$ is represented as a sequence of production rules: $(A \to \exp(A),\; A \to A + A,\; A \to A \times A,\; A \to c_1,\; A \to x_1,\; A \to x_2)$.

\begin{figure}[H] 
    \centering
    \includegraphics[width=0.3\linewidth]{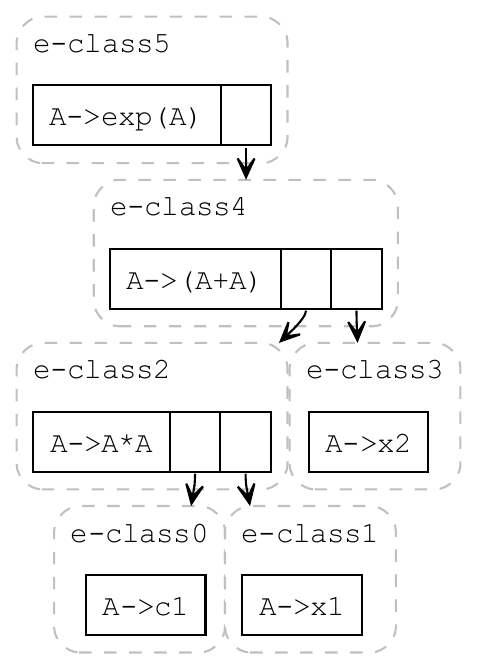}
    \includegraphics[width=0.57\linewidth]{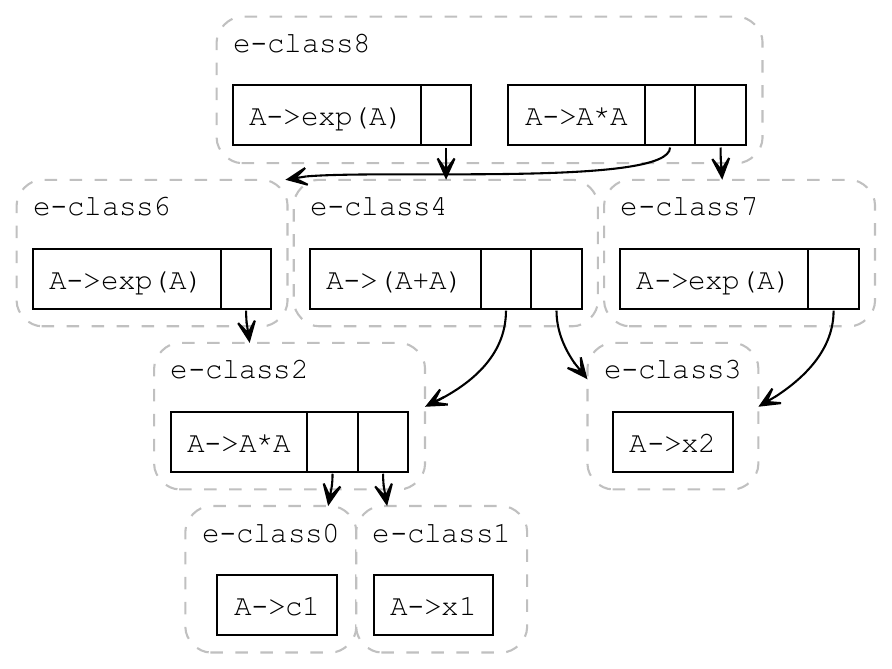}
    \caption{
    \textbf{(Left)} Initialized e-graph for expression $\exp(c_1x_1+x_2)$. 
    \textbf{(Right)} The saturated e-graphs after applying one rewrite rule $\exp(a+b)\,\leadsto\,\exp(a)\exp(b)$.}
    \label{fig:rewrite-exp}
\end{figure}

\textbf{E-graph for $\sin,\cos$ operators.} Figure~\ref{fig:rewrite-trig} demonstrates the application of the trigonometric identity $\sin(a + b) \leadsto \sin a \cos b + \sin b \cos a$
to the input expression $\sin(x_1 + x_2)$. The input expression is represented by the sequence of production rules $(A \to \sin(A),\; A \to A + A,\; A \to x_1,\; A \to x_2)$.

\begin{figure}[H] 
    \centering
    \includegraphics[height=12em]{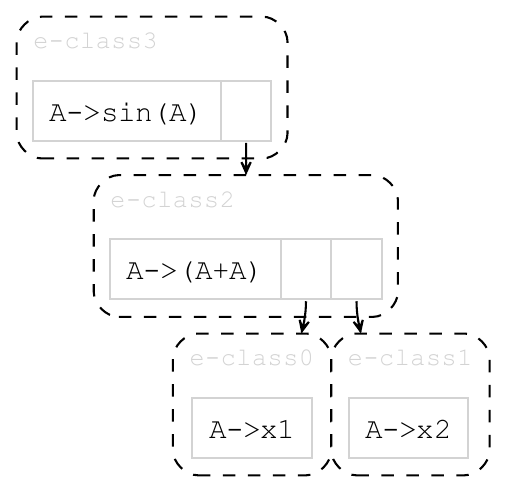}
    \includegraphics[height=12em]{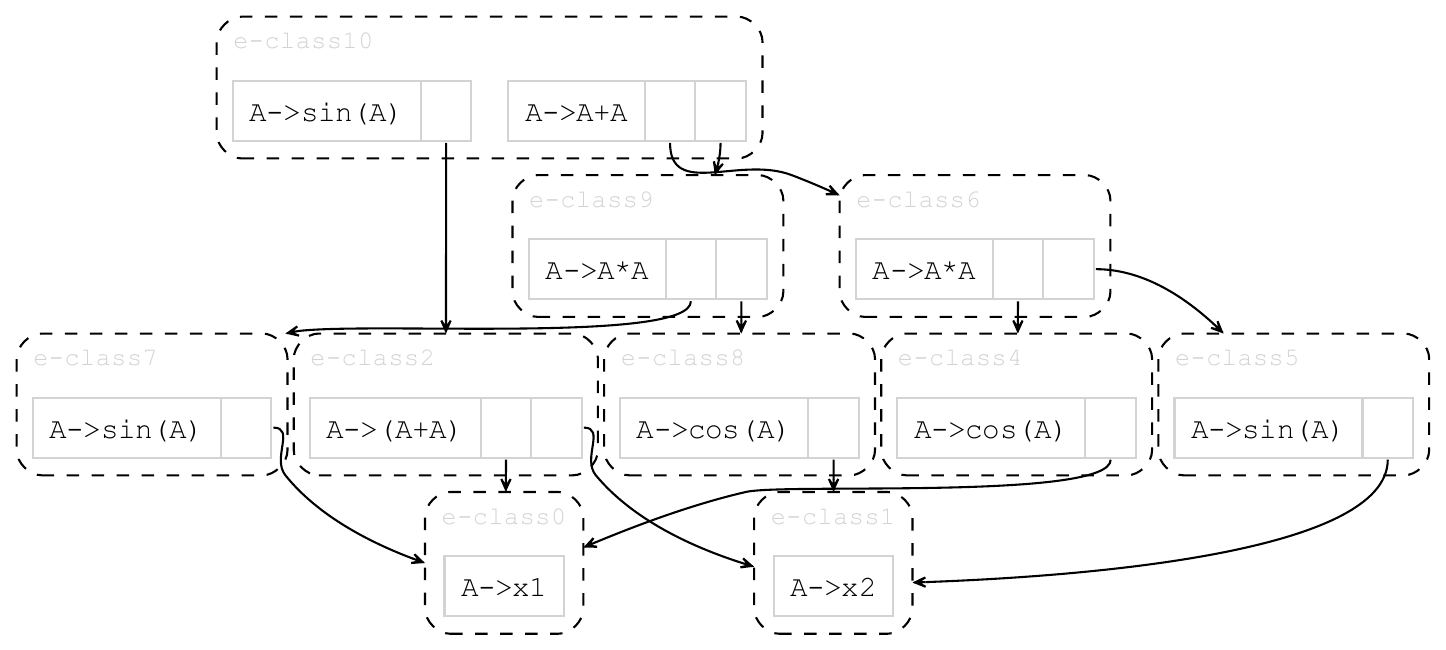}
    \caption{Applying the rewrite rule $\sin(a+b)\leadsto\sin a\cos b+\sin b\cos a$ in an e-graph representing expression $\sin(x_1+x_2)$. \textbf{Left:} Initialized e-graph. 
    \textbf{Right:} e-graph after applying the rewrite rule.}
    \label{fig:rewrite-trig}
\end{figure}

\textbf{E-graph for $\partial/\partial x_i$ operator.}
Figure~\ref{fig:rewrite-vector} shows the application of the partial derivative commutativity rule $\frac{\partial^2 f}{\partial x_i \partial x_j} \leadsto \frac{\partial^2 f}{\partial x_j \partial x_i}$
to the input expression $\frac{\partial^2 (x_1 + x_2)}{\partial x_1 \partial x_2}$. The derivation is represented by the following production rules: $(A \to \partial(A)/\partial x_1,\; A \to \partial(A)/\partial x_2,\; A \to x_1 + x_2)$.

\begin{figure}[H] 
    \centering
    \includegraphics[height=11em]{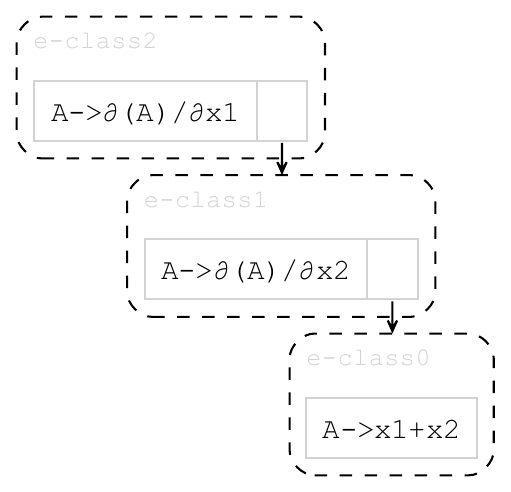}
    \hspace{2.5em}
    \includegraphics[height=11em]{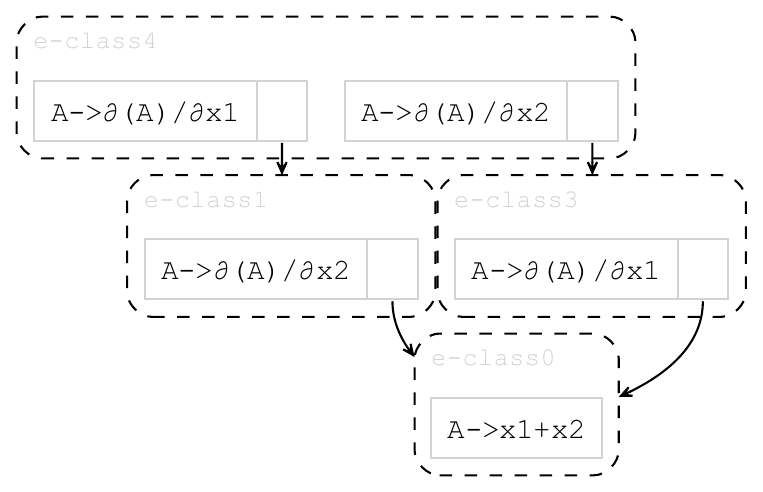}
    \caption{Applying the rewrite rule $\frac{\partial }{\partial x_i } \left(\frac{\partial f}{\partial x_j}\right)\leadsto \frac{\partial }{\partial x_j } \left(\frac{\partial f}{\partial x_i}\right)$ in an e-graph representing expression $\frac{\partial^2 (x_1+x_2)}{\partial x_i \partial x_j}$. \textbf{Left:} Initialized e-graph. 
    \textbf{Right:} e-graph after applying the rewrite rule.}
    \label{fig:rewrite-vector}
\end{figure}

\newpage
\textbf{E-graph visualization for case analysis in section~\ref{sec:exp-cases}.} 
 \begin{figure}[H]
    \centering
    \includegraphics[width=0.4\linewidth]{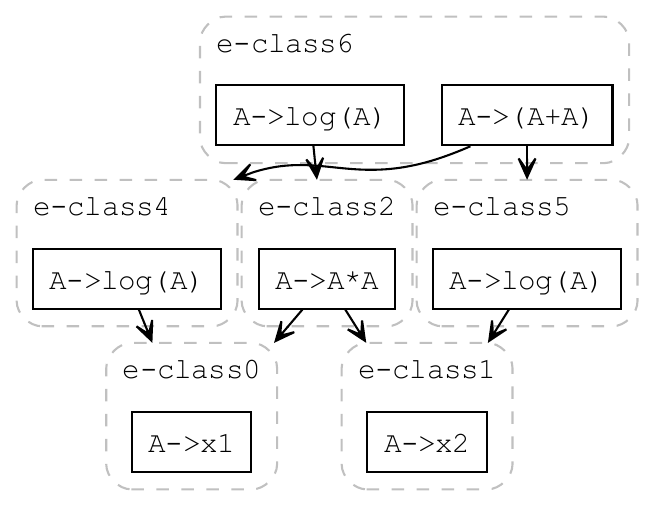}
    \includegraphics[width=0.6\linewidth]{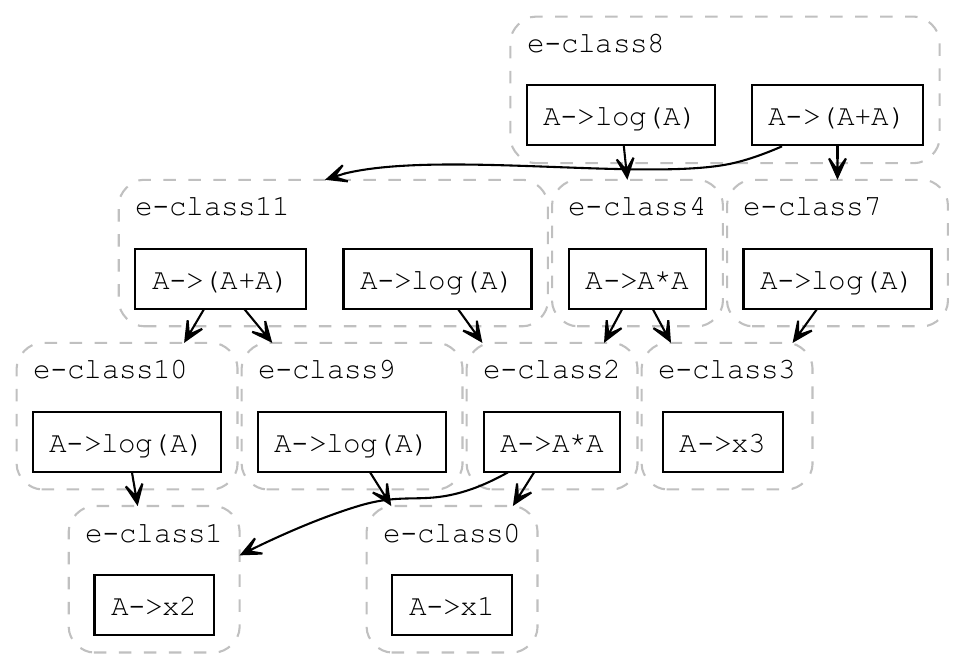}
    \includegraphics[width=0.8\linewidth]{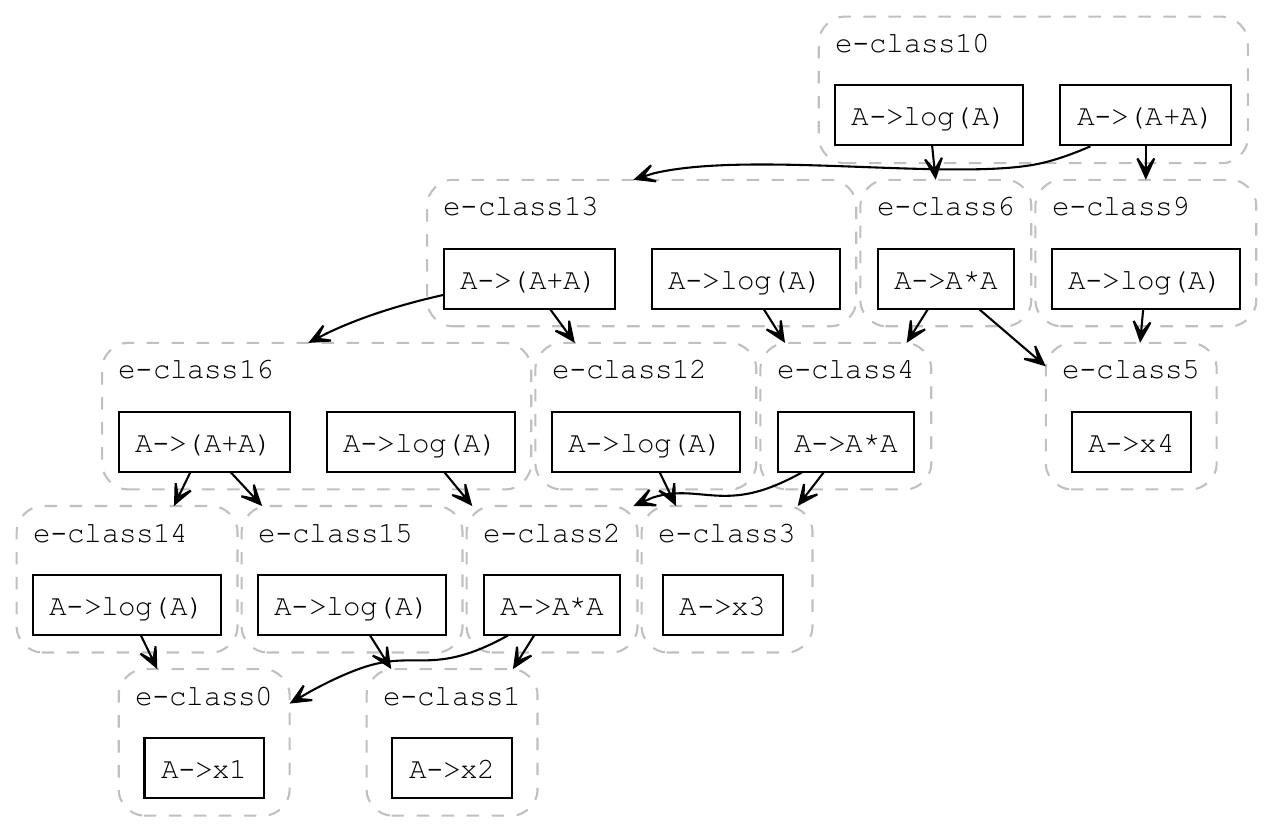}
    \caption{Visualization of e-graphs for  expression $ \log(x_1 \times \ldots \times x_n)$, using the rewrite rule $\log(ab) \leadsto \log a + \log b$. The three figures correspond to $n=2,3,4$ accordingly.}
    \label{fig:memory-log}
\end{figure}
\newpage
\begin{figure}[H]
    \centering
    \includegraphics[width=0.6\linewidth]{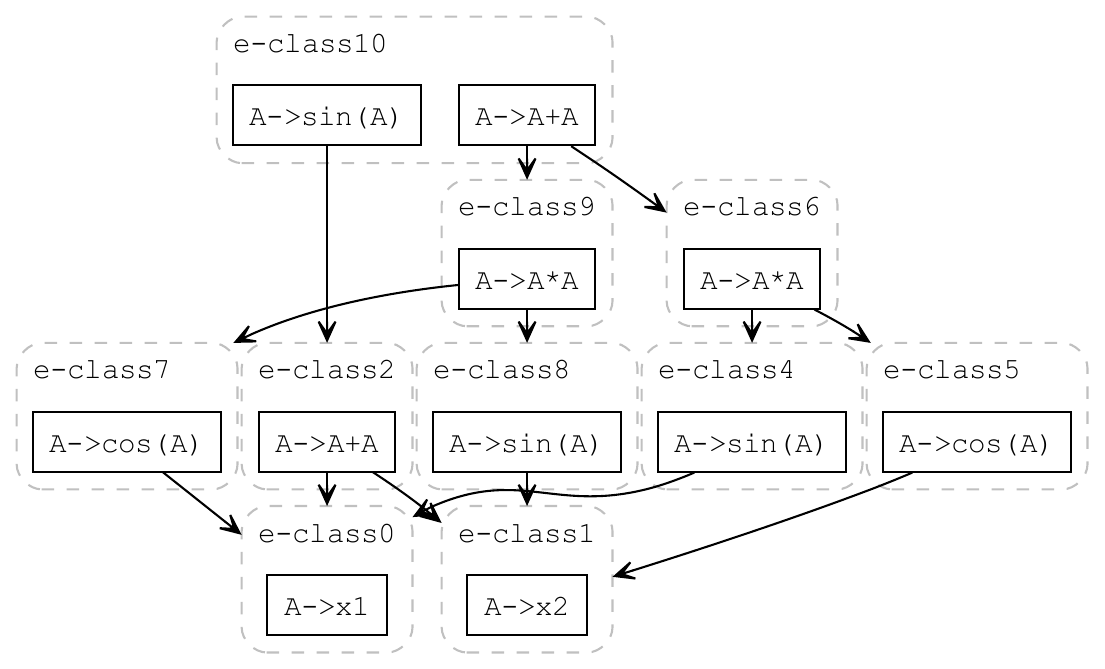}
    \includegraphics[width=0.8\linewidth]{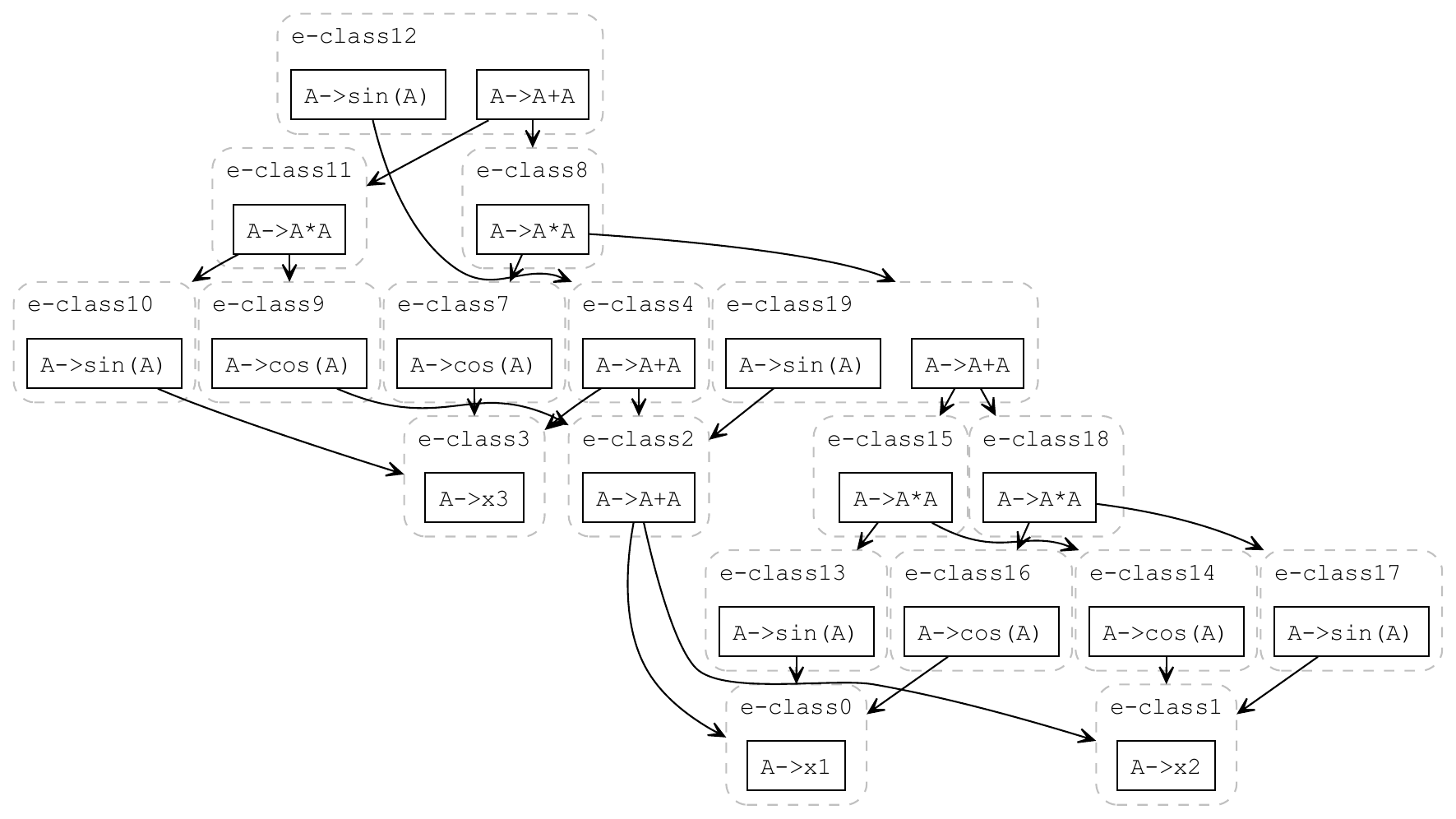}\\
    \includegraphics[width=1\linewidth]{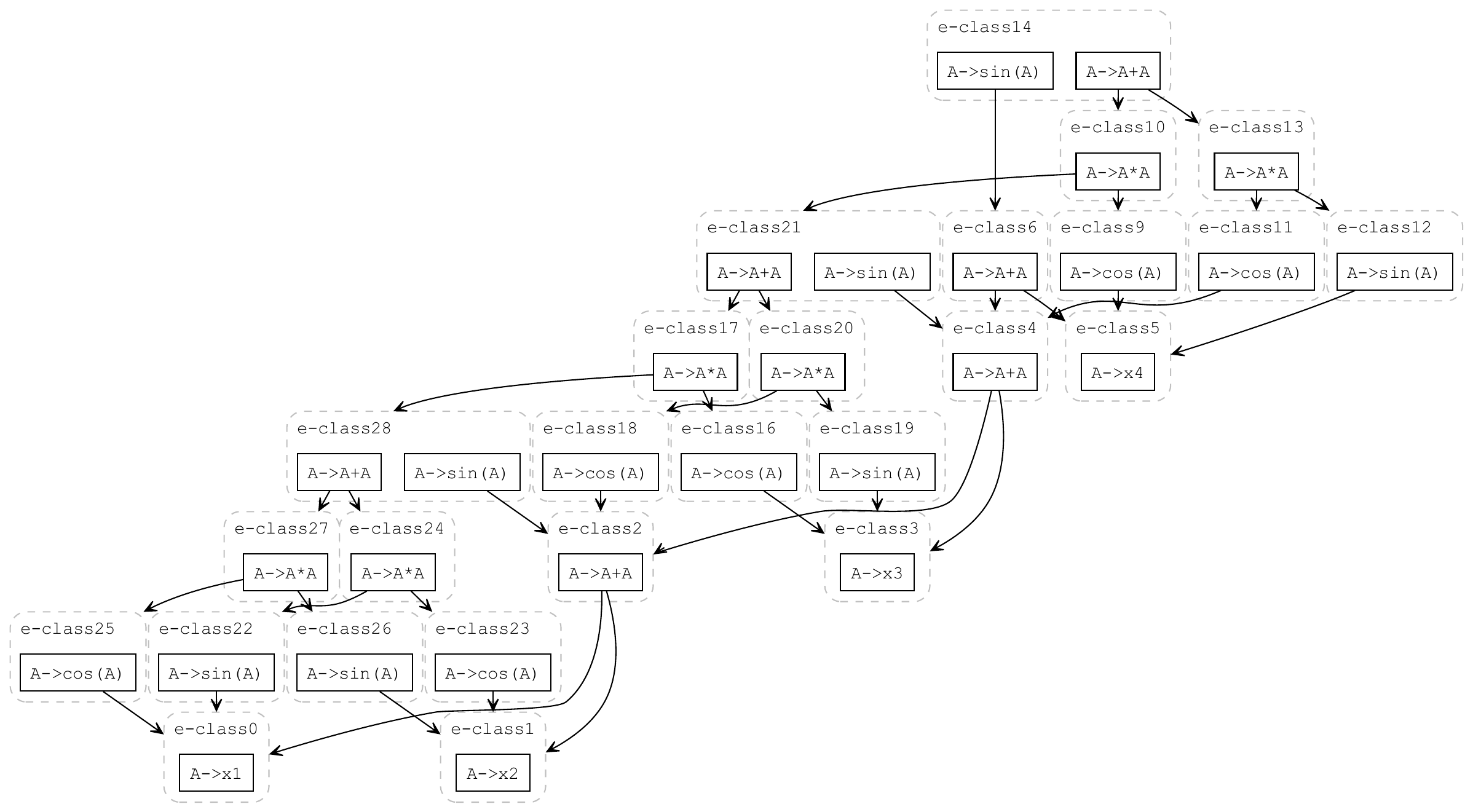}
    \caption{Visualization of e-graphs for  expression $\sin(x_1 + \ldots + x_n)$, using the trigonometric rewrite rule $\sin(a + b) \leadsto \sin(a)\cos(b) + \sin(b)\cos(a)$. The three figures correspond to $n=2, 3, 4$. }
    \label{fig:memory-sin}
\end{figure}  

\newpage
\subsection{Additional Visualization for  Selected Expressions in Feynman Dataset} \label{apx:feynman-egg}

\textbf{E-graph for equation ID I.15.3x in the Feynman dataset.} 
Figure~\ref{fig:rewrite-I.15.3x} illustrates the application of the rewrite rules   \(\sqrt{ab}\leadsto \sqrt{a}\sqrt{b}\) and $a^2-b^2=(a+b)(a-b)$
to the input expression  \(\frac{x_0-x_1x_2}{\sqrt{c_1^2-x_1^2}}\).

\begin{figure}[H]
    \centering
    \includegraphics[width=0.4\linewidth]{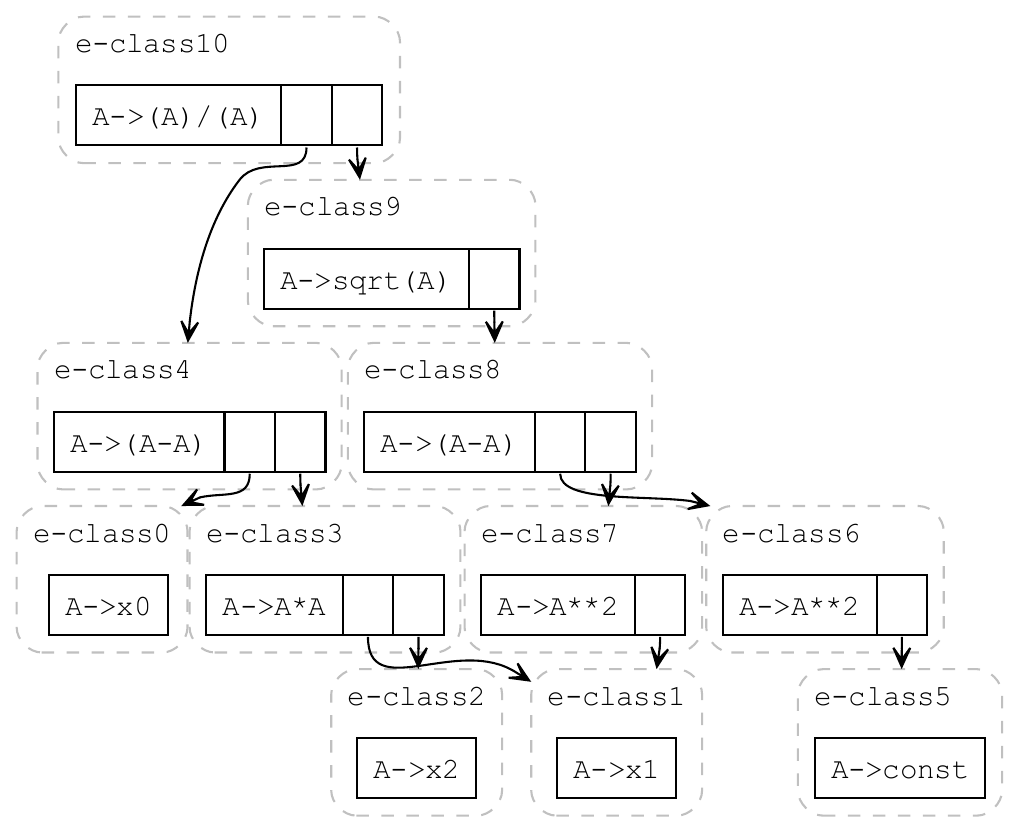}
    \includegraphics[width=0.59\linewidth]{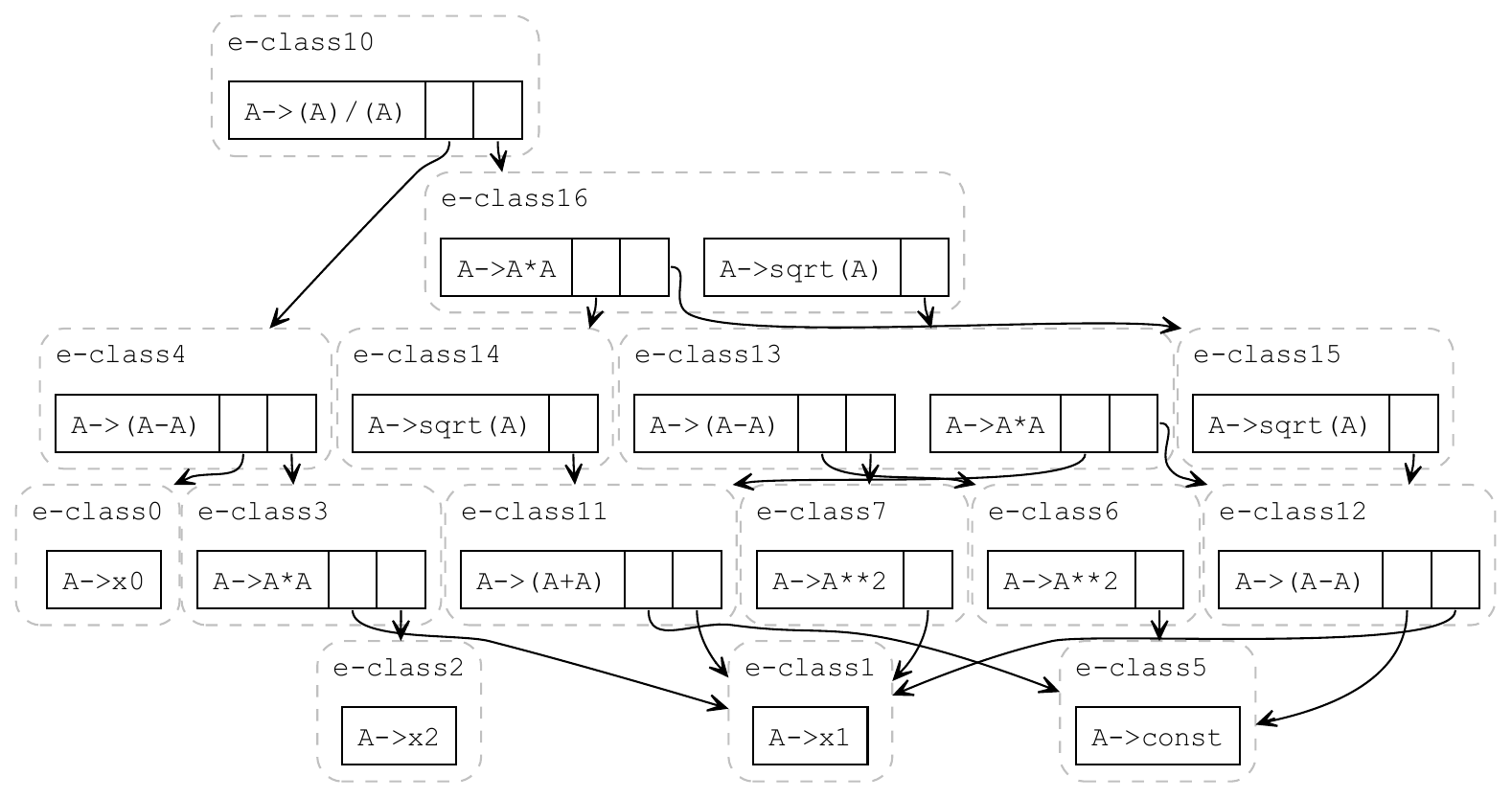}
    \caption{
    \textbf{(Left)} Initialized e-graph for the expression $({x_0-x_1x_2})/{\sqrt{c_1^2-x_1^2}}$, where the equation ID is I.15.3x in the Feynman dataset.
    \textbf{(Right)} Saturated e-graph after applying the rewrite rule \(\sqrt{ab}\leadsto \sqrt{a}\sqrt{b}\) and $a^2-b^2=(a+b)(a-b)$.
    }
    \label{fig:rewrite-I.15.3x}
\end{figure}

\textbf{E-graph for equation ID I.30.3 in the Feynman dataset}. Figure~\ref{fig:rewrite-I.30.3}  illustrates the application of the rewrite rule  $a^2/b^2 \leadsto (a/b)^2$
to the input expression $x_0\frac{\sin^2(x_1x_2/2)}{\sin^2(x_2/2)}$.

\begin{figure}[H]
    \centering
    \includegraphics[width=0.4\linewidth]{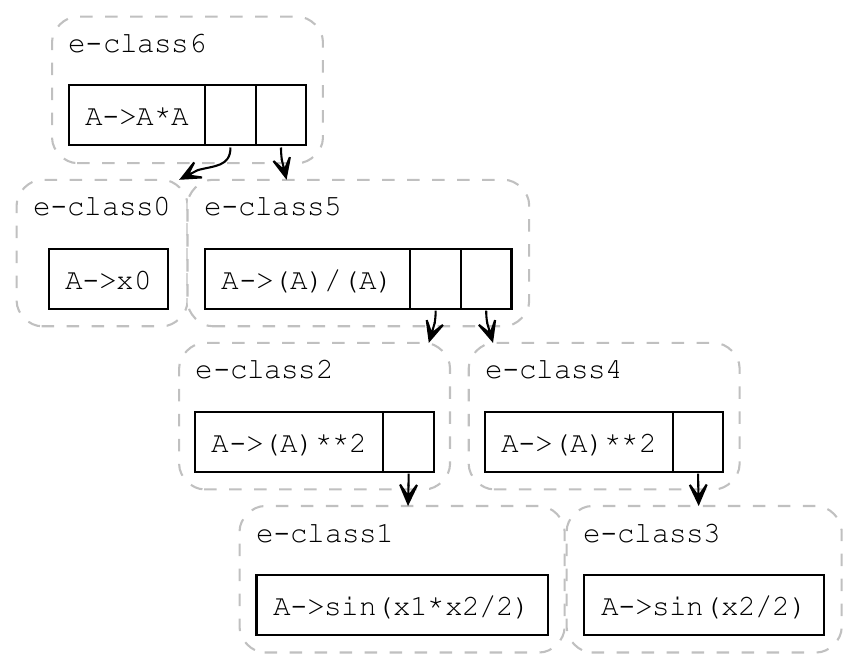}
    \includegraphics[width=0.59\linewidth]{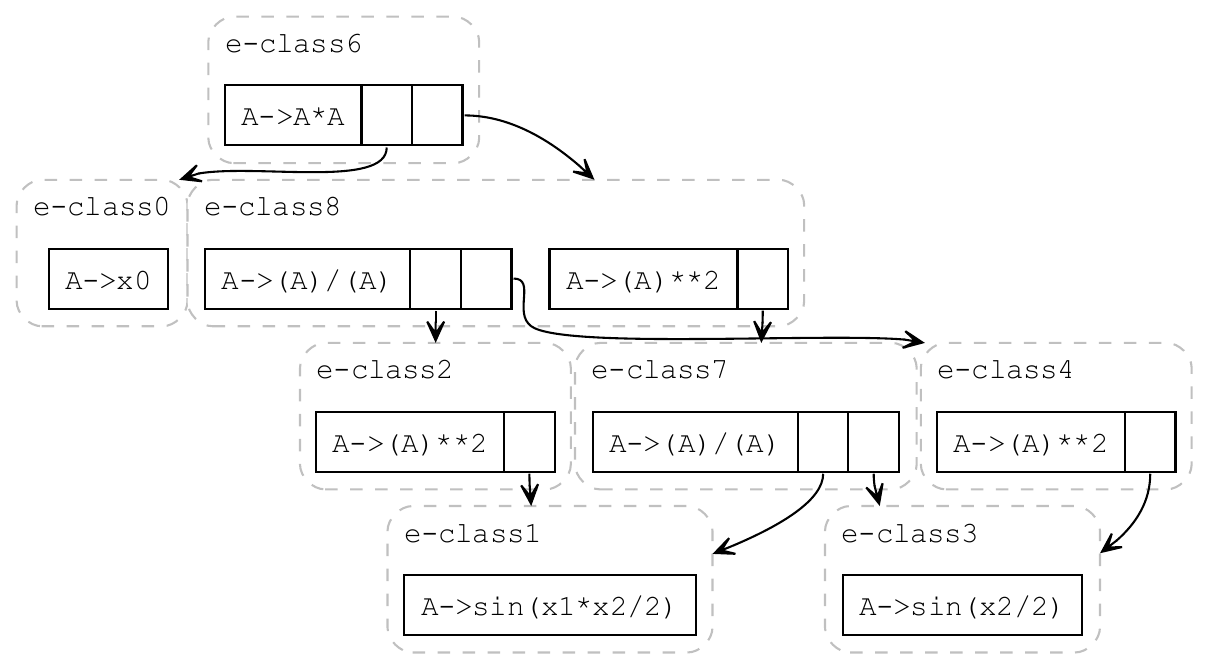}
    \caption{
    \textbf{(Left)} Initialized e-graph for the expression $x_0\frac{\sin^2(x_1x_2/2)}{\sin^2(x_2/2)}$, where the equation ID is I.30.3 in the Feynman dataset.
    \textbf{(Right)} Saturated e-graph after applying the rewrite rule \(\log(a/b)\leadsto \log a - \log b\).
    }
    \label{fig:rewrite-I.30.3}
\end{figure}

\textbf{E-graph for equation ID I.44.4 in the Feynman dataset.}  
Figure~\ref{fig:rewrite-I.44.4} illustrates the application of the rewrite rule  $\log(a/b)\;\leadsto\;\log a - \log b$
to the input expression \(c_1 x_0 x_1 \log(x_2/x_3)\).

\begin{figure}[H]
    \centering
    \includegraphics[width=0.4\linewidth]{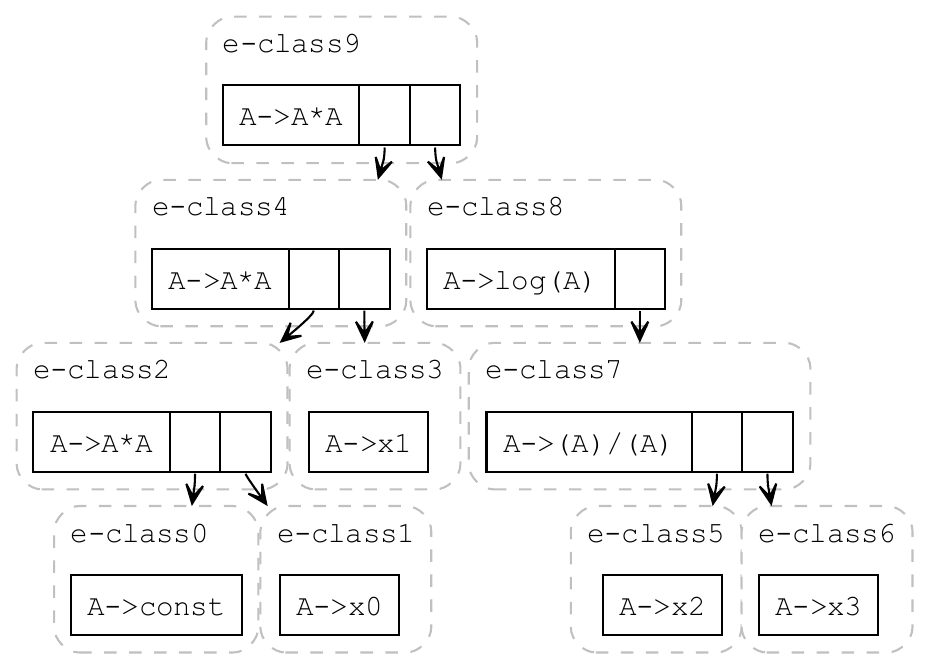}
    \includegraphics[width=0.59\linewidth]{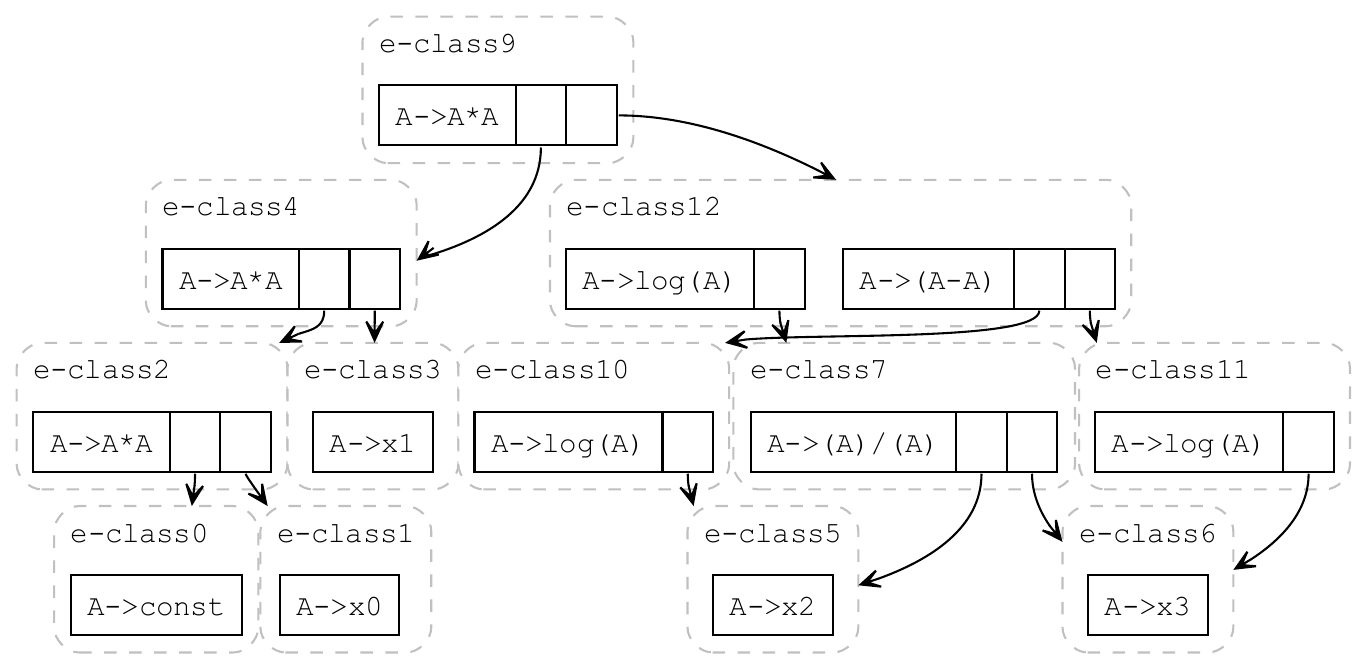}
    \caption{
    \textbf{(Left)} Initialized e-graph for the expression \(c_1 x_0 x_1 \log(x_2/x_3)\), where the equation ID is I.44.4 in the Feynman dataset.
    \textbf{(Right)} Saturated e-graph after applying the rewrite rule \(\log(a/b)\leadsto \log a - \log b\).
    }
    \label{fig:rewrite-I.44.4}
\end{figure}

 \textbf{E-graph for equation ID I.50.26 in the Feynman dataset.}  
Figure~\ref{fig:rewrite-I.50.26} illustrates the application of the rewrite rule  $ab+ac\leadsto a(b+c)$
to the input expression $x_0(\cos(x_1x_2)+x_3 \cos^2(x_1x_2))$.

\begin{figure}[H]
    \centering
    \includegraphics[width=0.45\linewidth]{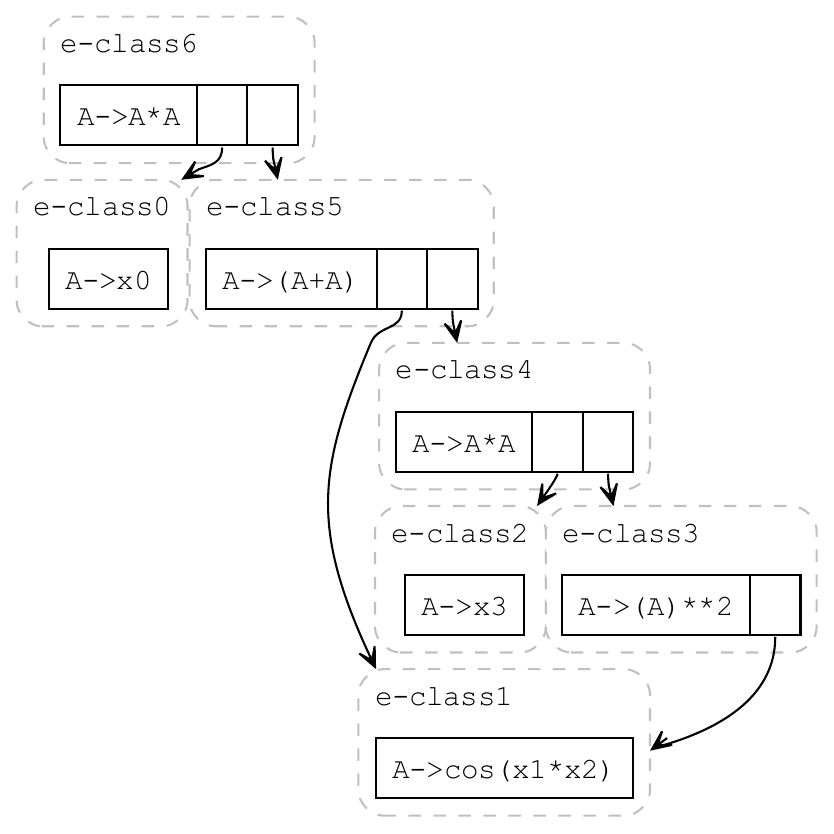}
    \includegraphics[width=0.54\linewidth]{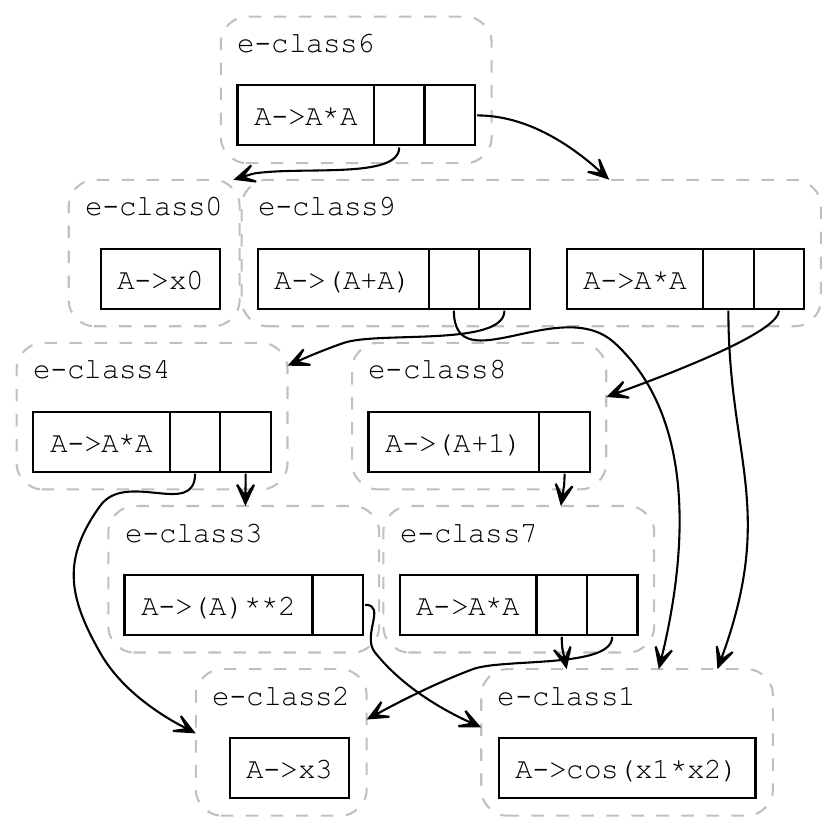}
    \caption{
    \textbf{(Left)} Initialized e-graph for the expression $x_0(\cos(x_1x_2)+x_3 \cos^2(x_1x_2))$, where the equation ID is I.50.26 in the Feynman dataset.
    \textbf{(Right)} Saturated e-graph after applying the rewrite rule \(a + ba^2 \leadsto a  (ba + 1)\).
    }
    \label{fig:rewrite-I.50.26}
\end{figure}

\textbf{E-graph for equation ID II.6.15b in the Feynman dataset.} Figure~\ref{fig:rewrite-II.6.15b} illustrates the application of the rewrite rule $\cos(a)\sin(a)\leadsto \frac{\sin(2a)}{2}$
to the input expression $\frac{x_0}{\exp(x_1x_2/x_3) + \exp(-x_1x_2/x_3)}$.

\begin{figure}[H]
    \centering
    \includegraphics[width=0.45\linewidth]{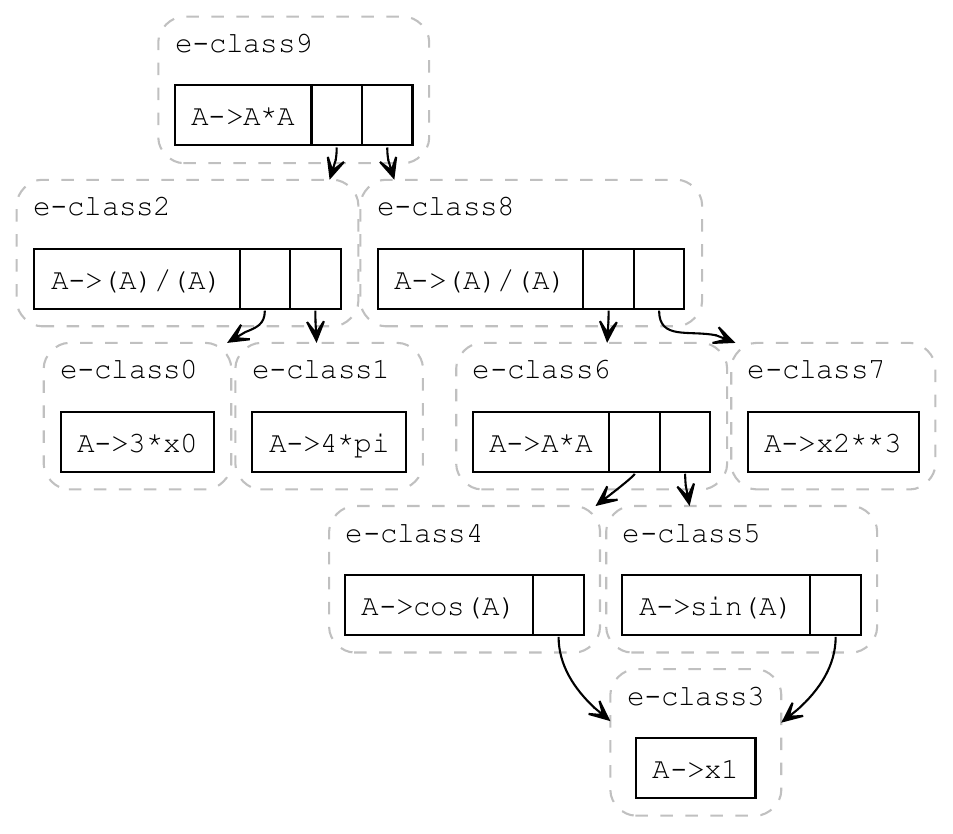}
    \includegraphics[width=0.54\linewidth]{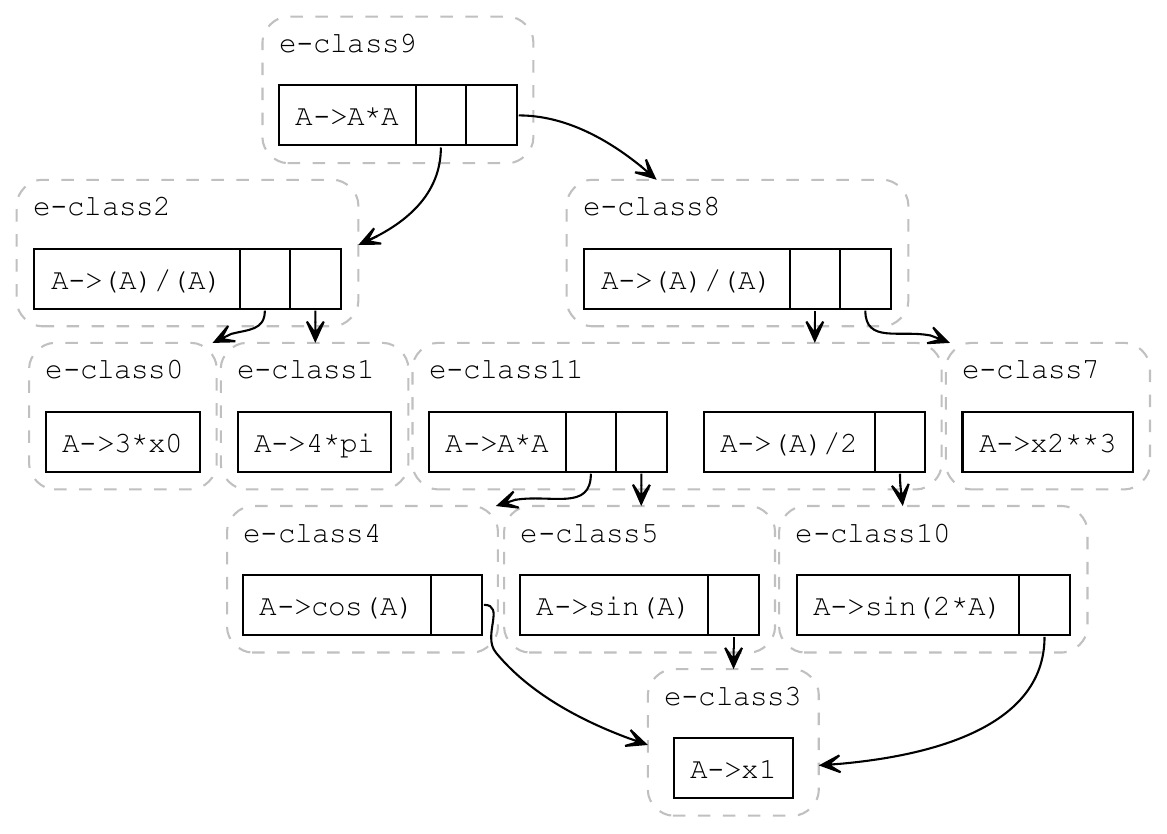}
    \caption{
    \textbf{(Left)} Initialized e-graph for the expression $\frac{3x_0}{4\pi}\frac{\cos x_1\sin x_1}{x_2^3}$, where the equation ID is II.6.15b in the Feynman dataset.
    \textbf{(Right)} Saturated e-graph after applying the rewrite rule $\cos(a)\sin(a)\leadsto \frac{\sin(2a)}{2}$.
    }
    \label{fig:rewrite-II.6.15b}
\end{figure}

\textbf{E-graph for equation ID II.35.18 in the Feynman dataset.} Figure~\ref{fig:rewrite-II.35.18} illustrates the application of the rewrite rule  $(\exp(a)+\exp(-a))/2\leadsto \cosh{a}$
to the input expression $\frac{x_0}{\exp(x_1x_2/x_3) + \exp(-x_1x_2/x_3)}$.

\begin{figure}[H]
    \centering
    \includegraphics[width=0.45\linewidth]{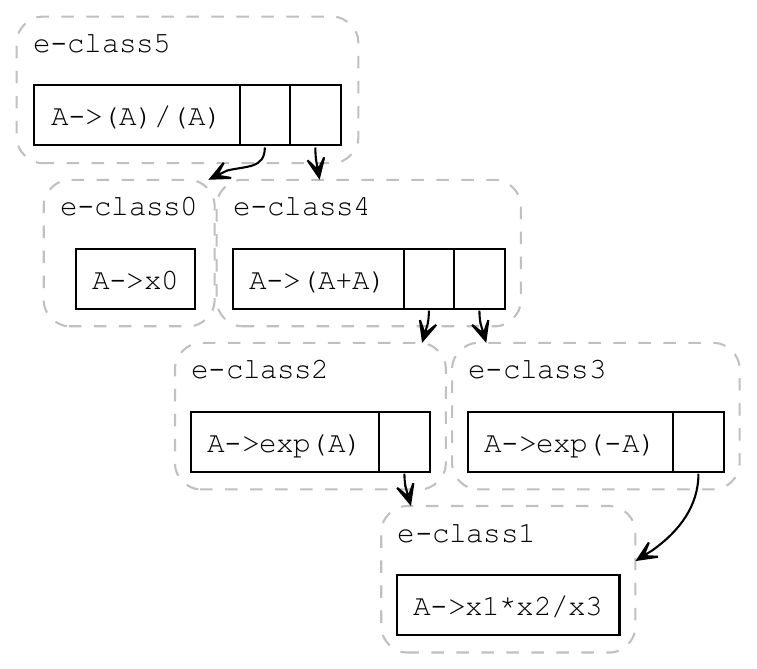}
    \includegraphics[width=0.54\linewidth]{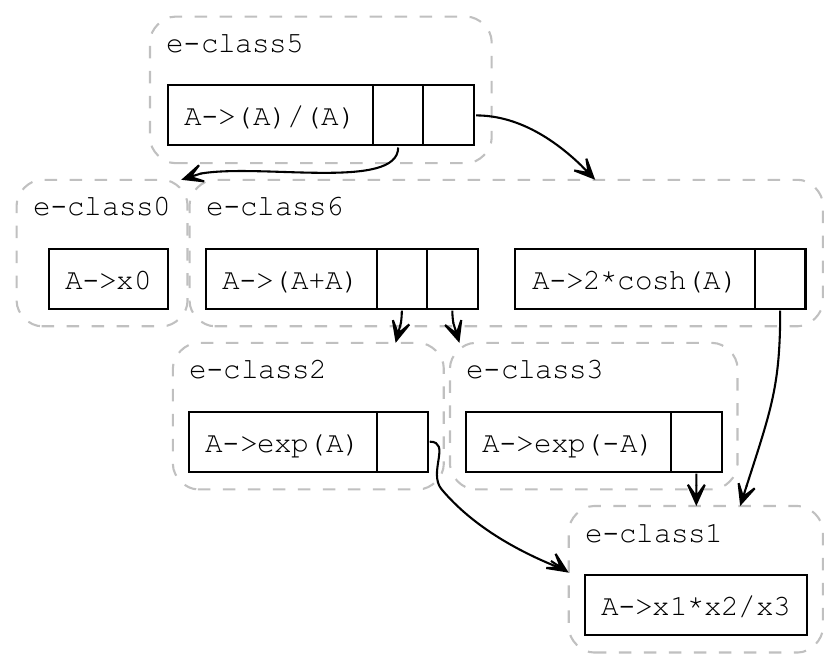}
    \caption{
    \textbf{(Left)} Initialized e-graph for the expression $\frac{x_0}{\exp(x_1x_2/x_3) + \exp(-x_1x_2/x_3)}$, where the equation ID is II.35.18 in the Feynman dataset.
    \textbf{(Right)} Saturated e-graph after applying the rewrite rule \((\exp(a)+\exp(-a))\leadsto 2\cosh{a}\).
    }
    \label{fig:rewrite-II.35.18}
\end{figure}

\textbf{E-graph for equation ID II.35.21 in the Feynman dataset.} Figure~\ref{fig:rewrite-II.35.21} illustrates the application of the rewrite rule  $\tanh a \leadsto \frac{\exp({2a})-1}{\exp({2a})+1}$
to the input expression $x_0x_1 \tanh(x_1x_2/x_3)$.

\begin{figure}[H]
    \centering
    \includegraphics[width=0.45\linewidth]{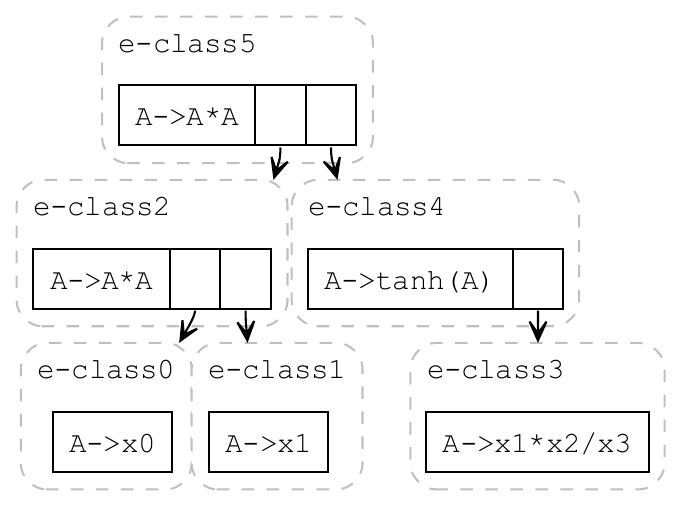}
    \includegraphics[width=0.54\linewidth]{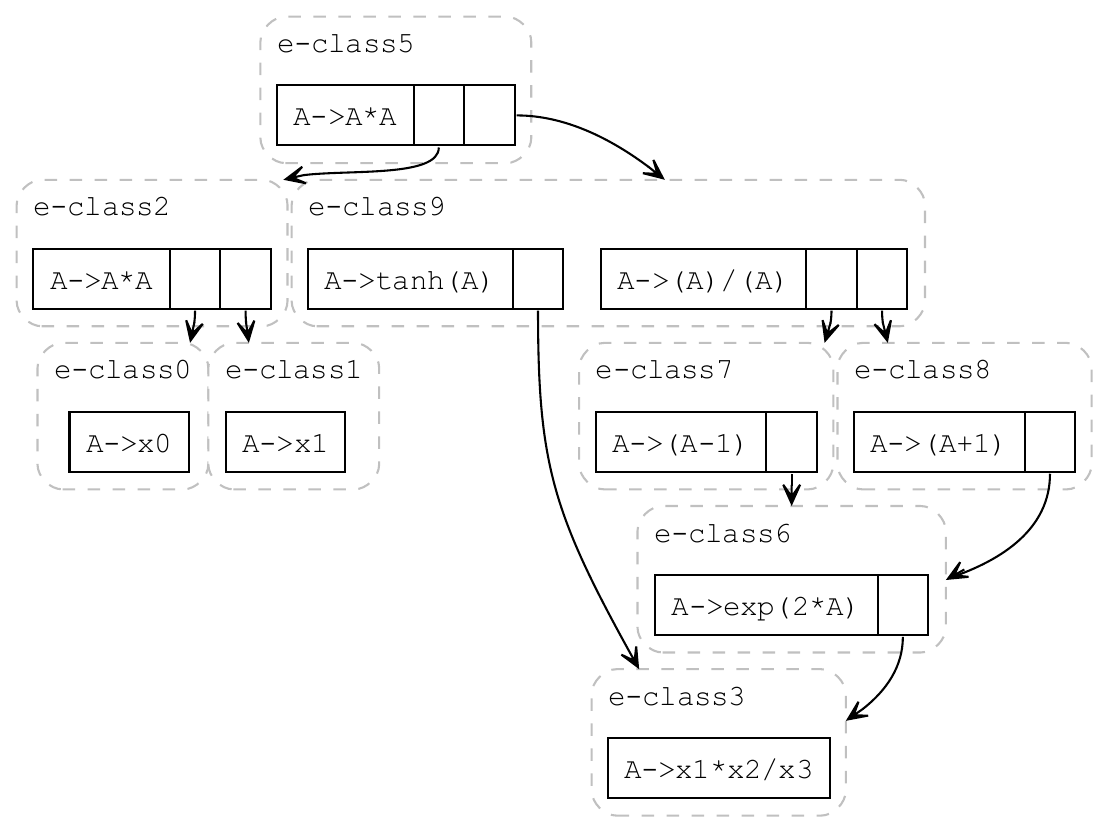}
    \caption{
    \textbf{(Left)} Initialized e-graph for the expression $x_0x_1 \tanh(x_1x_2/x_3)$, where the equation ID is II.35.21 in the Feynman dataset.
    \textbf{(Right)} Saturated e-graph after applying the rewrite rule $\tanh a \leadsto \frac{\exp({2a})-1}{\exp({2a})+1}$.
    }
    \label{fig:rewrite-II.35.21}
\end{figure}

  % & $\frac{x_0}{\exp(x_1x_2/x_3) + \exp(-x_1x_2/x_3)}$ & $ \frac{x_0}{2 \cosh\left(x_1x_2/x_3\right)}$ 

% \subsection{\nj{Impact of Rewrite Rules}}
% \nj{} %We provide an empirical analysis of this phenomenon in Figure~\ref{fig:rule-size-impact-egraph} in the appendix.

% \begin{figure}[!ht]
%     \centering
%     \includegraphics[width=0.495\linewidth]{exps/mem-use/seq_add_vars5.pdf}
%     \includegraphics[width=0.495\linewidth]{exps/mem-use/seq_add_vars5.pdf}
%     \caption{\nj{Impact of Rule size on querying the e-graph data structure. When more rules are applied, the e-graph size (memory usage) grows monotonically, and the time to draw a batch of sequences grows monotonically.}}
%     \label{fig:rule-size-impact-egraph}
% \end{figure}

\end{document}